\documentclass[a4paper,12pt,centertags,reqno,psamsfonts]{amsart}

\usepackage[headings]{fullpage}
\usepackage{setspace}
\usepackage{kerkis}
\usepackage{letltxmacro}
\usepackage[inline]{enumitem}
\usepackage{pifont}
\usepackage{array}
\usepackage{fancybox}
\usepackage{hyperref}
\usepackage{ifthen}
\usepackage{graphicx}
\usepackage{subfigure}
\usepackage{amsmath}
\usepackage{amsthm}
\usepackage{amssymb}				
\usepackage{mathtools}
\usepackage{mathrsfs}
\usepackage{stmaryrd}
\usepackage{yhmath}
\usepackage{accents}
\usepackage{xfrac}
\usepackage[all]{xy}
\usepackage[round,comma,authoryear,sort&compress]{natbib}


\bibpunct{(}{)}{,}{a}{}{,}


\setlist[itemize]{leftmargin=*,topsep=1ex,itemsep=0ex}
\setlist[enumerate]{leftmargin=*,topsep=1ex,itemsep=0ex}

\allowdisplaybreaks[4]

\makeatletter
\let\oldr@@t\r@@t
\def\r@@t#1#2{%
\setbox0=\hbox{$\oldr@@t#1{#2\,}$}\dimen0=\ht0
\advance\dimen0-0.2\ht0
\setbox2=\hbox{\vrule height\ht0 depth -\dimen0}%
{\box0\lower0.4pt\box2}}
\LetLtxMacro{\oldsqrt}{\sqrt}
\renewcommand*{\sqrt}[2][\ ]{\oldsqrt[#1]{#2}}
\makeatother

\theoremstyle{plain}
\newtheorem{theorem}{Theorem}
\newtheorem{lemma}[theorem]{Lemma}
\newtheorem{proposition}[theorem]{Proposition}

\theoremstyle{definition}

\theoremstyle{remark}

\numberwithin{theorem}{section}
\numberwithin{equation}{section}
\numberwithin{figure}{section}
\numberwithin{table}{section}

\newcommand{\N}{\mathbb{N}}

\newcommand{\R}{\mathbb{R}}


\newcommand{\func}[3]{#1:#2\rightarrow #3}

\newcommand{\C}[2]{
\ifthenelse{#1=0 \and #2=0}{\textsf{\upshape C}}
{\ifthenelse{#2=0}{\textsf{\upshape C}^{#1}}
{\textsf{\upshape C}^{#1,#2}}}
}

\newcommand{\e}{\mathrm{e}}

\renewcommand{\d}{\mathrm{d}}


\newcommand{\SigAlg}[1]{\mathscr{#1}}

\newcommand{\Filt}[1]{\mathfrak{#1}}
\newcommand{\StopTimes}{\mathfrak{S}}

\newcommand{\E}{\textsf{\upshape E}}

\renewcommand{\P}{\textsf{\upshape P}}

\newcommand{\ind}[1]{\mathbf{1}_{#1}}

\renewcommand{\>}{\rangle}

\renewcommand{\L}{\mathcal{L}}

\newcommand{\scale}{\mathfrak{s}}
\newcommand{\speed}{\mathfrak{m}}

\newcommand{\Resolvent}[1]{\mathscr{R}_{#1}}

\begin{document}
\title{Short Selling with Margin Risk and Recall Risk}

\author{Kristoffer Glover}
\author{Hardy Hulley}

\address{Kristoffer Glover\\
Finance Discipline Group\\
University of Technology Sydney\\
P.O. Box 123\\
Broadway, NSW 2007\\
Australia}
\email{kristoffer.glover@uts.edu.au}

\address{Hardy Hulley\\
Finance Discipline Group\\
University of Technology Sydney\\
P.O. Box 123\\
Broadway, NSW 2007\\
Australia}
\email{hardy.hulley@uts.edu.au}

\subjclass[2010]{Primary: 91G80; Secondary: 60G40, 60J60, 35R35}

\keywords{Short selling; Margin risk; Recall risk; Optimal stopping; Free-boundary problem; Local time-space formula}

\date{\today}

\begin{abstract}
Short sales are regarded as negative purchases in textbook asset pricing theory. In reality, however, the symmetry between purchases and short sales is broken by a variety of costs and risks peculiar to the latter. We formulate an optimal stopping model in which the decision to cover a short position is affected by two short sale-specific frictions---margin risk and recall risk. Margin risk refers to the fact that short sales are collateralised transactions, which means that short sellers may be forced to close out their positions involuntarily if they cannot fund margin calls. Recall risk refers to a peculiarity of the stock lending market, which permits lenders to recall borrowed stock at any time, once again triggering involuntary close-outs. We examine the effect of these frictions on the optimal close-out strategy and quantify the loss of value resulting from each. Our results show that realistic short selling constraints have a dramatic impact on the optimal behaviour of a short seller, and are responsible for a substantial loss of value relative to the first-best situation without them. This has implications for many familiar no-arbitrage identities, which are predicated on the assumption of unfettered short selling.
\end{abstract}

\maketitle

\section{Introduction}
\label{Sec1}
Short sales facilitate negative exposures to financial securities. The salient features of a short sale in the equity market may be summarised as follows.\footnote{See \citet{CHR04}, \citet{D'A02}, and \citet{Ree13} for comprehensive overviews of the short selling process and the equity lending market.} First, a prospective short seller identifies a willing lender of the desired stock. He then borrows the stock, sells it in the market, and posts collateral equal to its market value plus a haircut with the lender.\footnote{The haircut is usually set at 2\% of the market value of the stock.} The collateral is marked to market daily, so that an increase in the stock price prompts a margin call for more collateral, while a decrease entitles the short seller to withdraw some collateral. The lender invests the collateral at the prevailing interest rate, and pays part of the resulting income to the short seller, in the form of a negotiated rebate.\footnote{The difference between the interest rate and the rebate rate represents the lending fee.} In return, the short seller compensates the lender for the dividends forgone during the life of the loan. To complete the transaction, the short seller repurchases the stock, and returns it to the lender. Although this is likely to occur at the short seller's discretion, stock loan contracts generally include a recall provision that permits lenders to force short sellers to liquidate their positions involuntarily.

The previous overview highlights several costs and risks associated with short sales, which have no counterparts when stocks are purchased. First, short sellers may incur search costs, since the process of identifying willing lenders can be expensive. Second, the lending fees associated with stock loans impose a cost on short sellers that can be quite significant. Third, short sellers are exposed to margin risk, due to the possibility that successive increases in the price of borrowed stock may generate margin calls that eventually exhaust their collateral budgets. Finally, short sellers face recall risk, since lenders may recall borrowed stock at any time. Collectively, these frictions are referred to as short selling constraints.

Short selling constraints can have dramatic implications for textbook no-arbitrage relationships, since arbitrage portfolios invariably contain long and short positions. For example, \citet{LT03} highlight the role of short selling constraints in preventing the correction of mispriced equity carve-outs, while \citet{MPS02} identify their effect when a company trades at a discount relative to its subsidiaries. Short selling constraints have also been implicated in mispriced equity index futures by \citet{FD99}, and in put-call parity violations by \citet{ORW04}.

We focus on margin risk and recall risk. \citet{SV97} first revealed the importance of margin risk as a limit to arbitrage, by demonstrating that a hedge fund may be forced to close out a theoretically profitable arbitrage trade in a loss-making position, if interim losses trigger sufficient investor withdrawals to compromise the fund's ability to meet its margin calls. \citet{LL04} further reinforced the intuition that margin risk detracts from the attractiveness of arbitrage, by showing that a collateral-constrained risk-averse trader will underinvest in arbitrage opportunities, due to the possibility that the margin calls arising from interim losses may exhaust his collateral budget before he realises a profit.

As for the impact of recall risk, the empirical study by \citet{CR17} found that recalls induce profit sharing between short sellers and stock lenders, by forcing the former to liquidate otherwise profitable short positions prematurely. They estimated that informed short sellers lose around 20\% of their first-best profits due to recalls, indicating that recall risk is an economically significant short selling constraint. \citet{ERR18} provided further implicit evidence on the impact of recall risk, by demonstrating that deviations between spot and put-call parity-implied stock prices increase with option maturity. This suggests that short selling constraints intensify as the time-horizon of trading strategies involving short positions increases, consistent with the nature of recall risk.

To understand the impact of margin risk and recall risk on short sales, we formulate an optimal stopping problem where a trader must decide when to close out a short position initiated at time zero. To capture the effect of margin risk, we assume that the trader has a limited collateral budget to fund margin calls, while recall risk is modelled by assuming that forced close out due to stock recall occurs at some independent random time. The collateral constraint introduces a knock-out barrier above the initial stock price, at a level determined by the collateral budget. Immediate close out occurs when the stock price reaches this barrier, since the available collateral will have been depleted by then. The existence of a knock-out barrier whose location is determined by the initial stock price is a novel feature of the optimal stopping problem studied here. The possibility of early recall further complicates the analysis, by inserting an inhomogeneous term in the associated free-boundary problem. Together, these features produce an optimal stopping problem that is interesting in its own right.

Our economic analysis confirms that margin risk and recall risk have a dramatic impact on the value of a short sale and the optimal close-out strategy, effectively driving  a wedge between the solutions to the constrained and unconstrained short selling problems. The size of the effect is very sensitive to the drift of the stock price, which creates a practical challenge due to the difficulty of estimating that parameter accurately. These frictions are responsible for other surprising effects, as well. For example, unlike the case with unconstrained short sales, the value of the constrained short position is non-monotonic with respect to the volatility of the stock price. It is also an increasing function of the discount rate when the stock price has positive drift, whereas the value of the unconstrained short position is always monotonically decreasing with respect to the discount rate. The underlying intuitions behind these counterintuitive effects are quite subtle, and are discussed at length in our economic analysis.

This paper falls within an established literature on optimal stopping models for the related optimal margin lending problem \citep[see e.g.][]{XZ07,EW08,ZZ09,LWJ10,LX10,DX11,GG13,SYZ16,WW13,CS14,XY19,YXL19}. In the case of a margin loan (also known as a stock loan), an investor borrows money from a bank to purchase  stock, which the bank then holds as collateral. A fall in the stock price decreases the value of the collateral, triggering a margin call. If the investor fails to meet the margin call, the bank sells the stock to cover the loan. The investor's problem is to choose the optimal time to sell the stock and settle the loan. The similarity between short sales and margin loans rests on the fact that both are collateralised transactions, in which margin risk plays a prominent role. However, in contrast to short sales, margin loans are not exposed to recall risk and other short selling constraints.

Compared to the margin lending problem, optimal short selling has received little attention, with the recent studies by \citet{CT15} and \citet{Chu16} appearing to be the only exceptions. Those articles consider the situation of a trader who must choose when to close out a short position, subject to lending fees, recall risk and liquidity risk. Our model differs from the models presented in those studies, in terms of which constraints are considered. Specifically, while they ignore margin risk (which is an important short sale-specific constraint), we are not concerned with liquidity risk (which is not a short sale-specific friction). As a result, our analysis offers different economic insights.

The remainder of the article is structured as follows. Section~\ref{Sec2} models the price of a non-dividend-paying stock as a geometric Brownian motion and recalls several facts about such processes. The problem of when to close out a short position in the stock, in the presence of margin risk and recall risk, is then formulated as an optimal stopping problem. Section~\ref{Sec3} derives and solves a free-boundary problem for a simpler optimal stopping problem, and uses the equation for the free boundary to identify seven qualitatively different parameter regimes. The analysis in this section revolves around identifying the roots of a complicated non-linear function, whose behaviour is highly dependent on the parameter regime (see Figure~\ref{figSec3:H}). Section~\ref{Sec4} presents candidate value functions and optimal stopping strategies for the simpler problem, under each parameter regime, and verifies that they do in fact solve it. In Section~\ref{Sec5}, those solutions are used to construct the solution to the original optimal short selling problem. Finally, Section~\ref{Sec6} offers some economic insights, by examining the dependence of the optimal close-out policy and the value function on the model parameters.
\section{Short Selling as an Optimal Stopping Problem}
\label{Sec2}
\subsection{Modelling the price of a non-dividend-paying stock}
Let $B=(B_t)_{t\geq 0}$ be a standard Brownian motion on a complete filtered probability space $(\Omega,\SigAlg{F},\Filt{F},\P)$, whose filtration $\Filt{F}=(\SigAlg{F}_t)_{t\geq 0}$ satisfies the \emph{usual conditions} of right continuity and completion by the null-sets of $\SigAlg{F}$. We consider a non-dividend-paying stock, whose price $X=(X_t)_{t\geq 0}$ is modelled as the unique strong solution to the stochastic differential equation
\begin{equation}
\label{eqSec2:SDE}
\d X_t=\mu X_t\,\d t+\sigma X_t\,\d B_t,
\end{equation}
for all $t\geq 0$, where $X_0\in(0,\infty)$, $\mu\in\R$ and $\sigma\in(0,\infty)$. That is to say, we model the stock price as a geometric Brownian motion. This process is characterised by its scale function and speed measure
\begin{equation}
\label{eqSec2:ScaleSpeed}
\scale(x)\coloneqq
\begin{cases}
-\frac{x^{-2\nu}}{2\nu}&\text{if $\nu\neq 0$};\\
\ln x&\text{if $\nu=0$}	
\end{cases}
\qquad\text{and}\qquad
\speed(\d x)\coloneqq\frac{2}{\sigma^2}x^{2\nu-1}\,\d x,
\end{equation}
for all $x\in(0,\infty)$, where $\nu\coloneqq\sfrac{\mu}{\sigma^2}-\sfrac{1}{2}$ \citep[see][Appendix~1.20]{BS02}.

Given $\alpha>0$, let $\phi_\alpha,\psi_\alpha\in\C{2}{0}(0,\infty)$ denote the unique (up to multiplication by a positive scalar) decreasing and increasing solutions, respectively, to the second-order ordinary differential equation
\begin{equation}
\label{eqSec2:ODE}
\L_Xu(x)\coloneqq\frac{1}{2}\sigma^2x^2u''(x)+\mu xu'(x)=\alpha u(x),
\end{equation}
for all $x\in(0,\infty)$. These solutions are given explicitly by
\begin{equation}
\label{eqSec2:PhiPsi}
\phi_\alpha(x)\coloneqq	x^{-\sqrt{\nu^2+\sfrac{2\alpha}{\sigma^2}}-\nu}
\qquad\text{and}\qquad
\psi_\alpha(x)\coloneqq	x^{\sqrt{\nu^2+\sfrac{2\alpha}{\sigma^2}}-\nu},
\end{equation}
for all $x\in(0,\infty)$. Differentiating these expressions gives
\begin{equation}
\label{eqSec2:PhiPrimePsiPrime}
\phi_\alpha'(x)=-\biggl(\sqrt{\nu^2+\frac{2\alpha}{\sigma^2}}+\nu\biggr)\frac{\phi_\alpha(x)}{x}
\qquad\text{and}\qquad
\psi_\alpha'(x)=\biggl(\sqrt{\nu^2+\frac{2\alpha}{\sigma^2}}-\nu\biggr)\frac{\psi_\alpha(x)}{x},
\end{equation}
for all $x\in(0,\infty)$. It follows from \eqref{eqSec2:ScaleSpeed}, \eqref{eqSec2:PhiPsi} and \eqref{eqSec2:PhiPrimePsiPrime} that the Wronskian
\begin{equation}
\label{eqSec2:Wronskian}
w_\alpha\coloneqq\frac{\phi_\alpha(x)\psi_\alpha'(x)-\phi_\alpha'(x)\psi_\alpha(x)}{\scale'(x)}	
=2\sqrt{\nu^2+\frac{2\alpha}{\sigma^2}}
\end{equation}
is independent of $x\in(0,\infty)$ \citep[see][Appendix~1.20]{BS02}.

Let $\StopTimes$ denote the family of all $\Filt{F}$-stopping times. These include the first-exit times
\begin{equation*}
\label{eqSec2:FirstExitTimes}
\hat{\tau}_z\coloneqq\inf\{t\geq 0\,|\,X_t\geq z\}
\qquad\text{and}\qquad
\check{\tau}_z\coloneqq\inf\{t\geq 0\,|\,X_t\leq z\},
\end{equation*}
for all $z\in(0,\infty)$, whose Laplace transforms are given by
\begin{subequations}
\label{eqSec2:LaplaceTransforms}
\begin{align}
\E_x\bigl(\e^{-\alpha\hat{\tau}_z}\bigr)&=
\begin{cases}
\frac{\psi_\alpha(x)}{\psi_\alpha(z)}&\qquad\text{if $x\leq z$}\\
1&\qquad\text{if $x\geq z$},	
\end{cases}
\label{eqSec2:LaplaceTransforms_1}\\
\intertext{and}
\E_x\bigl(\e^{-\alpha\check{\tau}_z}\bigr)&=
\begin{cases}
1&\qquad\text{if $x\leq z$}\\
\frac{\phi_\alpha(x)}{\phi_\alpha(z)}&\qquad\text{if $x\geq z$},	
\end{cases}
\label{eqSec2:LaplaceTransforms_2}
\end{align}	
\end{subequations}
for all $\alpha>0$ and all $x,z\in(0,\infty)$ \citep[see][Section~II.10]{BS02}. As usual, $\E_x(\,\cdot\,)$ denotes the expected value operator with respect to the probability measure $\P_x$, under which $X_0=x$.

\subsection{The optimal time to close out a short sale}
We consider a trader with collateral budget $c\geq 0$, who sells the stock short at time zero. Forced close-out of his position due to collateral exhaustion occurs when the price of the stock first exceeds its initial price by more than his budgeted collateral. We denote this stopping time by
\begin{equation}
\label{eqSec2:CollatTime}
\zeta\coloneqq\inf\{t\geq 0\,|\,X_t=X_0+c\}.
\end{equation}
In other words, at time $\zeta$ the stock price will be high enough to ensure that the trader will have spent his entire collateral budget on margin calls. 

Involuntary close-out of the short sale due to stock recall occurs at an exponentially distributed random time $\rho\sim\text{Exp}(\lambda)$, where $\lambda>0$ is the recall intensity. The recall time is assumed to be $\SigAlg{F}$-measurable and independent of $\SigAlg{F}_\infty\coloneqq\bigvee_{t\geq 0}\SigAlg{F}_t$, which is to say that the recall event is independent of the stock price. Hence, the distribution of the recall time is given by
\begin{equation*}
\P_x(\rho\in\d t)=\lambda\e^{-\lambda t}\,\d t,
\end{equation*}
for all $t\geq 0$ and all $x\in(0,\infty)$. We model recall as an independent random event to capture the intuition that the lender may recall the stock at any time.

To simplify the analysis (and because it corresponds to a case of particular interest), we restrict our attention to the situation when the rebate rate on the stock loan is zero. That is to say, we assume that the stock lending fee is identical to the prevailing interest rate $r\geq 0$.\footnote{In the equity lending market, stocks with high lending fees are referred to as special stocks. The lending fees of some so-called extremely special stocks are so high that their rebate rates are negligible or even negative \citep[see][]{D'A02}.} This implies that the trader does not earn a return on the balance of his margin account, which initially contains the proceeds from borrowing the stock and selling it in the market. Hence, if the trader initiates the short sale at time zero and closes it out at time $t\geq 0$, the present value of his profit is $\e^{-rt}(X_0-X_t)$. His objective is to repurchase the stock at a time that maximises the expected value of his profit, in present value terms, subject to the frictions outlined above. This gives rise to the following optimal stopping problem:
\begin{equation}
\label{eqSec2:ShortSellProb}
V(x)\coloneqq\sup_{\tau\in\StopTimes}\E_x\Bigl(\e^{-r(\tau\wedge\zeta\wedge\rho)}\bigl(x-X_{\tau\wedge\zeta\wedge\rho}\bigr)\Bigr),
\end{equation}
for all $x\in(0,\infty)$. Figure~\ref{figSec2:Paths} illustrates Problem~\eqref{eqSec2:ShortSellProb}, by giving an example of a stock price path where the trader closes out the short position optimally, as well as examples of paths where the short sale is closed out involuntarily due to collateral exhaustion and recall.

\begin{figure}
\centering
\includegraphics[scale=0.6]{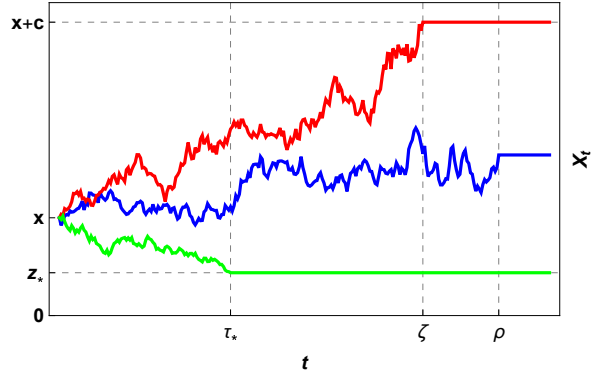}
\caption{Three possible outcomes for the short selling problem \eqref{eqSec2:ShortSellProb}. The lower (green) path illustrates the case when the short sale is closed out optimally at time $\tau_*$, which is the first time the stock price reaches a putative optimal close-out threshold $z_*$. (At this stage we do not know that the optimal close-out strategy is a threshold strategy, but we shall demonstrate that this is indeed the case.) The upper (red) path illustrates the case when the short sale is closed out due to collateral exhaustion at time $\zeta$, which is the first time the stock price exceeds its initial value $x$ by the collateral budget $c$. The middle (blue) path illustrates the case when the short sale is closed out due to stock recall at a random time $\rho$.}
\label{figSec2:Paths}
\end{figure}
\section{A Simpler Optimal Stopping Problem}
\label{Sec3}
\subsection{Fixing the collateral exhaustion level}
Optimal stopping problems for diffusions are often solved by reformulating them as free-boundary problems. This approach is not directly applicable to Problem~\eqref{eqSec2:ShortSellProb}, however, since the collateral exhaustion time \eqref{eqSec2:CollatTime} depends on the initial stock price. To overcome this difficulty, we formulate a simpler problem, where forced liquidation due to collateral exhaustion occurs as soon as the stock price exceeds some fixed level $\kappa>0$ by the collateral budget, rather than when it first exceeds its initial price by the collateral budget. Formally, we modify Problem~\eqref{eqSec2:ShortSellProb} as follows:
\begin{subequations}
\label{eqSec3:OptStopProb}
\begin{equation}
\label{eqSec3:OptStopProb_a}
\widetilde{V}(x)\coloneqq\sup_{\tau\in\StopTimes}J(x,\tau),
\end{equation}
for all $x\in(0,\infty)$, where
\begin{equation}
\label{eqSec3:OptStopProb_b}
\begin{split}
&J(x,\tau)\coloneqq\E_x\Bigl(\e^{-r(\tau\wedge\hat{\tau}_{\kappa+c}\wedge\rho)}\bigl(\kappa-X_{\tau\wedge\hat{\tau}_{\kappa+c}\wedge\rho}\bigr)\Bigr)\\
&=\E_x\Bigl(\E_x\Bigl(\ind{\{\rho\leq\tau\wedge\hat{\tau}_{\kappa+c}\}}\e^{-r\rho}(\kappa-X_\rho)\,\bigl|\,\SigAlg{F}_{\tau\wedge\hat{\tau}_{\kappa+c}}\Bigr)\\
&\hspace{4cm}+\e^{-r(\tau\wedge\hat{\tau}_{\kappa+c})}\bigl(\kappa-X_{\tau\wedge\hat{\tau}_{\kappa+c}}\bigr)\P_x\bigl(\rho>\tau\wedge\hat{\tau}_{\kappa+c}\,|\,\SigAlg{F}_{\tau\wedge\hat{\tau}_{\kappa+c}}\bigr)\Bigr)\\
&=\E_x\biggl(\int_0^{\tau\wedge\hat{\tau}_{\kappa+c}}\e^{-rt}(\kappa-X_t)\lambda\e^{-\lambda t}\,\d t+\e^{-r(\tau\wedge\hat{\tau}_{\kappa+c})}\bigl(\kappa-X_{\tau\wedge\hat{\tau}_{\kappa+c}}\bigr)\int_{\tau\wedge\hat{\tau}_{\kappa+c}}^\infty\lambda\e^{-\lambda t}\,\d t\biggr)\\
&=\E_x\biggl(\int_0^{\tau\wedge\hat{\tau}_{\kappa+c}}\lambda\e^{-(\lambda+r)t}(\kappa-X_t)\,\d t+\e^{-(\lambda+r)(\tau\wedge\hat{\tau}_{\kappa+c})}\bigl(\kappa-X_{\tau\wedge\hat{\tau}_{\kappa+c}}\bigr)\biggr),
\end{split}
\end{equation}
\end{subequations}
for all $\tau\in\StopTimes$. Note that $V(x)=\widetilde{V}(x)|_{\kappa=x}$, for all $x\in(0,\infty)$, since $\zeta=\hat{\tau}_{x+c}$ $\P_x$-a.s. Hence, a solution to Problem~\eqref{eqSec2:ShortSellProb} can be constructed from a solution to Problem~\eqref{eqSec3:OptStopProb}.

\subsection{An associated free-boundary problem}
The time-homogeneity of Problem~\eqref{eqSec3:OptStopProb} leads us to conjecture that the optimal stopping time for that problem is a threshold time. That is to say, it is the first time $\check{\tau}_{z_*}$ the stock price crosses some threshold $z_*\in(0,\kappa+c)$ from above. If that is true, we may be able to solve Problem~\eqref{eqSec3:OptStopProb} by solving the following free-boundary problem for $z_*\in(0,\kappa+c)$ and $\widehat{V}\in\C{0}{0}(0,\infty)\cap\C{2}{0}(z_*,\kappa+c)$ \citep[see][Chapter~III]{PS06}:
\begin{subequations}
\label{eqSec3:FreeBndProb}
\begin{align}
\label{eqSec3:FreeBndProb_a}
\L_X\widehat{V}(x)-(\lambda+r)\widehat{V}(x)+\lambda(\kappa-x)&=0,\\
\intertext{for all $x\in(z_*,\kappa+c)$;}
\label{eqSec3:FreeBndProb_b}
\widehat{V}(x)&=\kappa-x,\\
\intertext{for all $x\in(0,z_*]\cup[\kappa+c,\infty)$; and}
\label{eqSec3:FreeBndProb_c}
\widehat{V}'(z_*+)&=-1.
\end{align}
\end{subequations}
In other words, $z_*\in(0,\kappa+c)$ and $\widehat{V}\in\C{0}{0}(0,\infty)\cap\C{2}{0}(z_*,\kappa+c)$ must satisfy the ordinary differential equation \eqref{eqSec3:FreeBndProb_a} in the continuation region $(z_*,\kappa+c)$, the instantaneous stopping condition \eqref{eqSec3:FreeBndProb_b} in the stopping region $(0,z_*]\cup[\kappa+c,\infty)$, and the smooth pasting condition \eqref{eqSec3:FreeBndProb_c} at the free boundary $z_*$.

To analyse Problem~\eqref{eqSec3:FreeBndProb}, we first assume that it admits a solution, consisting of a boundary $z_*\in(0,\kappa+c)$ and a function $\widehat{V}\in\C{0}{0}(0,\infty)\cap\C{2}{0}(z_*,\kappa+c)$. The latter may be expressed in terms of the general solution to the homogeneous equation \eqref{eqSec2:ODE},  with $\alpha=\lambda+r$, and a particular solution $\widehat{v}\in\C{2}{0}(0,\infty)$ to the inhomogeneous equation \eqref{eqSec3:FreeBndProb_a}, as follows:
\begin{equation}
\label{eqSec3:ODEGenSol}
\widehat{V}(x)=A\phi_{\lambda+r}(x)+B\psi_{\lambda+r}(x)+\widehat{v}(x),
\end{equation}
for all $x\in(z_*,\kappa+c)$, where $A,B\in\R$ are constants. Letting $x\downarrow z_*$ and $x\uparrow\kappa+c$ in \eqref{eqSec3:ODEGenSol}, and substituting the resulting expressions for $\widehat{V}(z_*)$ and $\widehat{V}(\kappa+c)$ into \eqref{eqSec3:FreeBndProb_b}, produces the equations
\begin{align*}
A\phi_{\lambda+r}(z_*)+B\psi_{\lambda+r}(z_*)+\widehat{v}(z_*)&=\kappa-z_*\\
\intertext{and}
A\phi_{\lambda+r}(\kappa+c)+B\psi_{\lambda+r}(\kappa+c)+\widehat{v}(\kappa+c)&=-c,
\end{align*}
which can be solved to give
\begin{align*}
A&=\frac{\bigl(\kappa-z_*-\widehat{v}(z_*)\bigr)\psi_{\lambda+r}(\kappa+c)+\bigl(c+\widehat{v}(\kappa+c)\bigr)\psi_{\lambda+r}(z_*)}{\phi_{\lambda+r}(z_*)\psi_{\lambda+r}(\kappa+c)-\phi_{\lambda+r}(\kappa+c)\psi_{\lambda+r}(z_*)}\\
\intertext{and}
B&=\frac{\bigl(\kappa-z_*-\widehat{v}(z_*)\bigr)\phi_{\lambda+r}(\kappa+c)+\bigl(c+\widehat{v}(\kappa+c)\bigr)\phi_{\lambda+r}(z_*)}{\phi_{\lambda+r}(\kappa+c)\psi_{\lambda+r}(z_*)-\phi_{\lambda+r}(z_*)\psi_{\lambda+r}(\kappa+c)}.
\end{align*}
When these expressions are inserted into \eqref{eqSec3:ODEGenSol}, we obtain
\begin{equation}
\label{eqSec3:FreeBndProbSol}
\begin{split}
\widehat{V}(x)&=\bigl(\kappa-z_*-\widehat{v}(z_*)\bigr)\frac{\phi_{\lambda+r}(\kappa+c)\psi_{\lambda+r}(x)-\phi_{\lambda+r}(x)\psi_{\lambda+r}(\kappa+c)}{\phi_{\lambda+r}(\kappa+c)\psi_{\lambda+r}(z_*)-\phi_{\lambda+r}(z_*)\psi_{\lambda+r}(\kappa+c)}\\
&\hspace{1cm}+\bigl(c+\widehat{v}(\kappa+c)\bigr)\frac{\phi_{\lambda+r}(z_*)\psi_{\lambda+r}(x)-\phi_{\lambda+r}(x)\psi_{\lambda+r}(z_*)}{\phi_{\lambda+r}(\kappa+c)\psi_{\lambda+r}(z_*)-\phi_{\lambda+r}(z_*)\psi_{\lambda+r}(\kappa+c)}+\widehat{v}(x),
\end{split}
\end{equation}
for all $x\in(z_*,\kappa+c)$. Substituting the derivative of this expression into \eqref{eqSec3:FreeBndProb_c} produces the following implicit characterisation of the free boundary $z_*\in(0,\kappa+c)$:
\begin{equation}
\label{eqSec3:FreeBndEqn}
\begin{split}
&\bigl(\kappa-z_*-\widehat{v}(z_*)\bigr)\frac{\phi_{\lambda+r}(\kappa+c)\psi_{\lambda+r}'(z_*)-\phi_{\lambda+r}'(z_*)\psi_{\lambda+r}(\kappa+c)}{\phi_{\lambda+r}(\kappa+c)\psi_{\lambda+r}(z_*)-\phi_{\lambda+r}(z_*)\psi_{\lambda+r}(\kappa+c)}\\
&\hspace{0.5cm}+\bigl(c+\widehat{v}(\kappa+c)\bigr)\frac{\phi_{\lambda+r}(z_*)\psi_{\lambda+r}'(z_*)-\phi_{\lambda+r}'(z_*)\psi_{\lambda+r}(z_*)}{\phi_{\lambda+r}(\kappa+c)\psi_{\lambda+r}(z_*)-\phi_{\lambda+r}(z_*)\psi_{\lambda+r}(\kappa+c)}+\widehat{v}\,'(z_*)=-1.
\end{split}
\end{equation}
Finally, given $\alpha>0$, recall that the resolvent operator $\Resolvent{\alpha}$ acts on suitable functions $\func{g}{(0,\infty)}{\R}$ as follows:
\begin{equation*}
(\Resolvent{\alpha}g)(x)\coloneqq\E_x\biggl(\int_0^\infty\e^{-\alpha t}g(X_t)\,\d t\biggr),
\end{equation*}
for all $x\in(0,\infty)$ \citep[see][Section~I.7]{BS02}. Moreover, $\Resolvent{\alpha}g$ satisfies the ordinary differential equation
\begin{equation*}
\L_X(\Resolvent{\alpha}g)(x)-\alpha(\Resolvent{\alpha}g)(x)+g(x)=0,
\end{equation*}
for all $x\in(0,\infty)$. By comparing this equation with \eqref{eqSec3:FreeBndProb_a}, we see that the particular solution to the latter equation is given by
\begin{equation}
\label{eqSec3:ODEPartSol}
\begin{split}
\widehat{v}(x)&\coloneqq\bigl(\Resolvent{\lambda+r}\lambda(\kappa-\,\cdot\,)\bigr)(x)
=\E_x\biggl(\int_0^\infty\e^{-(\lambda+r)t}\lambda(\kappa-X_t)\,\d t\biggr)\\
&=\frac{\lambda\kappa}{\lambda+r}-\int_0^\infty\lambda\e^{-(\lambda+r)t}\E_x(X_t)\,\d t
=\frac{\lambda\kappa}{\lambda+r}-\int_0^\infty\lambda\e^{-(\lambda+r)t}x\e^{\mu t}\,\d t\\
&=\frac{\lambda\kappa}{\lambda+r}-\frac{\lambda x}{\lambda+r-\mu},
\end{split}
\end{equation}
for all $x\in(0,\infty)$.

\subsection{Identifying the parameter regimes}
The previous analysis demonstrates that the existence of a solution $z_*\in(0,\kappa+c)$ to \eqref{eqSec3:FreeBndEqn} is necessary for Problem \eqref{eqSec3:FreeBndProb} to admit a solution. Conversely, the function $\widehat{V}\in\C{0}{0}(0,\infty)\cap\C{2}{0}(z_*,\kappa+c)$ defined by \eqref{eqSec3:FreeBndProbSol} satisfies the ordinary differential equation \eqref{eqSec3:FreeBndProb_a}, as well as the boundary conditions \eqref{eqSec3:FreeBndProb_b} and \eqref{eqSec3:FreeBndProb_c}, if $z_*\in(0,\kappa+c)$ satisfies \eqref{eqSec3:FreeBndEqn}. Hence, Problem~\eqref{eqSec3:FreeBndProb} admits a unique solution if and only if the free-boundary equation \eqref{eqSec3:FreeBndEqn} admits a unique solution $z_*\in(0,\kappa+c)$. We shall now investigate the solutions to this equation in detail.

To begin with, consider the function $H\in\C{2}{0}(0,\infty)$, given by
\begin{equation*}
H(z)\coloneqq\frac{F'(z)G(z)-F(z)G'(z)}{w_{\lambda+r}\scale'(z)}+F(\kappa+c),
\end{equation*}
for all $z\in(0,\infty)$, where $w_{\lambda+r}$ is the Wronskian \eqref{eqSec2:Wronskian} and $F,G\in\C{2}{0}(0,\infty)$ are given by
\begin{align*}
F(z)&\coloneqq\widehat{v}(z)-(\kappa-z)=\frac{r-\mu}{\lambda+r-\mu}z-\frac{r\kappa}{\lambda+r}\\
\intertext{and}
G(z)&\coloneqq\phi_{\lambda+r}(\kappa+c)\psi_{\lambda+r}(z)-\phi_{\lambda+r}(z)\psi_{\lambda+r}(\kappa+c),
\end{align*}
for all $z\in(0,\infty)$. It follows that $z_*\in(0,\kappa+c)$ satisfies \eqref{eqSec3:FreeBndEqn} if and only if $H(z_*)=0$, by virtue of the identity
\begin{equation*}
w_{\lambda+r}\scale'(z)=\phi_{\lambda+r}(z)\psi_{\lambda+r}'(z)-\phi_{\lambda+r}'(z)\psi_{\lambda+r}(z),	
\end{equation*}
for all $z\in(0,\infty)$. Now, observe that
\begin{equation*}
\begin{split}
H'(z)&=\frac{F''(z)G(z)-F(z)G''(z)+\frac{2\mu z}{\sigma^2z^2}\bigl(F'(z)G(z)-F(z)G'(z)\bigr)}{w_{\lambda+r}\scale'(z)}\\
&=\frac{F''(z)G(z)-\frac{2}{\sigma^2z^2}F(z)\bigl((\lambda+r)G(z)-\mu zG'(z)\bigr)+\frac{2\mu z}{\sigma^2z^2}\bigl(F'(z)G(z)-F(z)G'(z)\bigr)}{w_{\lambda+r}\scale'(z)}\\
&=\frac{2}{\sigma^2z^2}\frac{\L_XF(z)-(\lambda+r)F(z)}{w_{\lambda+r}\scale'(z)}G(z)\\
&=\frac{2}{\sigma^2z^2}\frac{\L_X\widehat{v}(z)-(\lambda+r)\widehat{v}(z)+\mu z+(\lambda+r)(\kappa-z)}{w_{\lambda+r}\scale'(z)}G(z)\\
&=\frac{2}{\sigma^2z^2}\frac{r\kappa+(\mu-r)z}{w_{\lambda+r}\scale'(z)}G(z),
\end{split}
\end{equation*}
for all $z\in(0,\infty)$, where the first equality follows from the identity $\scale''(z)=-\frac{2\mu}{\sigma^2z}\scale'(z)$ \citep[see][Section~II.9]{BS02}, the second equality follows from the fact that $G$ satisfies \eqref{eqSec2:ODE} with $\alpha=\lambda+r$, and the final equality follows since $\widehat{v}$ satisfies \eqref{eqSec3:FreeBndProb_a}. Note that $G(\kappa+c)=0$ implies that $H'(\kappa+c)=0$. Also note that $G(\kappa+c)=0$ and $G'(\kappa+c)=w_{\lambda+r}\scale'(\kappa+c)$ imply that $H(\kappa+c)=0$. That is to say, $H$ has a root at $\kappa+c$, which is also a stationary point. However, $\kappa+c$ is not a solution to \eqref{eqSec3:FreeBndEqn}, since the left-hand side of that equation is not well-defined if $z_*=\kappa+c$. Next, since $\phi_{\lambda+r}$ and $\psi_{\lambda+r}$ are strictly decreasing and increasing, respectively, it follows that $G$ is strictly increasing. Hence, $G(z)<0$, for all $z\in(0,\kappa+c)$; $G(\kappa+c)=0$; and $G(z)>0$, for all $z\in(\kappa+c,\infty)$. Finally, note that $\scale'(z)>0$, for all $z\in(0,\infty)$.

The next proposition uses the previous observations to give a complete description of the roots of $H$. In so doing, it identifies seven parameter regimes that will prove useful for organising the solution to Problem~\eqref{eqSec3:OptStopProb} in Section~\ref{Sec4}.

\begin{proposition}
The roots of $H$ depend on the model parameters as follows.
\label{propSec3:SolCond}
\begin{enumerate}[leftmargin=*,topsep=1ex,itemsep=1ex,itemindent=0ex,label=(\alph*)]
\begin{subequations}
\item
If the parameters satisfy
\begin{equation}
\label{eqpropSec3:Cond1}
\mu<\frac{rc}{\kappa+c}\qquad\text{and}\qquad r>0,
\end{equation}
then the root at $\kappa+c$ is a local maximum point of $H$, and the function has a second root at some point
\begin{equation*}
z_0\in\biggl(0,\frac{r\kappa}{r-\mu}\biggr)\subset(0,\kappa+c).
\end{equation*}
\item
If the parameters satisfy
\begin{equation}
\label{eqpropSec3:Cond2}
\mu=\frac{rc}{\kappa+c}\qquad\text{and}\qquad r>0,
\end{equation}
then the root at $\kappa+c$ is unique, and is also an inflection point of $H$.
\item
If the parameters satisfy
\begin{equation}
\label{eqpropSec3:Cond3}
\frac{rc}{\kappa+c}<\mu<r\qquad\text{and}\qquad r>0,
\end{equation}
then the root at $\kappa+c$ is a local minimum point of $H$, and the function has a second root at some point
\begin{equation*}
z_0\in\biggl(\frac{r\kappa}{r-\mu},\infty\biggr)\subset(\kappa+c,\infty).
\end{equation*}
\item
If the parameters satisfy
\begin{equation}
\label{eqpropSec3:Cond4}
\mu\geq r\qquad\text{and}\qquad r>0,
\end{equation}
then the root at $\kappa+c$ is unique, and is also the global minimum point of $H$.
\item
If the parameters satisfy
\begin{equation}
\label{eqpropSec3:Cond5}
\mu<0\qquad\text{and}\qquad r=0,
\end{equation}
then the root at $\kappa+c$ is unique, and is also the global maximum point of $H$.
\item
If the parameters satisfy
\begin{equation}
\label{eqpropSec3:Cond6}
\mu=0\qquad\text{and}\qquad r=0,
\end{equation}
then $H$ is identically zero.
\item
If the parameters satisfy
\begin{equation}
\label{eqpropSec3:Cond7}
\mu>0\qquad\text{and}\qquad r=0,
\end{equation}
then the root at $\kappa+c$ is unique, and is also the global minimum point of $H$.
\end{subequations}
\end{enumerate}
\end{proposition}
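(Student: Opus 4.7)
The plan is to read the global shape of $H$ off the factorisation
$H'(z) = \frac{2\, p(z)\, G(z)}{\sigma^{2} z^{2}\, w_{\lambda+r}\, \scale'(z)}$,
$p(z) \coloneqq r\kappa + (\mu-r)z$,
already derived in the discussion preceding the proposition. Because the prefactor is strictly positive, the sign of $H'$ is exactly that of the product $p \cdot G$, and the sign of $G$ is already known to flip from negative to positive at $\kappa+c$. Consequently the work reduces to classifying the sign of the linear function $p$ on $(0,\infty)$ in each regime, and then locating any further zeros of $H$ by combining the resulting monotonicity skeleton with the boundary behaviour of $H$ at $0^{+}$ and $\infty$ via the intermediate value theorem.

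Classifying $p$ is elementary. When $r > 0$ and $\mu < r$, $p$ is strictly decreasing with unique zero $z_{1} = r\kappa/(r-\mu) > 0$, and a direct algebraic comparison shows that $z_{1} < \kappa+c$, $z_{1} = \kappa+c$, $z_{1} > \kappa+c$ correspond respectively to $\mu$ less than, equal to, or greater than $rc/(\kappa+c)$; this carves out regimes (a)--(c). When $\mu \geq r > 0$, $p(z) \geq r\kappa > 0$ on $(0,\infty)$, giving (d). When $r = 0$, $p(z) = \mu z$ whose sign is that of $\mu$, giving (e)--(g) (with (f) trivial because $F \equiv 0$ forces $H \equiv 0$). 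Multiplying by $G$ in each regime then immediately identifies $\kappa+c$ as a local maximum in (a), a local minimum in (c), a global minimum in (d) and (g), a global maximum in (e), and---in case (b)---an inflection point.

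Case (b) is the subtle one, since $p$ and $G$ share the zero $\kappa+c$ so $H'$ has a double zero there rather than a simple one. A short Taylor expansion using $G'(\kappa+c) = w_{\lambda+r}\, \scale'(\kappa+c) > 0$ gives $p(z) G(z) = (\mu-r)\, G'(\kappa+c)\, (z-\kappa-c)^{2} + O((z-\kappa-c)^{3})$; the leading coefficient is negative (as $\mu - r < 0$ in this regime), so $H' \leq 0$ on a neighbourhood of $\kappa+c$, and a third-order computation then yields $H''(\kappa+c) = 0$ with $H''$ changing sign through $\kappa+c$, certifying a genuine inflection point rather than a concealed extremum.

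Finally, to locate the second root $z_{0}$ and verify the stated containment intervals, I would compute the boundary limits of $H$. Substituting the explicit power forms $\phi_{\lambda+r}(z) = z^{-b}$ and $\psi_{\lambda+r}(z) = z^{a}$ into the expression for $H$, the dominant terms are of order $z^{-a}$ and $z^{1-a}$ near $z = 0$ and of order $z^{b+1}$ near $z = \infty$; the well-posedness assumption $\lambda+r > \mu$ (needed already in \eqref{eqSec3:ODEPartSol}) guarantees $a > 1$, so both endpoint limits are infinite in magnitude. The signs of the leading coefficients are read off from $\beta = (r-\mu)/(\lambda+r-\mu)$ and $\gamma = r\kappa/(\lambda+r)$, yielding $H(0^{+}) = +\infty$ whenever $r > 0$ and identifying $H(\infty)$ case by case. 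Pairing these limits with the monotonicity pattern above, the intermediate value theorem produces the second root $z_{0}$ in (a) and (c) inside the claimed interval and rules out further roots in (b), (d), (e), (g). The most delicate step I anticipate is the third-order bookkeeping required in case (b).
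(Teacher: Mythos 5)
The proposal is correct and takes essentially the same route as the paper: both read the monotonicity skeleton of $H$ off the factorisation $H'(z)=\tfrac{2\,p(z)G(z)}{\sigma^{2}z^{2}w_{\lambda+r}\scale'(z)}$ with $p(z)=r\kappa+(\mu-r)z$, classify the sign of $p$ in each parameter regime, and then locate the extra root in cases (a) and (c) by pinning down $H(0+)$ and $H(\infty-)$ from the explicit power forms of $\phi_{\lambda+r}$, $\psi_{\lambda+r}$, $\scale'$ and invoking the intermediate value theorem. Two small remarks: in case (b) your Taylor expansion of $pG$ and ``third-order bookkeeping'' for $H''$ are harmless but redundant --- once the sign analysis gives $H'<0$ strictly on both sides of $\kappa+c$ with $H'(\kappa+c)=0$, $H$ is strictly decreasing and $\kappa+c$ cannot be an extremum, which already certifies the inflection point (this is exactly how the paper handles it); and the blanket assertion that $H(0^+)=+\infty$ ``whenever $r>0$'' is a slight overreach (the paper's computation of that limit relies on $r-\mu>0$, which holds in cases (a)--(c) but need not in (d)), though this does no damage since you only invoke it in case (a).
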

\begin{proof}
(a):~Suppose Condition~\eqref{eqpropSec3:Cond1} holds, in which case
\begin{equation*}
0<r-\frac{rc}{\kappa+c}=\frac{r\kappa}{\kappa+c}<r-\mu,
\end{equation*}
so that the point $\bar{z}\coloneqq\sfrac{r\kappa}{(r-\mu)}$ satisfies $\bar{z}\in(0,\kappa+c)$. Observe that $r\kappa+(\mu-r)z>0$, for all $z\in(0,\bar{z})$; $r\kappa+(\mu-r)\bar{z}=0$; and $r\kappa+(\mu-r)z<0$, for all $z\in(\bar{z},\infty)$. Combined with the properties of $G$ and $\scale'$ described earlier, this implies that $H'(z)<0$, for all $z\in(0,\bar{z})$; $H'(\bar{z})=0$; $H'(z)>0$, for all $z\in(\bar{z},\kappa+c)$; $H'(\kappa+c)=0$; and $H'(z)<0$, for all $z\in(\kappa+c,\infty)$. In particular, $H$ has stationary points at $\bar{z}$ and $\kappa+c$, with former being a local minimum and the latter being a local maximum. Since $H(\kappa+c)=0$ and $H$ is strictly increasing over $(\bar{z},\kappa+c)$, it follows that $H(\bar{z})<0$. Furthermore, $H(\kappa+c)=0$ rules out the existence of roots in the intervals $(\bar{z},\kappa+c)$ and $(\kappa+c,\infty)$, since $H$ is strictly increasing over the former interval and strictly decreasing over the latter interval. Finally, we use \eqref{eqSec2:PhiPrimePsiPrime} to write
\begin{equation*}
\begin{split}
H(z)&=\frac{r-\mu}{\lambda+r-\mu}\frac{\phi_{\lambda+r}(\kappa+c)\psi_{\lambda+r}(z)-\phi_{\lambda+r}(z)\psi_{\lambda+r}(\kappa+c)}{w_{\lambda+r}\scale'(z)}\\
&\hspace{0.5cm}+\biggl(\frac{r\kappa}{\lambda+r}-\frac{r-\mu}{\lambda+r-\mu}z\biggr)\frac{\phi_{\lambda+r}(\kappa+c)\psi_{\lambda+r}'(z)-\phi_{\lambda+r}'(z)\psi_{\lambda+r}(\kappa+c)}{w_{\lambda+r}\scale'(z)}+F(\kappa+c)\\
&=\frac{r-\mu}{\lambda+r-\mu}\frac{\phi_{\lambda+r}(\kappa+c)}{w_{\lambda+r}}\Biggl(\frac{z}{\sqrt{\nu^2+\frac{2(\lambda+r)}{\sigma^2}}-\nu}+\frac{r\kappa}{\lambda+r}\frac{\lambda+r-\mu}{r-\mu}-z\Biggr)\frac{\psi_{\lambda+r}'(z)}{\scale'(z)}\\
&\hspace{0.5cm}+\frac{r-\mu}{\lambda+r-\mu}\frac{\psi_{\lambda+r}(\kappa+c)}{w_{\lambda+r}}\Biggl(\frac{z}{\sqrt{\nu^2+\frac{2(\lambda+r)}{\sigma^2}}+\nu}-\frac{r\kappa}{\lambda+r}\frac{\lambda+r-\mu}{r-\mu}+z\Biggr)\frac{\phi_{\lambda+r}'(z)}{\scale'(z)}\\
&\hspace{0.51cm}+F(\kappa+c),
\end{split}
\end{equation*}
for all $z\in(0,\infty)$. Now, since
\begin{align*}
\lim_{z\downarrow 0}\frac{\phi_{\lambda+r}'(z)}{\scale'(z)}
&=\lim_{z\downarrow 0}\Biggl(-\sqrt{\nu^2+\frac{2(\lambda+r)}{\sigma^2}}-\nu\Biggr)z^{-\sqrt{\nu^2+\sfrac{2(\lambda+r)}{\sigma^2}}+\nu}=-\infty\\
\intertext{and}
\lim_{z\downarrow 0}\frac{\psi_{\lambda+r}'(z)}{\scale'(z)}
&=\lim_{z\downarrow 0}\Biggl(\sqrt{\nu^2+\frac{2(\lambda+r)}{\sigma^2}}-\nu\Biggr)z^{\sqrt{\nu^2+\sfrac{2(\lambda+r)}{\sigma^2}}+\nu}=0,
\end{align*}
by virtue of \eqref{eqSec2:ScaleSpeed} and \eqref{eqSec2:PhiPsi}, it follows that $H(0+)=\infty$.\footnote{Note that the values for the two limits above can also be inferred from the fact that the origin is a natural boundary for $X$ \citep[see][Section~II.10]{BS02}.} This, in turn, ensures the existence of a unique root $z_0\in(0,\bar{z})$, since $H$ is strictly decreasing over $(0,\bar{z})$, with $H(\bar{z})<0$.
\vspace{2mm}\newline\noindent
(b):~Suppose Condition~\eqref{eqpropSec3:Cond2} holds, in which case
$r-\mu=\sfrac{r\kappa}{(\kappa+c)}>0$. Consequently, $r\kappa+(\mu-r)z>0$, for all $z\in(0,\kappa+c)$; $r\kappa+(\mu-r)(\kappa+c)=0$; and $r\kappa+(\mu-r)z<0$, for all $z\in(\kappa+c,\infty)$. Combined with the properties of $G$ and $\scale'$ described earlier, this implies that $H'(z)<0$, for all $z\in(0,\kappa+c)$; $H'(\kappa+c)=0$; and $H'(z)<0$, for all $z\in(\kappa+c,\infty)$. That is to say, $H$ is strictly decreasing over $(0,\kappa+c)$ and $(\kappa+c,\infty)$, with an inflection point at $\kappa+c$. Since $H(\kappa+c)=0$, it follows that $\kappa+c$ is the only root of $H$.
\vspace{2mm}\newline\noindent
(c):~Suppose Condition~\eqref{eqpropSec3:Cond3} holds, in which case
\begin{equation*}
0<r-\mu<r-\frac{rc}{\kappa+c}=\frac{r\kappa}{\kappa+c},
\end{equation*}
so that the point $\bar{z}\coloneqq\sfrac{r\kappa}{(r-\mu)}$ satisfies $\bar{z}\in(\kappa+c,\infty)$. Observe that $r\kappa+(\mu-r)z>0$, for all $z\in(0,\bar{z})$; $r\kappa+(\mu-r)\bar{z}=0$; and $r\kappa+(\mu-r)z<0$, for all $z\in(\bar{z},\infty)$. Combined with the properties of $G$ and $\scale'$ described earlier, this implies that $H'(z)<0$, for all $z\in(0,\kappa+c)$; $H'(\kappa+c)=0$; $H'(z)>0$, for all $z\in(\kappa+c,\bar{z})$; $H'(\bar{z})=0$; and $H'(z)<0$, for all $z\in(\bar{z},\infty)$. In particular, $H$ has stationary points at $\kappa+c$ and $\bar{z}$, with former being a local minimum and the latter being a local maximum. Since $H(\kappa+c)=0$ and $H$ is strictly increasing over $(\kappa+c,\bar{z})$, it follows that $H(\bar{z})>0$. Furthermore, $H(\kappa+c)=0$ rules out the existence of roots in the intervals $(0,\kappa+c)$ and $(\kappa+c,\bar{z})$, since $H$ is strictly decreasing over the former interval and strictly increasing over the latter interval. Finally, we use \eqref{eqSec2:PhiPrimePsiPrime} to write
\begin{equation*}
\begin{split}
H(z)&=\frac{r-\mu}{\lambda+r-\mu}\frac{\phi_{\lambda+r}(\kappa+c)\psi_{\lambda+r}(z)-\phi_{\lambda+r}(z)\psi_{\lambda+r}(\kappa+c)}{w_{\lambda+r}\scale'(z)}\\
&\hspace{0.5cm}+\biggl(\frac{r\kappa}{\lambda+r}-\frac{r-\mu}{\lambda+r-\mu}z\biggr)\frac{\phi_{\lambda+r}(\kappa+c)\psi_{\lambda+r}'(z)-\phi_{\lambda+r}'(z)\psi_{\lambda+r}(\kappa+c)}{w_{\lambda+r}\scale'(z)}+F(\kappa+c)\\
&=\frac{r-\mu}{\lambda+r-\mu}\frac{\phi_{\lambda+r}(\kappa+c)}{w_{\lambda+r}}\\
&\hspace{2cm}\times\Biggl(1+\Biggl(\sqrt{\nu^2+\frac{2(\lambda+r)}{\sigma^2}}-\nu\Biggr)\biggl(\frac{r\kappa}{\lambda+r}\frac{\lambda+r-\mu}{r-\mu}\frac{1}{z}-1\biggr)\Biggr)\frac{\psi_{\lambda+r}(z)}{\scale'(z)}\\
&\hspace{0.5cm}-\frac{r-\mu}{\lambda+r-\mu}\frac{\psi_{\lambda+r}(\kappa+c)}{w_{\lambda+r}}\\
&\hspace{2cm}\times\Biggl(1-\Biggl(\sqrt{\nu^2+\frac{2(\lambda+r)}{\sigma^2}}+\nu\Biggr)\biggl(\frac{r\kappa}{\lambda+r}\frac{\lambda+r-\mu}{r-\mu}\frac{1}{z}-1\biggr)\Biggr)\frac{\phi_{\lambda+r}(z)}{\scale'(z)}\\
&\hspace{0.5cm}+F(\kappa+c),
\end{split}
\end{equation*}
for all $z\in(0,\infty)$. Note that
\begin{equation*}
(\nu+1)^2-\nu^2=2\nu+1=\frac{2\mu}{\sigma^2}<\frac{2(\lambda+r)}{\sigma^2}
\end{equation*}
since $\mu<r<\lambda+r$, from which it follows that
\begin{equation}
\label{eqpropSec3:ParamInequal}
\sqrt{\nu^2+\frac{2(\lambda+r)}{\sigma^2}}>\nu+1.	
\end{equation}
Consequently,
\begin{equation*}
\lim_{z\uparrow\infty}\Biggl(\sqrt{\nu^2+\frac{2(\lambda+r)}{\sigma^2}}-\nu\Biggr)\biggl(\frac{r\kappa}{\lambda+r}\frac{\lambda+r-\mu}{r-\mu}\frac{1}{z}-1\biggr)<-1.
\end{equation*}
Moreover, \eqref{eqSec2:ScaleSpeed} and \eqref{eqSec2:PhiPsi} give
\begin{align*}
\lim_{z\uparrow\infty}\frac{\phi_{\lambda+r}(z)}{\scale'(z)}
&=\lim_{z\uparrow\infty}z^{-\sqrt{\nu^2+\sfrac{2(\lambda+r)}{\sigma^2}}+\nu+1}=0\\
\intertext{and}
\lim_{z\uparrow\infty}\frac{\psi_{\lambda+r}(z)}{\scale'(z)}
&=\lim_{z\uparrow\infty}z^{\sqrt{\nu^2+\sfrac{2(\lambda+r)}{\sigma^2}}+\nu+1}=\infty,
\end{align*}
where the first limit follows from \eqref{eqpropSec3:ParamInequal}. Putting all of this together gives $H(\infty-)=-\infty$. This, in turn, ensures the existence of a unique root $z_0\in(\bar{z},\infty)$, since $H$ is strictly decreasing over $(\bar{z},\infty)$ with $H(\bar{z})>0$.
\vspace{2mm}\newline\noindent
(d)~Suppose Condition~\eqref{eqpropSec3:Cond4} holds. Then $r\kappa+(\mu-r)z\geq r\kappa>0$, for all $z\in(0,\infty)$. Combined with the properties of $G$ and $\scale'$ described earlier, this implies that $H'(z)<0$, for all $z\in(0,\kappa+c)$; $H'(\kappa+c)=0$; and $H'(z)>0$, for all $z\in(\kappa+c,\infty)$. That is to say, $H$ achieves a unique global minimum at $\kappa+c$. Moreover, since $H(\kappa+c)=0$, it follows that $H(z)>0$, for all $z\in(0,\kappa+c)\cup(\kappa+c,\infty)$. In other words, $H$ has a unique root at $\kappa+c$.
\vspace{2mm}\newline\noindent
(e)~Suppose Condition~\eqref{eqpropSec3:Cond5} holds. Then $r\kappa+(\mu-r)z=\mu z<0$, for all $z\in(0,\infty)$. Combined with the properties of $G$ and $\scale'$ described earlier, this implies that $H'(z)>0$, for all $z\in(0,\kappa+c)$; $H'(\kappa+c)=0$; and $H'(z)<0$, for all $z\in(\kappa+c,\infty)$. That is to say, $H$ achieves a unique global maximum at $\kappa+c$. Since $H(\kappa+c)=0$, it follows that $\kappa+c$ is the only root of $H$.
\vspace{2mm}\newline\noindent
(f):~Suppose Condition~\eqref{eqpropSec3:Cond6} holds. Then $F(z)=0$, for all $z\in(0,\infty)$. Consequently, $H(z)=0$, for all $z\in(0,\infty)$.
\vspace{2mm}\newline\noindent
(g)~Suppose Condition~\eqref{eqpropSec3:Cond7} holds. Then $r\kappa+(\mu-r)z=\mu z>0$, for all $z\in(0,\infty)$. Combined with the properties of $G$ and $\scale'$ described earlier, this implies that $H'(z)<0$, for all $z\in(0,\kappa+c)$; $H'(\kappa+c)=0$; and $H'(z)>0$, for all $z\in(\kappa+c,\infty)$. That is to say, $H$ achieves a unique global minimum at $\kappa+c$. Since $H(\kappa+c)=0$, it follows that $\kappa+c$ is the only root of $H$.
\end{proof}

Figure~\ref{figSec3:H} plots $H$ under the seven parameter regimes identified in Proposition~\ref{propSec3:SolCond}. Figure~\ref{figSec3:H}\subref{figSec3:H_1} shows that $H$ has a local minimum at $\sfrac{\kappa c}{(r-\mu)}\in(0,\kappa+c)$ and a local maximum at $\kappa+c$, if Condition~\eqref{eqpropSec3:Cond1} holds. The latter point is also a root, and there is a second root at some point $z_0\in(0,\sfrac{r\kappa}{(r-\mu)})$. Figure~\ref{figSec3:H}\subref{figSec3:H_2} shows that $H$ has a unique root at $\kappa+c$, which is also an inflection point, if Condition~\eqref{eqpropSec3:Cond2} holds. Figure~\ref{figSec3:H}\subref{figSec3:H_3} illustrates the situation when Condition~\eqref{eqpropSec3:Cond3} holds, in which case $H$ has a local minimum at $\kappa+c$ and a local maximum at $\sfrac{r\kappa}{(r-\mu)}\in(\kappa+c,\infty)$. The former point is also a root, and the function has a second root at some point $z_0\in(\sfrac{r\kappa}{(r-\mu)},\infty)$. Figure~\ref{figSec3:H}\subref{figSec3:H_4} shows that $H$ has a global minimum at $\kappa+c$, if Condition~\eqref{eqpropSec3:Cond4} holds, in which case that point is the only root. In Figure~\ref{figSec3:H}\subref{figSec3:H_5} we see that $H$ has a global maximum at $\kappa+c$, if Condition~\eqref{eqpropSec3:Cond5} holds, and that point is also the only root. Next, Figure~\ref{figSec3:H}\subref{figSec3:H_6} shows that $H$ is identically zero, if Condition~\eqref{eqpropSec3:Cond6} holds. Finally, Figure~\ref{figSec3:H}\subref{figSec3:H_7} illustrates the situation under Condition~\eqref{eqpropSec3:Cond7}, in which case $\kappa+c$ is the global minimum point of $H$, as well as the only root of the function.

\begin{figure}
\centering
\mbox{
\subfigure[]{\includegraphics[scale=0.6]{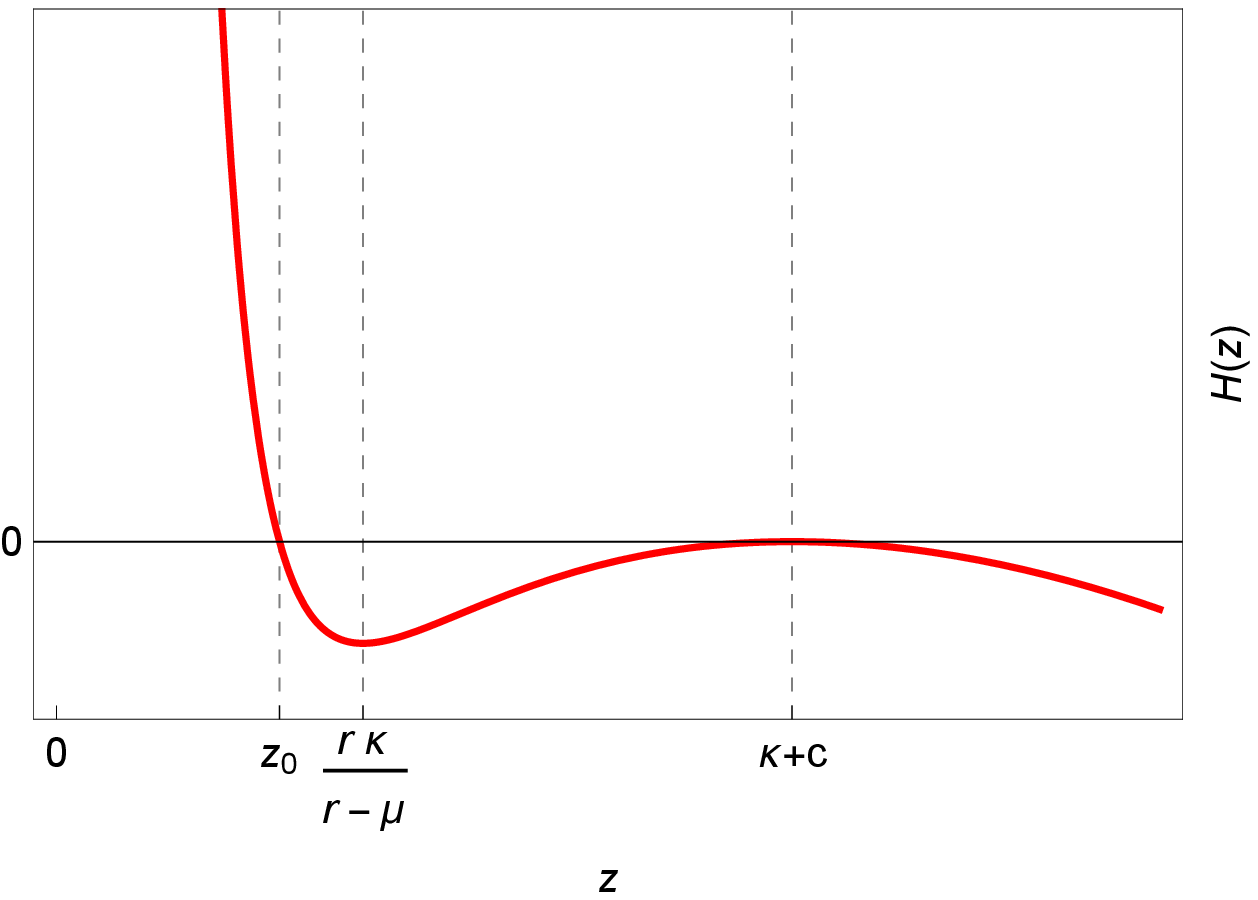}
\label{figSec3:H_1}}
\subfigure[]{\includegraphics[scale=0.6]{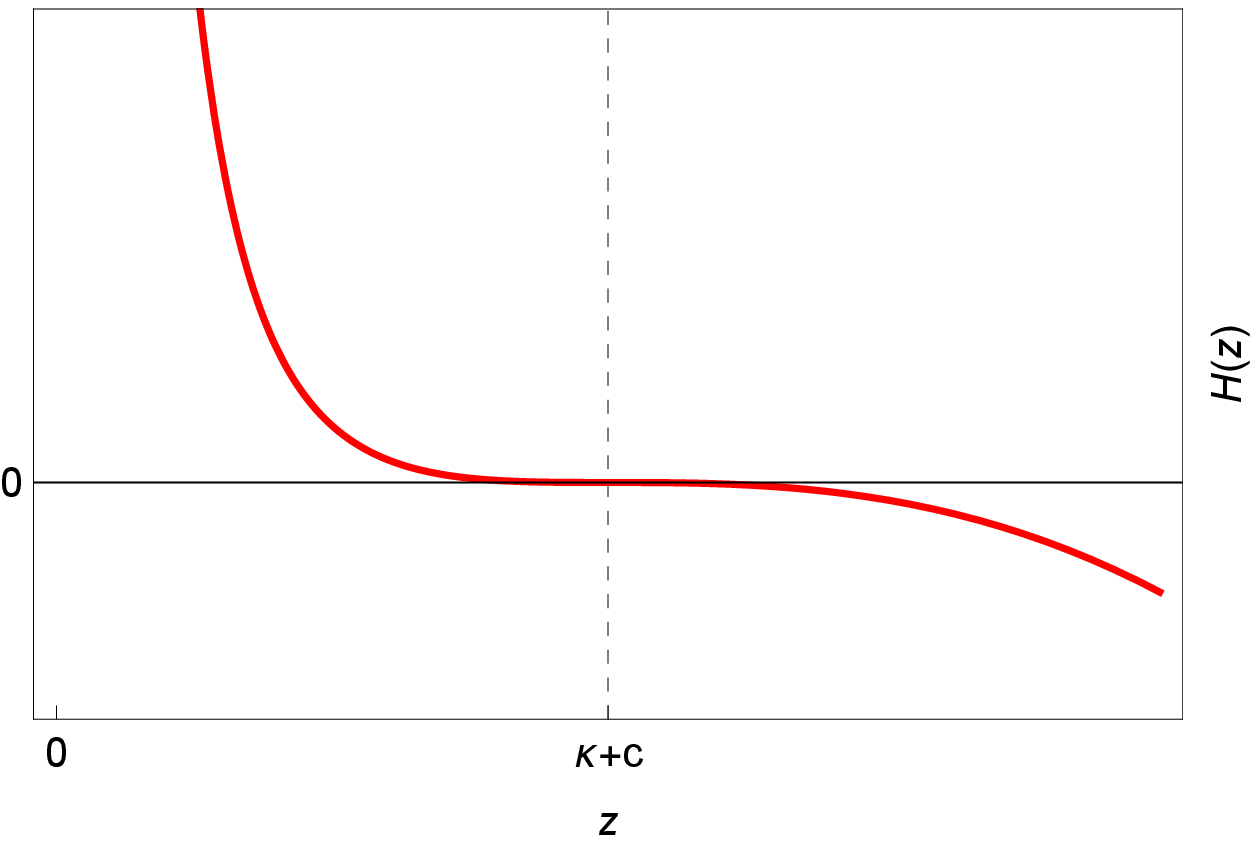}
\label{figSec3:H_2}}
}
\mbox{
\subfigure[]{\includegraphics[scale=0.6]{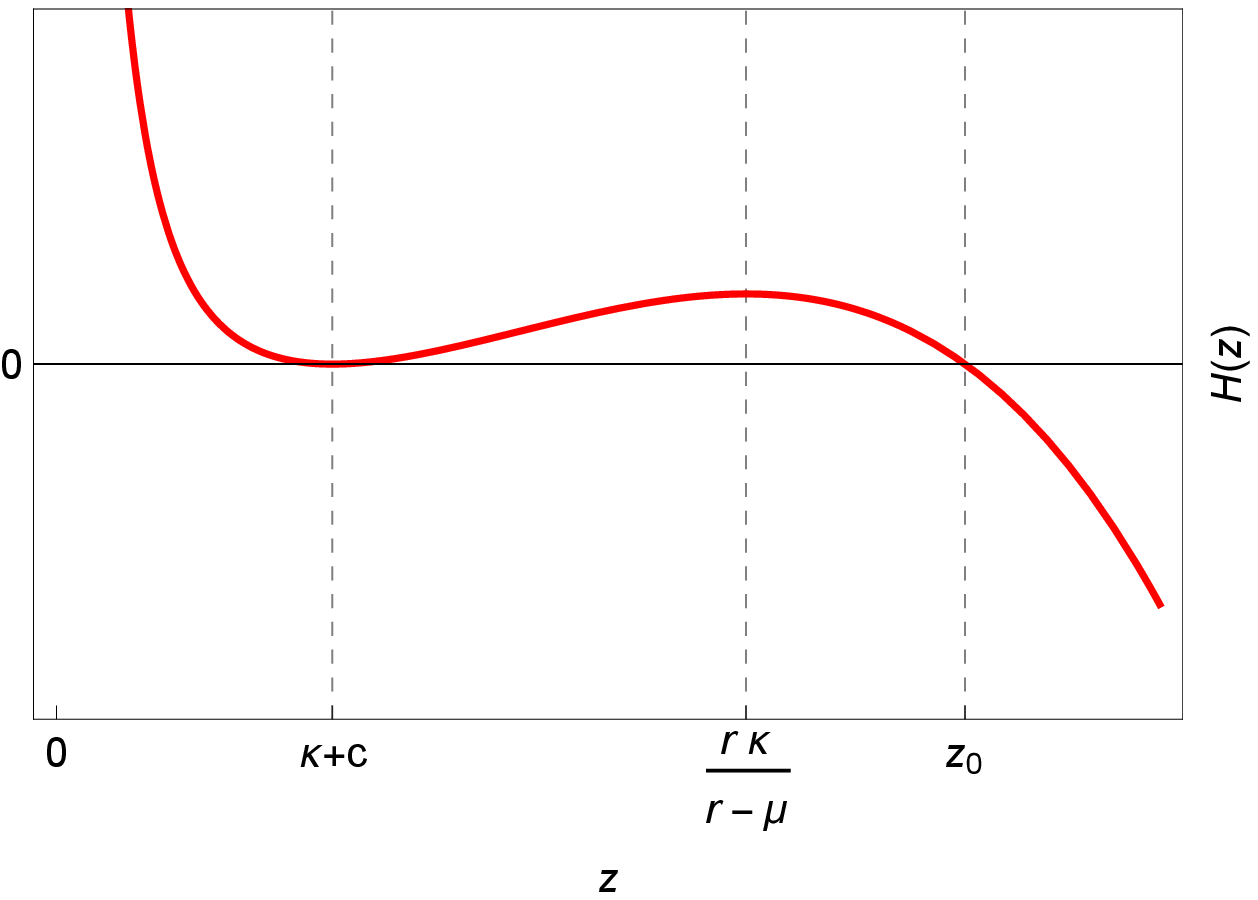}
\label{figSec3:H_3}}
\subfigure[]{\includegraphics[scale=0.6]{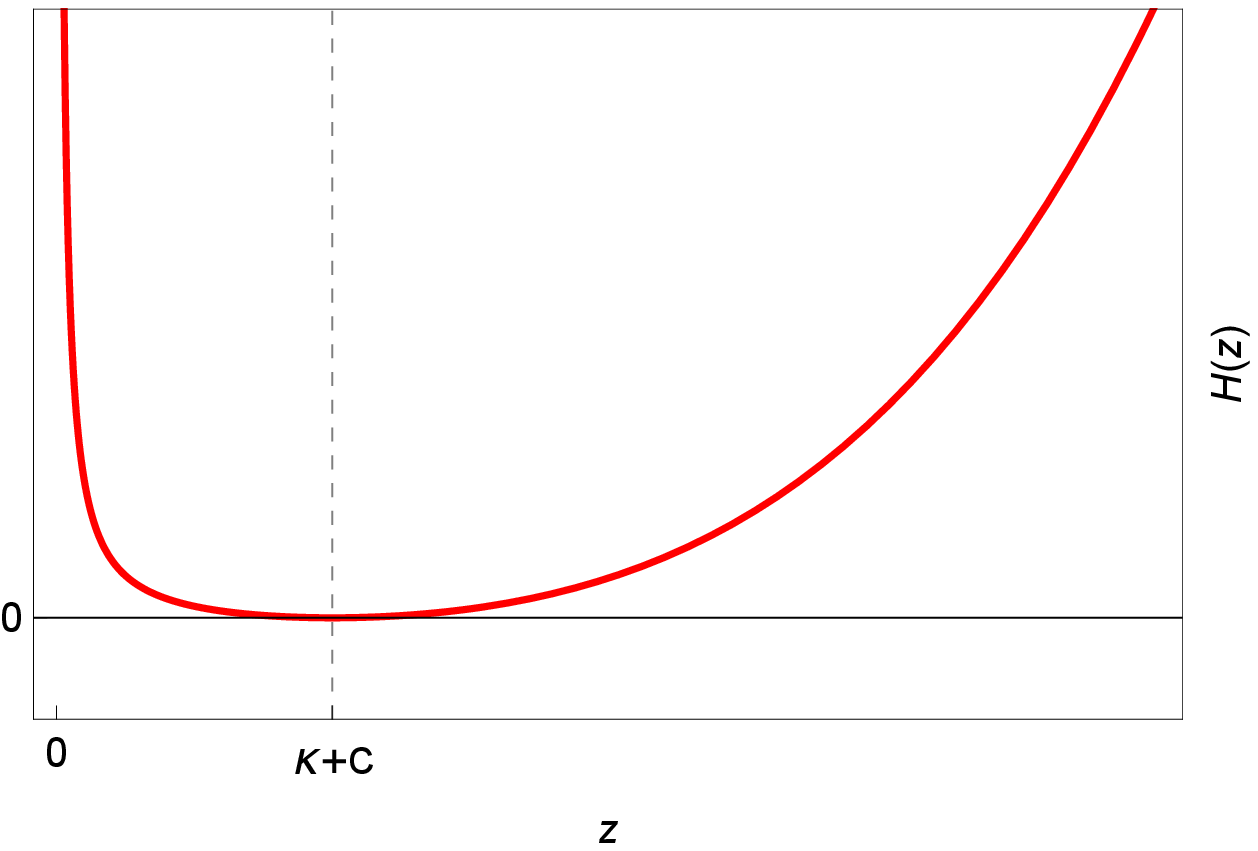}
\label{figSec3:H_4}}
}
\mbox{
\subfigure[]{\includegraphics[scale=0.6]{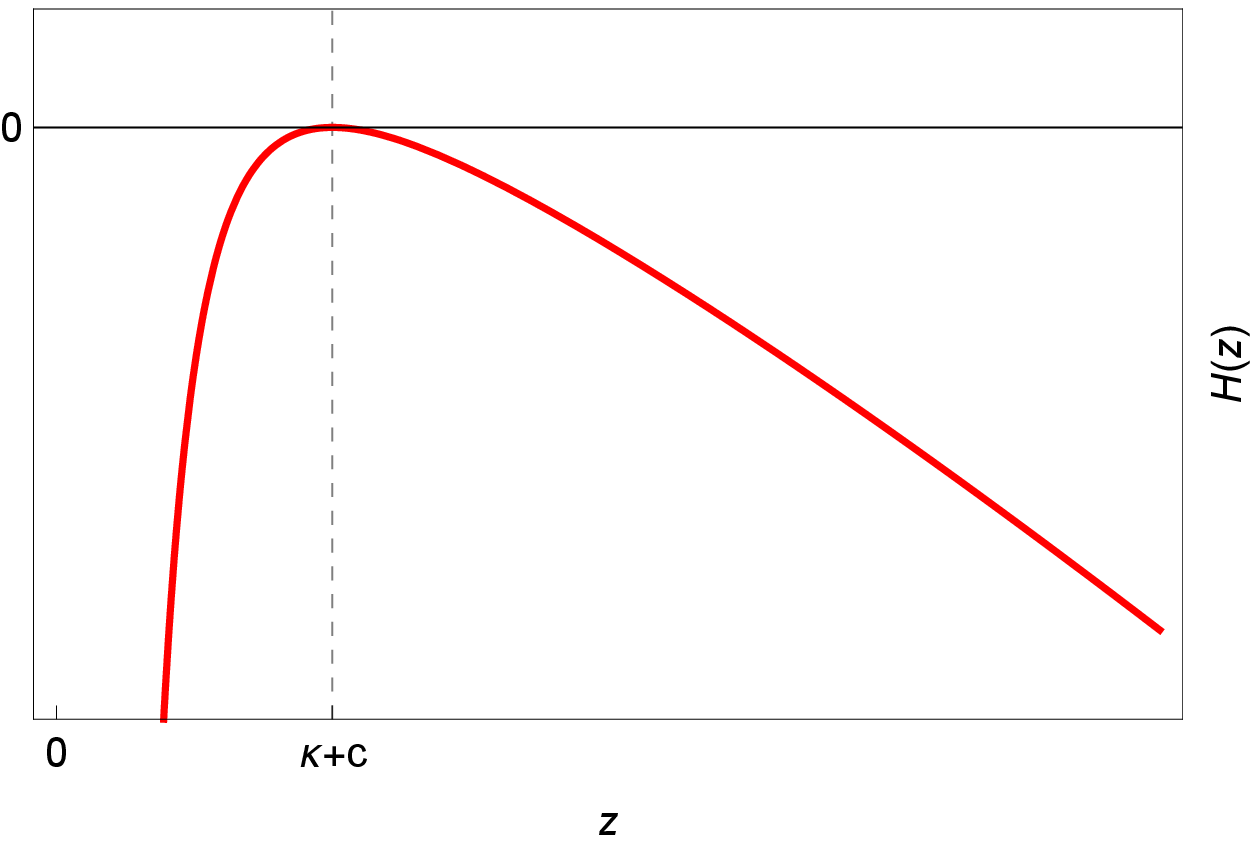}
\label{figSec3:H_5}}
\subfigure[]{\includegraphics[scale=0.6]{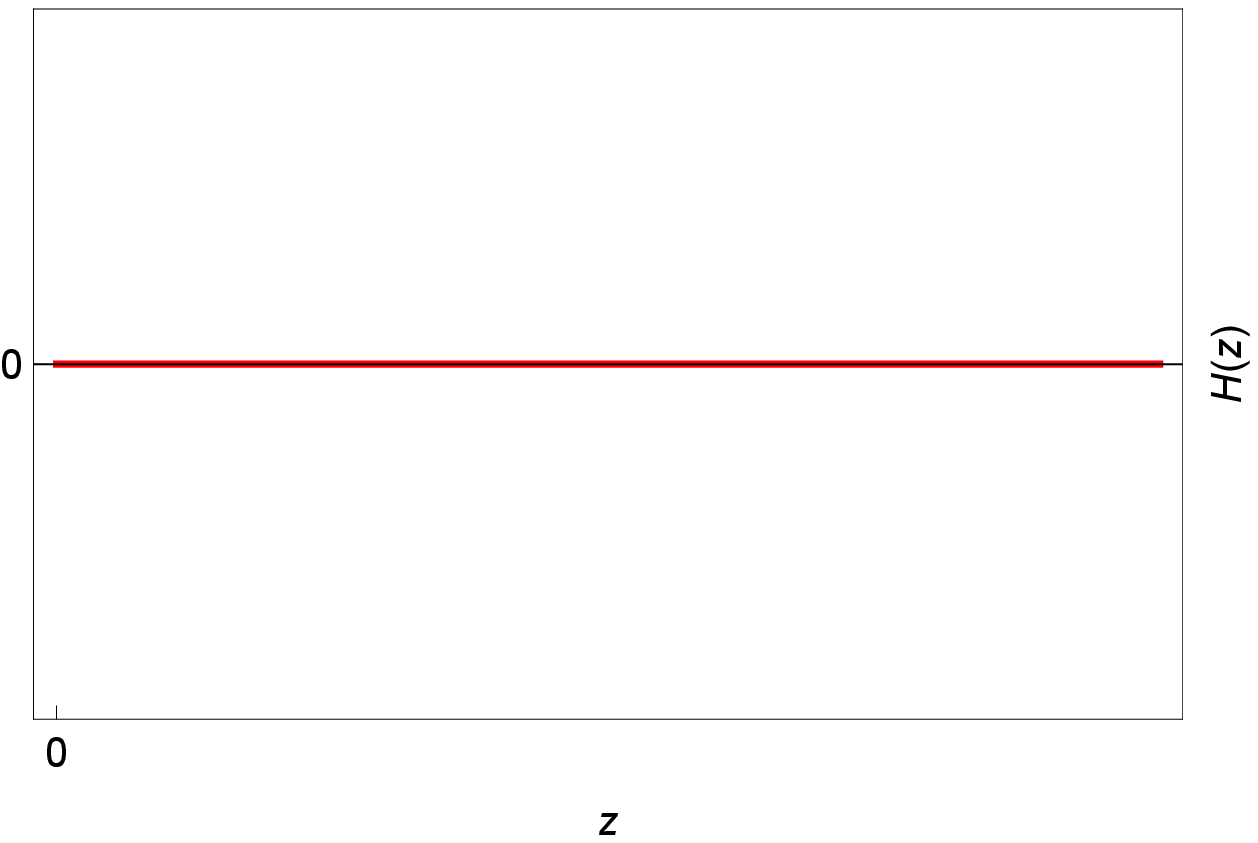}
\label{figSec3:H_6}}
}
\subfigure[]{\includegraphics[scale=0.6]{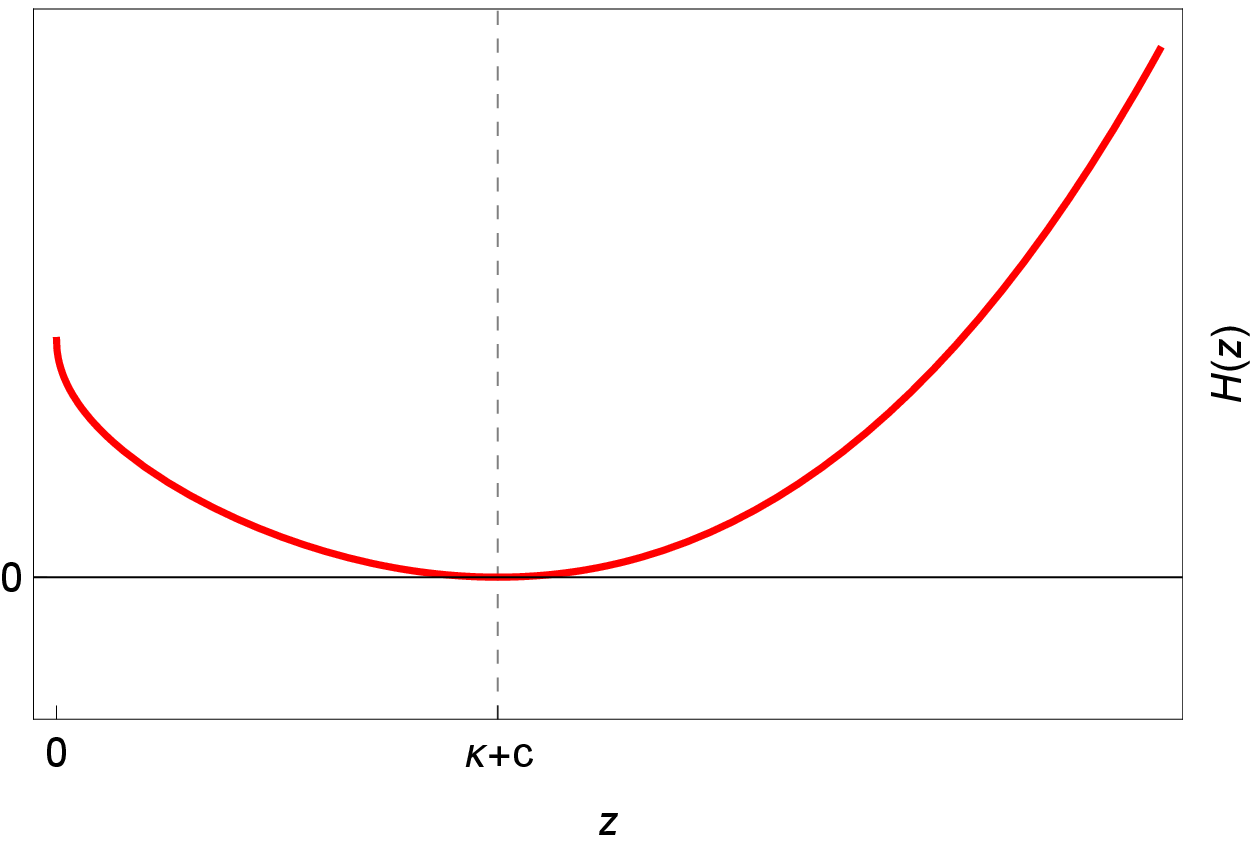}
\label{figSec3:H_7}}
\caption{Graphs of $H$ under the seven parameter regimes in Proposition~\ref{propSec3:SolCond}.}
\label{figSec3:H}
\end{figure}
\section{Solution and Verification for the Simpler Problem}
\label{Sec4}
\subsection{Constructing candidate solutions}
By analysing Proposition~\ref{propSec3:SolCond}, we can guess the optimal stopping policy for Problem~\eqref{eqSec3:OptStopProb}, under each of the parameter regimes identified there. Based on those guesses, we can derive expressions for the candidate value function under each regime. 

First, if Condition~\eqref{eqpropSec3:Cond1} holds, Problem~\eqref{eqSec3:FreeBndProb} admits a unique solution, comprising a free boundary $z_*\in(0,\kappa+c)$ and a function $\widehat{V}\in\C{0}{0}(0,\infty)\cap\C{2}{0}(z_*,\kappa+c)$. In detail, $z_*\coloneqq z_0$ satisfies the free boundary equation \eqref{eqSec3:FreeBndEqn}, where $z_0\in(0,\sfrac{r\kappa}{(r-\mu)})\subset(0,\kappa+c)$ is the root of $H$, whose existence was established by Proposition~\ref{propSec3:SolCond}(a), while $\widehat{V}$ is determined by \eqref{eqSec3:FreeBndProbSol} over $(z_*,\kappa+c)$ and $\widehat{V}(x)\coloneqq\kappa-x$, for all $x\in(0,z_*]\cup[\kappa+c,\infty)$. We speculate that $z_*$ is the optimal stopping threshold for Problem~\eqref{eqSec3:OptStopProb} and $\widehat{V}$ is the corresponding value function. This is economically plausible, since it implies that Problem~\eqref{eqSec3:OptStopProb} has a unique non-trivial solution when the drift rate of the stock price is less than the threshold $\sfrac{rc}{(\kappa+c)}$ and the discount rate is positive. A low drift rate ensures that there may be some value in waiting for the stock price to fall before closing out the short position (i.e. immediate stopping is not always optimal), while a positive discount rate ensures that the short seller should not wait forever.

The following lemma derives a lower bound for the candidate value function specified above. It will be used to prove Theorem~\ref{thmSec4:VerThm}(a), which establishes that the candidate value function and the associated candidate optimal stopping threshold do indeed solve Problem~\eqref{eqSec3:OptStopProb} under Condition~\eqref{eqpropSec3:Cond1}.

\begin{lemma}
\label{lemSec4:ValFun1LowBnd}
Suppose Condition~(\ref{eqpropSec3:Cond1}) holds. Let $z_*\coloneqq z_0\in(0,\sfrac{r\kappa}{(r-\mu)})\subset(0,\kappa+c)$ be the unique solution to (\ref{eqSec3:FreeBndEqn}), whose existence is established by Proposition~\ref{propSec3:SolCond}(a), and let $\widehat{V}\in\C{0}{0}(0,\infty)\cap \C{2}{0}(z_*,\kappa+c)$ be determined by (\ref{eqSec3:FreeBndProbSol}) over $(z_*,\kappa+c)$ and $\widehat{V}(x)\coloneqq\kappa-x$, for all $x\in(0,z_*]\cup[\kappa+c,\infty)$. Then $\widehat{V}(x)>\kappa-x$, for all $x\in(z_*,\kappa+c)$.
\end{lemma}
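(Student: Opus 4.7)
The plan is to set $U(x) \coloneqq \widehat{V}(x) - (\kappa - x)$ for $x \in [z_*,\kappa+c]$ and prove that $U > 0$ on the open interval. The boundary conditions \eqref{eqSec3:FreeBndProb_b} give $U(z_*) = 0 = U(\kappa+c)$, the smooth pasting condition \eqref{eqSec3:FreeBndProb_c} gives $U'(z_*+) = 0$, and combining \eqref{eqSec3:FreeBndProb_a} with the identity $\L_X(\kappa - x) = -\mu x$ produces the inhomogeneous equation
\begin{equation*}
\L_XU(x) - (\lambda+r)U(x) = r\kappa - (r-\mu)x
\end{equation*}
on $(z_*,\kappa+c)$. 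Under Condition~\eqref{eqpropSec3:Cond1} we have $r-\mu > 0$, so the right-hand side is strictly positive for $x < \bar{z}$ and strictly negative for $x > \bar{z}$, where $\bar{z} \coloneqq r\kappa/(r-\mu)$. By Proposition~\ref{propSec3:SolCond}(a), $z_* < \bar{z} < \kappa+c$, so this sign change occurs strictly inside $(z_*,\kappa+c)$.

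I would first show that $U$ is strictly positive just to the right of $z_*$. Substituting $U(z_*) = U'(z_*+) = 0$ into the inhomogeneous equation at $z_*$ gives $\tfrac{1}{2}\sigma^2 z_*^2\,U''(z_*+) = r\kappa - (r-\mu)z_* > 0$, hence $U''(z_*+) > 0$. Thus $U$ is locally convex at $z_*$ with vanishing value and derivative there, and therefore $U(x) > 0$ on some right-neighbourhood of $z_*$.

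The core of the argument is a maximum principle contradiction. Suppose $U$ achieves a negative value somewhere in $(z_*,\kappa+c)$; then its global minimum on $[z_*,\kappa+c]$ is attained at an interior point $x_m$ with $U(x_m) < 0$, $U'(x_m) = 0$ and $U''(x_m) \geq 0$, so $\L_X U(x_m) \geq 0$ and the ODE forces $r\kappa - (r-\mu)x_m \geq -(\lambda+r)U(x_m) > 0$, giving $x_m < \bar{z}$. Since $U$ is strictly positive immediately to the right of $z_*$, there must be a first zero $x_0$ of $U$ in $(z_*, x_m)$, and on $(z_*, x_0)$ the function $U$ is positive and therefore attains an interior maximum at some $x_1 \in (z_*, x_0)$ with $U(x_1) > 0$, $U'(x_1) = 0$ and $U''(x_1) \leq 0$. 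The ODE at $x_1$ then gives $r\kappa - (r-\mu)x_1 \leq -(\lambda+r)U(x_1) < 0$, i.e.\ $x_1 > \bar{z}$, contradicting $x_1 < x_m < \bar{z}$. Hence $U \geq 0$ on $(z_*,\kappa+c)$.

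To upgrade nonnegativity to strict positivity, I would suppose $U(x_0) = 0$ for some $x_0 \in (z_*,\kappa+c)$. By what was just proved, $x_0$ is an interior nonnegative minimum, so $U'(x_0) = 0$ and $U''(x_0) \geq 0$; the ODE at $x_0$ yields $r\kappa - (r-\mu)x_0 \geq 0$, i.e.\ $x_0 \leq \bar{z}$. On $(z_*, x_0)$, $U$ is nonnegative and, since the zero function does not satisfy the inhomogeneous ODE, not identically zero, so it attains a positive interior maximum at some $x_1 \in (z_*, x_0)$; the same critical-point analysis as above forces $x_1 > \bar{z}$, contradicting $x_1 < x_0 \leq \bar{z}$. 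The delicate point throughout is the sign change of the forcing term at $\bar{z}$: the argument succeeds precisely because Proposition~\ref{propSec3:SolCond}(a) places $z_*$ strictly below $\bar{z}$, so hypothetical critical points of $U$ with the wrong-sign second derivative are pinned to the wrong side of $\bar{z}$.
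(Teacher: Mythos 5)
Your proof is correct and rests on exactly the same three ingredients as the paper's: the decomposition $U = \widehat{V} - (\kappa - \cdot)$, the identity $\L_X U - (\lambda+r)U = r\kappa - (r-\mu)x$ on $(z_*,\kappa+c)$, and the local convexity $U''(z_*+) > 0$ forcing $U > 0$ just to the right of $z_*$. Where you diverge is in the global step. The paper splits the interval at $\bar{z} = r\kappa/(r-\mu)$ and appeals twice to the classical elliptic maximum principle (cited as a black box from Protter--Weinberger, Theorem~1.3): on $[z_*,x]$ with $x \le \bar{z}$, where the forcing is positive, it rules out an interior positive maximum; on $[\bar{z},\kappa+c]$, where the forcing is negative, it rules out an interior maximum of $-U$. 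Your argument instead reproves the needed maximum-principle content by hand, via the second-derivative test at hypothetical interior extrema: a negative interior minimum $x_m$ is pinned to the left of $\bar{z}$ by the ODE, while the positive interior maximum $x_1$ that must exist between $z_*$ and the first zero of $U$ is pinned to the right of $\bar{z}$, a contradiction; the upgrade from nonnegativity to strict positivity runs the same mechanism once more at a hypothetical interior zero. Both routes exploit precisely the fact that $z_* < \bar{z} < \kappa+c$, so the forcing changes sign strictly inside the continuation region. Your version is more self-contained (no external reference needed) at the cost of a slightly longer bookkeeping of critical points; the paper's version is more modular and makes the role of the maximum principle explicit.
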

\begin{proof}
Define the function $U\in\C{0}{0}(0,\infty)\cap\C{2}{0}(z_*,\kappa+c)$, by setting $U(x)\coloneqq\widehat{V}(x)-(\kappa-x)$, for all $x\in(0,\infty)$. We shall demonstrate that $U(x)>0$, for all $x\in(z_*,\kappa+c)$. To begin with, observe that
\begin{equation}
\label{eqlemSec4:ValFun1LowBnd_1}
\begin{split}
\L_XU(x)-(\lambda+r)U(x)&=\L_X\widehat{V}(x)-(\lambda+r)\widehat{V}(x)+\mu x+(\lambda+r)(\kappa-x)\\
&=-\lambda(\kappa-x)+\mu x+(\lambda+r)(\kappa-x)
=r\kappa-(r-\mu)x,
\end{split}
\end{equation}
for all $x\in(z_*,\kappa+c)$, by virtue of \eqref{eqSec3:FreeBndProb_a}. In particular,
\begin{equation}
\label{eqlemSec4:ValFun1LowBnd_2}
\begin{split}
\frac{1}{2}\sigma^2z_*^2U''(z_*+)
&=\frac{1}{2}\sigma^2z_*^2U''(z_*+)+\mu z_*U'(z_*+)-(\lambda+r)U(z_*+)\\
&=\L_XU(z_*+)-(\lambda+r)U(z_*+)
=r\kappa-(r-\mu)z_*>0,
\end{split}
\end{equation}
since $U(z_*+)=U(z_*)=0$ and $U'(z_*+)=0$, due to \eqref{eqSec3:FreeBndProb_b}, \eqref{eqSec3:FreeBndProb_c} and the continuity of $\widehat{V}$, and since $z_*<\sfrac{r\kappa}{(r-\mu)}$. So, $U''(z_*+)>0$,  which together with $U'(z_*+)=U(z_*+)=0$, ensures the existence of some $\varepsilon>0$, such that $U(x)>U(z_*)=0$, for all $x\in(z_*,z_*+\varepsilon)$. In particular, given any $x\in(z_*,\sfrac{r\kappa}{(r-\mu)}]$, it follows that
\begin{equation*}
\max_{\xi\in[z_*,x]}U(\xi)>U(z_*)=0,
\end{equation*}
whence $U$ has a positive maximum over $[z_*,x]$, which is realised either at an interior point of the interval or at the right end-point. However, given any $x\in(z_*,\sfrac{r\kappa}{(r-\mu)}]$, \eqref{eqlemSec4:ValFun1LowBnd_1} yields
\begin{equation*}
\L_XU(\xi)-(\lambda+r)U(\xi)=r\kappa-(r-\mu)\xi>0,
\end{equation*}
for all $\xi\in(z_*,x)\subseteq(z_*,\sfrac{r\kappa}{(r-\mu)})$. Based on this differential inequality, the maximum principle \citep[see][Theorem~1.3]{PW67} asserts that $U$ cannot realise its maximum in the interior of $[z_*,x]$, for any $x\in(z_*,\sfrac{r\kappa}{(r-\mu)}]$, since it is a non-constant function with a non-negative maximum over the interval. Consequently,
\begin{equation*}
U(x)=\max_{\xi\in[z_*,x]}U(\xi)>U(z_*)=0,
\end{equation*}
for all $x\in(z_*,\sfrac{r\kappa}{(r-\mu)}]$. Next, observe that
\begin{equation*}
\max_{x\in[\frac{r\kappa}{r-\mu},\kappa+c]}-U(x)\geq-U(\kappa+c)=0,
\end{equation*}
by virtue of \eqref{eqSec3:FreeBndProb_b}. That is to say, $-U$ has a non-negative maximum over $[\frac{r\kappa}{r-\mu},\kappa+c]$, which is realised either at an interior point of the interval or at the right end-point, since we have already established that $-U(\sfrac{r\kappa}{(r-\mu)})<0$. Another application of \eqref{eqlemSec4:ValFun1LowBnd_1} gives
\begin{equation*}
\L_X(-U)(x)-(\lambda+r)(-U)(x)
=-\bigl(\L_XU(x)-(\lambda+r)U(x)\bigr)
=(r-\mu)x-r\kappa>0,
\end{equation*}
for all $x\in(\sfrac{r\kappa}{(r-\mu)},\kappa+c)$. Once again, the maximum principle ensures that $-U$ cannot achieve its maximum in the interior of $[\sfrac{r\kappa}{(r-\mu)},\kappa+c]$, since it is a non-constant function with a non-negative maximum over the interval. It follows that $-U$ must have a unique maximum at the right end-point of the interval, which implies that $U(x)>U(\kappa+c)=0$, for all $x\in[\sfrac{r\kappa}{(r-\mu)},\kappa+c)$.
\end{proof}

\begin{figure}
\centering
\includegraphics[scale=0.6]{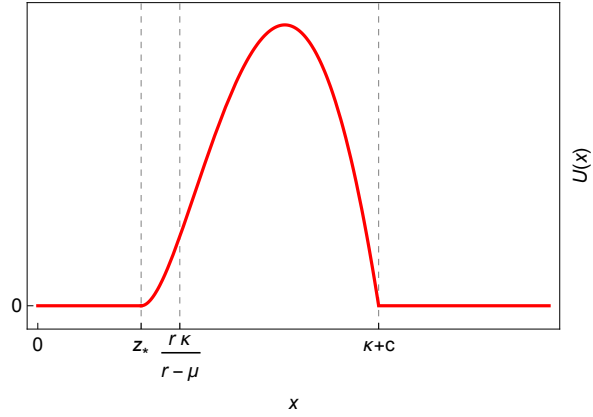}
\caption{The function $U$ defined in Lemma~\ref{lemSec4:ValFun1LowBnd}, under Condition~\eqref{eqpropSec3:Cond1}.}
\label{figSec4:U_1}
\end{figure}

Figure~\ref{figSec4:U_1} plots the function $U$ defined in Lemma~\ref{lemSec4:ValFun1LowBnd}, using parameter values that satisfy Condition~\eqref{eqpropSec3:Cond1}. We observe that $U(x)\geq 0$, for all $x\in(0,\infty)$, which is to say that $\widehat{V}(x)\geq\kappa-x$. Moreover, we see that $\widehat{V}(x)>\kappa-x$, for all $x\in(z_*,\kappa+c)$, as established by Lemma~\ref{lemSec4:ValFun1LowBnd}. Figure~\ref{figSec4:U_1} also illustrates the continuity of $U'$ at $z_*$, which follows from the fact that \eqref{eqSec3:FreeBndProb_c} ensures the continuity of $\widehat{V}'$ there. By contrast, Problem~\eqref{eqSec3:FreeBndProb} does not specify continuity of $\widehat{V}'$ at $\kappa+c$, as evident from the kink in $U$ at that point. Specifically, $U'((\kappa+c)-)<U'((\kappa+c)+)=0$, which implies that $\widehat{V}'((\kappa+c)-)<\widehat{V}((\kappa+c)+)=-1$. Finally, note that $U''$ is not continuous at $z_*$, since \eqref{eqlemSec4:ValFun1LowBnd_2} implies that $U''(z_*+)>0=U''(z_*-)$. This translates to $\widehat{V}''(z_*+)>0=\widehat{V}''(z_*-)$, which is not unexpected, since Problem~\eqref{eqSec3:FreeBndProb} does not impose a continuity requirement on $\widehat{V}''$ at $z_*$.

If Condition~\eqref{eqpropSec3:Cond2} holds, $H$ has a single root at $\kappa+c$, according to Proposition~\ref{propSec3:SolCond}(b). This implies that the free boundary equation \eqref{eqSec3:FreeBndEqn} does not admit a solution, whence Problem~\eqref{eqSec3:FreeBndProb} does not admit a solution either. However, the upper bound for the root $z_0\in(0,\sfrac{r\kappa}{(r-\mu)})\subset(0,\kappa+c)$ of $H$, which exists under Condition~\eqref{eqpropSec3:Cond1}, satisfies $\sfrac{r\kappa}{(r-\mu)}\uparrow\kappa+c$ as $\mu\uparrow\sfrac{rc}{(\kappa+c)}$, for any given $r>0$. This suggests that the root itself may satisfy $z_0\uparrow\kappa+c$ as $\mu\uparrow\sfrac{rc}{(\kappa+c)}$. Since that root is the candidate optimal stopping threshold for Problem~\eqref{eqSec3:OptStopProb}, under Condition~\eqref{eqpropSec3:Cond1}, $z_*\coloneqq\kappa+c$ is the natural candidate optimal stopping threshold, under Condition~\eqref{eqpropSec3:Cond2}. The candidate value function $\widehat{V}\in\C{2}{0}(0,\infty)$ is then determined by $\widehat{V}(x)\coloneqq\kappa-x$, for all $x\in(0,\infty)$. The economic interpretation is that the short seller should close out his position immediately, if the drift rate of the stock price is equals the threshold $\sfrac{rc}{(\kappa+c)}$ and the discount rate is positive. In that case, waiting for the stock price to fall is suboptimal, since it is expected to increase too quickly.

Next, suppose Condition~\eqref{eqpropSec3:Cond3} holds. In that case, Proposition~\ref{propSec3:SolCond}(c) ensures that $H$ possesses a root $z_0\in(\sfrac{r\kappa}{(r-\mu)},\infty)\subset(\kappa+c,\infty)$, which is also the unique solution to the free boundary equation \eqref{eqSec3:FreeBndEqn}. Since $\check{\tau}_{z_0}\wedge\hat{\tau}_{\kappa+c}=0$ $\P_x$-a.s., for all $x\in(0,\infty)$, we speculate that Problem~\eqref{eqSec3:OptStopProb} is solved by stopping immediately, in which case $z_*\coloneqq\kappa+c$ is the natural candidate for the optimal stopping threshold. The candidate value function $\widehat{V}\in\C{2}{0}(0,\infty)$ is then determined by $\widehat{V}(x)\coloneqq\kappa-x$, for all $x\in(0,\infty)$. This seems economically reasonable, since it suggests that the short seller should close out his position immediately if the drift rate of the stock price is large enough, relative to the discount rate, and the discount rate is positive. In other words, waiting for a fall in the stock price destroys value if the stock price is expected to appreciate at a high enough rate.

The same economic logic applies to the situation when Condition~\eqref{eqpropSec3:Cond4} holds, in which case the drift rate of the stock price is even higher relative to the discount rate. Once again, we surmise that the optimal stopping threshold is $z_*\coloneqq\kappa+c$ and the value function $\widehat{V}\in\C{2}{0}(0,\infty)$ is given by $\widehat{V}(x)\coloneqq\kappa-x$, for all $x\in(0,\infty)$.

Next, fix $\mu<0$, and observe that $\sfrac{r\kappa}{(r-\mu)}\downarrow 0$ as $r\downarrow 0$, which implies that $z_0\downarrow 0$ as $r\downarrow 0$, where $z_0\in(0,\sfrac{r\kappa}{(r-\mu)})$ is the root of $H$ in Proposition~\ref{propSec3:SolCond}(a). Based on the discussion of the situation when Condition~\eqref{eqpropSec3:Cond1} holds, this suggests that the optimal stopping threshold is $z_*\coloneqq 0$ when Condition~\eqref{eqpropSec3:Cond5} holds, in which case $\check{\tau}_{z_*}=\infty$, since the origin is a natural boundary for $X$. Economically, this captures the intuition that it is never optimal for the short seller to close out voluntarily when the drift rate of the stock price is negative and the discount rate is zero, since the stock price is expected to decline over time and the opportunity cost of not realising an early profit is zero.

Based on the argument above, we obtain the following expression for the candidate value function $\widehat{V}\in\C{0}{0}(0,\infty)\cap\C{2}{0}(0,\kappa+c)$ under Condition~\eqref{eqpropSec3:Cond5}:
\begin{equation}
\label{eqSec4:ValFun5}
\begin{split}
&\widehat{V}(x)\coloneqq J(x,\check{\tau}_0)
=\E_x(\kappa-X_{\hat{\tau}_{\kappa+c}\wedge\rho})
=\kappa-\E_x\bigl(\ind{\{\hat{\tau}_{\kappa+c}\leq\rho\}}X_{\hat{\tau}_{\kappa+c}}\bigr)-\E_x\bigl(\ind{\{\hat{\tau}_{\kappa+c}>\rho\}}X_\rho\bigr)\\
&=\kappa-(\kappa+c)\E_x\bigl(\P_x(\hat{\tau}_{\kappa+c}\leq\rho\,|\,\SigAlg{F}_{\hat{\tau}_{\kappa+c}})\bigr)-\E_x(X_\rho)+\E_x\Bigl(\E_x\bigl(\ind{\{\hat{\tau}_{\kappa+c}\leq\rho\}}X_\rho\,|\,\SigAlg{F}_{\hat{\tau}_{\kappa+c}}\bigr)\Bigr)\\
&=\kappa-(\kappa+c)\E_x\biggl(\int_{\hat{\tau}_{\kappa+c}}^\infty\lambda\e^{-\lambda t}\,\d t\biggr)-\int_0^\infty\lambda\e^{-\lambda t}\E_x(X_t)\,\d t+\E_x\biggl(\int_{\hat{\tau}_{\kappa+c}}^\infty\lambda\e^{-\lambda t}X_t\,\d t\biggr)\\
&=\kappa-(\kappa+c)\E_x\bigl(\e^{-\lambda\hat{\tau}_{\kappa+c}}\bigr)-\int_0^\infty\lambda x\e^{-(\lambda-\mu)t}\,\d t+\E_x\biggl(\int_0^\infty\lambda\e^{-\lambda(\hat{\tau}_{\kappa+c}+s)}X_{\hat{\tau}_{\kappa+c}+s}\,\d s\biggr)\\
&=\kappa-(\kappa+c)\frac{\psi_\lambda(x)}{\psi_\lambda(\kappa+c)}-\frac{\lambda x}{\lambda-\mu}+\E_x\biggl(\e^{-\lambda\hat{\tau}_{\kappa+c}}\int_0^\infty\lambda\e^{-\lambda s}\E_x\bigl(X_{\hat{\tau}_{\kappa+c}+s}\,|\,\SigAlg{F}_{\hat{\tau}_{\kappa+c}}\bigr)\,\d s\biggr)\\
&=\kappa-(\kappa+c)\frac{\psi_\lambda(x)}{\psi_\lambda(\kappa+c)}-\frac{\lambda x}{\lambda-\mu}+\E_x\biggl(\e^{-\lambda\hat{\tau}_{\kappa+c}}\int_0^\infty\lambda\e^{-\lambda s}\E_{X_{\hat{\tau}_{\kappa+c}}}(X_s)\,\d s\biggr)\\
&=\kappa-(\kappa+c)\frac{\psi_\lambda(x)}{\psi_\lambda(\kappa+c)}-\frac{\lambda x}{\lambda-\mu}+\biggl(\int_0^\infty\lambda(\kappa+c)\e^{-(\lambda-\mu)s}\,\d s\biggr)\E_x\bigl(\e^{-\lambda\hat{\tau}_{\kappa+c}}\bigr)\\
&=\kappa-(\kappa+c)\frac{\psi_\lambda(x)}{\psi_\lambda(\kappa+c)}-\frac{\lambda x}{\lambda-\mu}+\frac{\lambda(\kappa+c)}{\lambda-\mu}\frac{\psi_\lambda(x)}{\psi_\lambda(\kappa+c)}\\
&=\kappa-\frac{\lambda x}{\lambda-\mu}+\frac{\mu(\kappa+c)}{\lambda-\mu}\frac{\psi_\lambda(x)}{\psi_\lambda(\kappa+c)},
\end{split}
\end{equation}
for all $x\in(0,\kappa+c)$, and $\widehat{V}(x)=\kappa-x$, for all $x\in[\kappa+c,\infty)$. The third equality above follows from the fact that exponential random variables are almost surely finite, which implies that $\ind{\{\hat{\tau}_{\kappa+c}\leq\rho\}}X_{\hat{\tau}_{\kappa+c}}=\ind{\{\hat{\tau}_{\kappa+c}\leq\rho\}}(\kappa+c)$, for all $x\in(0,\kappa+c)$. Also note that the expressions $X_{\hat{\tau}_{\kappa+c}+s}$, for $s\geq 0$, in the fourth to sixth lines are well-defined, since $\e^{-\lambda\hat{\tau}_{\kappa+c}}=\ind{\{\hat{\tau}_{\kappa+c}<\infty\}}\e^{-\lambda\hat{\tau}_{\kappa+c}}$. Finally, the sixth and ninth equalities employ the Laplace transform identity \eqref{eqSec2:LaplaceTransforms_2}.

The following lemma derives a lower bound for the candidate value function derived above. It is used in the proof of Theorem~\ref{thmSec4:VerThm}(e) to show that the candidate value function and the associated candidate optimal stopping threshold solve Problem~\eqref{eqSec3:OptStopProb} under Condition~\eqref{eqpropSec3:Cond5}.

\begin{lemma}
\label{lemSec4:ValFun5LowBnd}
Suppose Condition~(\ref{eqpropSec3:Cond5}) holds, and let $\widehat{V}\in\C{0}{0}(0,\infty)\cap\C{2}{0}(0,\kappa+c)$ be determined by (\ref{eqSec4:ValFun5}) over $(0,\kappa+c)$ and $\widehat{V}(x)\coloneqq\kappa-x$, for all $x\in[\kappa+c,\infty)$. Then $\widehat{V}(x)>\kappa-x$, for all $x\in(0,\kappa+c)$.
\end{lemma}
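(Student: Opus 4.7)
The plan is to follow the template of Lemma~\ref{lemSec4:ValFun1LowBnd}: set $U(x)\coloneqq\widehat{V}(x)-(\kappa-x)$ and reduce the claim to proving $U>0$ on $(0,\kappa+c)$. Unlike the previous lemma, however, the closed-form expression for $\widehat{V}$ supplied by \eqref{eqSec4:ValFun5} makes a direct algebraic argument more natural here than an appeal to the maximum principle.

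First, I would substitute \eqref{eqSec4:ValFun5} into the definition of $U$ and simplify. The constant $\kappa$ cancels, and the coefficient of $x$ collapses from $1-\lambda/(\lambda-\mu)$ to $-\mu/(\lambda-\mu)$, giving
\begin{equation*}
U(x)=\frac{-\mu}{\lambda-\mu}\biggl(x-(\kappa+c)\,\frac{\psi_\lambda(x)}{\psi_\lambda(\kappa+c)}\biggr),
\end{equation*}
for all $x\in(0,\kappa+c)$. Under Condition~\eqref{eqpropSec3:Cond5} we have $\mu<0$ and $\lambda>0$, so the prefactor $-\mu/(\lambda-\mu)$ is strictly positive. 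It therefore suffices to show that $x\mapsto\psi_\lambda(x)/x$ is strictly increasing on $(0,\infty)$, since this immediately yields $\psi_\lambda(x)/x<\psi_\lambda(\kappa+c)/(\kappa+c)$ for $x\in(0,\kappa+c)$, which is exactly the inequality required.

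For the monotonicity, I would exploit the explicit form \eqref{eqSec2:PhiPsi}: writing $\psi_\lambda(x)/x=x^{q}$ with $q\coloneqq\sqrt{\nu^2+2\lambda/\sigma^2}-\nu-1$, the problem reduces to checking $q>0$. Since $2\nu+1=2\mu/\sigma^2$ and $\mu<0<\lambda$, we have $\nu^2+2\lambda/\sigma^2>\nu^2+2\mu/\sigma^2=(\nu+1)^2$, whence $\sqrt{\nu^2+2\lambda/\sigma^2}>|\nu+1|\geq\nu+1$, giving $q>0$ as required.

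The main obstacle---such as it is---amounts to noticing the algebraic collapse in the first step and then verifying the sign of the exponent $q$; both are short once the sign condition $\lambda>\mu$ from Condition~\eqref{eqpropSec3:Cond5} is brought to bear. An alternative route paralleling Lemma~\ref{lemSec4:ValFun1LowBnd} more closely would be to observe that $\L_XU(x)-\lambda U(x)=\mu x<0$ on $(0,\kappa+c)$, together with the boundary values $U(0+)=U(\kappa+c)=0$, and then apply the Protter--Weinberger maximum principle to $-U$; but the direct computation above is cleaner given the explicit formula.
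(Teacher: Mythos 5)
Your proof is correct and follows the same route as the paper: define $U=\widehat{V}-(\kappa-\cdot)$, factor it as a positive constant times $x-(\kappa+c)\psi_\lambda(x)/\psi_\lambda(\kappa+c)$, and reduce everything to the strict monotonicity of $x\mapsto\psi_\lambda(x)/x$, which comes down to the positivity of the exponent $q=\sqrt{\nu^2+\sfrac{2\lambda}{\sigma^2}}-\nu-1$. The only cosmetic difference is in how $q>0$ is justified: the paper bounds $\sqrt{\nu^2+\sfrac{2\lambda}{\sigma^2}}>|\nu|$ and uses $\nu<-\sfrac{1}{2}$ (from $\mu<0$) to get $|\nu|-\nu-1>0$, whereas you compare $\nu^2+\sfrac{2\lambda}{\sigma^2}$ with $(\nu+1)^2=\nu^2+\sfrac{2\mu}{\sigma^2}$ and use $\lambda>\mu$ directly; both are valid under Condition~\eqref{eqpropSec3:Cond5}, and yours has the minor advantage of mirroring the inequality already used in the proof of Proposition~\ref{propSec3:SolCond}(c). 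The maximum-principle alternative you mention at the end also works, but as you note the explicit formula makes the algebraic argument more natural.
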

\begin{proof}
Define the function $U\in\C{0}{0}(0,\infty)\cap\C{2}{0}(0,\kappa+c)$, by setting $U(x)\coloneqq\widehat{V}(x)-(\kappa-x)$, for all $x\in(0,\infty)$. We shall demonstrate that $U(x)>0$, for all $x\in(0,\kappa+c)$. To begin with, note that $\mu<0$ ensures that $\nu<-\sfrac{1}{2}$. Consequently,
\begin{equation}
\label{eqlemSec4:ValFun5LowBnd}
\sqrt{\nu^2+\frac{2\lambda}{\sigma^2}}-\nu-1>|\nu|-\nu-1>0.
\end{equation}
Hence,
\begin{equation*}
\frac{\d}{\d x}\frac{\psi_\lambda(x)}{x}
=\frac{x\psi_\lambda'(x)-\psi_\lambda(x)}{x^2}
=\biggl(\sqrt{\nu^2+\frac{2\lambda}{\sigma^2}}-\nu-1\biggr)\frac{\psi_\lambda(x)}{x^2}>0,
\end{equation*}
for all $x\in(0,\infty)$, by virtue of \eqref{eqSec2:PhiPrimePsiPrime}. That is to say, the function $(0,\infty)\ni x\mapsto\sfrac{\psi_\lambda(x)}{x}$ is monotonically increasing. Finally, an application of \eqref{eqSec4:ValFun5}  gives
\begin{equation*}
U(x)=\frac{\mu}{\lambda-\mu}\biggl((\kappa+c)\frac{\psi_\lambda(x)}{\psi_\lambda(\kappa+c)}-x\biggr)
=\frac{\mu x}{\lambda-\mu}\biggl(\frac{\psi_\lambda(x)}{x}\frac{\kappa+c}{\psi_\lambda(\kappa+c)}-1\biggr)
>0,
\end{equation*}
for all $x\in(0,\kappa+c)$, since $\mu<0$ and $\sfrac{\psi_\lambda(x)}{x}<\sfrac{\psi_\lambda(\kappa+c)}{(\kappa+c)}$.
\end{proof}

\begin{figure}
\centering
\includegraphics[scale=0.6]{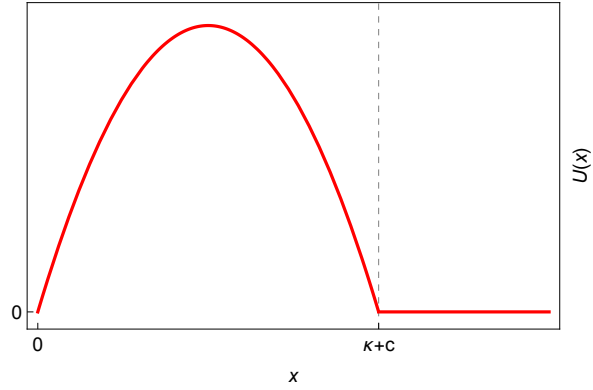}
\caption{The function $U$ defined in Lemma~\ref{lemSec4:ValFun5LowBnd}, under Condition~\eqref{eqpropSec3:Cond5}.}
\label{figSec4:U_4}
\end{figure}

Figure~\ref{figSec4:U_4} plots the function $U$ defined in Lemma~\ref{lemSec4:ValFun5LowBnd}, using parameter values that satisfy Condition~\eqref{eqpropSec3:Cond5}. We observe that $U(x)\geq 0$, for all $x\in(0,\infty)$, which is to say that the candidate value function satisfies $\widehat{V}(x)\geq\kappa-x$, for all $x\in(0,\infty)$. Moreover, as established by Lemma~\ref{lemSec4:ValFun5LowBnd}, we see that $\widehat{V}(x)>\kappa-x$, for all $x\in(0,\kappa+c)$. The kink in $U$ at $\kappa+c$ indicates that the $\widehat{V}$ is not differentiable at that point. This can be seen directly by differentiating \eqref{eqSec4:ValFun5} to get
\begin{equation*}
\begin{split}
\widehat{V}'((\kappa+c)-)&=-\frac{\lambda}{\lambda-\mu}+\frac{\mu(\kappa+c)}{\lambda-\mu}\frac{\psi_\lambda'(\kappa+c)}{\psi_\lambda(\kappa+c)}\\
&=-\frac{\lambda}{\lambda-\mu}+\frac{\mu}{\lambda-\mu}\biggl(\sqrt{\nu^2+\frac{2\lambda}{\sigma^2}}-\nu\biggr)\\
&<-\frac{\lambda}{\lambda-\mu}+\frac{\mu}{\lambda-\mu}
=-1=\widehat{V}'((\kappa+c)+),
\end{split}
\end{equation*}
by virtue of \eqref{eqlemSec4:ValFun5LowBnd} and since $\mu<0$.

According to Proposition~\ref{propSec3:SolCond}(f), $H$ is identically zero if Condition~\eqref{eqpropSec3:Cond6} holds, which intimates that Problem~\ref{eqSec3:OptStopProb} does not admit a uniquely determined optimal stopping threshold in that case. Indeed, it follows from \eqref{eqSec3:ODEPartSol} that $\widehat{v}(x)=\kappa-x$, for all $x\in(0,\infty)$, under Condition~\eqref{eqpropSec3:Cond6}, whence every point in $(0,\kappa+c)$ satisfies the free boundary equation \eqref{eqSec3:FreeBndEqn}. This suggests that any stopping time solves Problem~\ref{eqSec3:OptStopProb}, in which case it is most convenient to choose immediate stopping. Hence, the candidate optimal stopping threshold is $z_*=\kappa+c$, under Condition~\eqref{eqpropSec3:Cond6}, and the candidate value function $\widehat{V}\in\C{2}{0}(0,\infty)$ is defined by $\widehat{V}(x)\coloneqq\kappa-x$, for all $x\in(0,\infty)$. The economic interpretation is that the short seller is indifferent about when to close out his position when the drift rate of the stock price and the discount rate are both zero. In particular, waiting offers no advantage, since the stock price is not expected to decline, but waiting also imposes no penalty, due to the zero discount rate.

Finally, note that Condition~\eqref{eqpropSec3:Cond7} is the limiting case of Condition~\eqref{eqpropSec3:Cond4}, as $r\downarrow 0$. Consequently, since the candidate optimal stopping threshold under Condition~\eqref{eqpropSec3:Cond4} is $z_*\coloneqq\kappa+c$, and the candidate value function $\widehat{V}\in\C{2}{0}(0,\infty)$ is given by $\widehat{V}(x)\coloneqq\kappa-x$, for all $x\in(0,\infty)$, we surmise that the same is true if Condition~\eqref{eqpropSec3:Cond7} holds. The economic interpretation is that immediate close out is optimal if the drift rate of the stock price is positive and the discount rate is zero, since the stock price is expected to appreciate over time.

\subsection{Verifying the candidate solutions}
The next result confirms that the candidate optimal stopping policies and the candidate value functions proposed above do indeed solve Problem~\eqref{eqSec3:OptStopProb}, under their associated parameter regimes.

\begin{theorem}
\label{thmSec4:VerThm}
The optimal stopping time $\tau_*\in\StopTimes$ and the value function $\widetilde{V}\in\C{0}{0}(0,\infty)$ for Problem~(\ref{eqSec3:OptStopProb}) are given by $\tau_*=\check{\tau}_{z_*}$ and $\widetilde{V}=\widehat{V}$, where
\begin{enumerate}[leftmargin=*,topsep=1ex,itemsep=1ex,itemindent=0ex,label=(\alph*)]
\item
$z_*\in(0,\sfrac{r\kappa}{(r-\mu)})\subset(0,\kappa+c)$ is the solution to (\ref{eqSec3:FreeBndEqn}) and $\widehat{V}\in\C{0}{0}(0,\infty)\cap\C{2}{0}(z_*,\kappa+c)$ is determined by (\ref{eqSec3:FreeBndProbSol}) over $(z_*,\kappa+c)$ and $\widehat{V}(x)=\kappa-x$, for all $x\in(0,z_*]\cup[\kappa+c,\infty)$, if Condition~(\ref{eqpropSec3:Cond1}) holds;
\item
$z_*=\kappa+c$ and $\widehat{V}\in\C{2}{0}(0,\infty)$ is determined by $\widehat{V}(x)=\kappa-x$, for all $x\in(0,\infty)$, if Condition~(\ref{eqpropSec3:Cond2}) holds; 
\item
$z_*=\kappa+c$ and $\widehat{V}\in\C{2}{0}(0,\infty)$ is determined by $\widehat{V}(x)=\kappa-x$, for all $x\in(0,\infty)$, if Condition~(\ref{eqpropSec3:Cond3}) holds; 
\item
$z_*=\kappa+c$ and $\widehat{V}\in\C{2}{0}(0,\infty)$ is determined by $\widehat{V}(x)=\kappa-x$, for all $x\in(0,\infty)$, if Condition~(\ref{eqpropSec3:Cond4}) holds; 
\item
$z_*=0$ and $\widehat{V}\in\C{0}{0}(0,\infty)\cap\C{2}{0}(0,\kappa+c)$ is determined by (\ref{eqSec4:ValFun5}) over $(0,\kappa+c)$ and $\widehat{V}(x)=\kappa-x$, for all $x\in[\kappa+c,\infty)$, if Condition~(\ref{eqpropSec3:Cond5}) holds.
\item
$z_*=\kappa+c$ and $\widehat{V}\in\C{2}{0}(0,\infty)$ is determined by $\widehat{V}(x)=\kappa-x$, for all $x\in(0,\infty)$, if Condition~(\ref{eqpropSec3:Cond6}) holds; and 
\item
$z_*=\kappa+c$ and $\widehat{V}\in\C{2}{0}(0,\infty)$ is determined by $\widehat{V}(x)=\kappa-x$, for all $x\in(0,\infty)$, if Condition~(\ref{eqpropSec3:Cond7}) holds. 
\end{enumerate}
\end{theorem}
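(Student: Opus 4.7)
The plan is a standard verification argument, executed for each of the seven regimes on the same skeleton. Using the reformulation (\ref{eqSec3:OptStopProb_b}), the natural auxiliary process is
\[
M^\tau_t \coloneqq \e^{-(\lambda+r)(t \wedge \hat{\tau}_{\kappa+c})}\widehat{V}\bigl(X_{t \wedge \hat{\tau}_{\kappa+c}}\bigr) + \int_0^{t \wedge \hat{\tau}_{\kappa+c}} \lambda\e^{-(\lambda+r)s}(\kappa - X_s)\,\d s,
\]
and the aim is, in each case, to show that (i) $M^\tau$ is an $\Filt{F}$-supermartingale, so that optional stopping at $\tau \wedge \rho$ combined with $\widehat{V}(x) \geq \kappa - x$ delivers $\widehat{V}(x) \geq J(x,\tau)$ for every $\tau \in \StopTimes$; and (ii) $M^{\check{\tau}_{z_*}}$ is a true martingale, which yields $\widehat{V}(x) = J(x, \check{\tau}_{z_*})$ and identifies $\tau_* = \check{\tau}_{z_*}$ as optimal.

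The substantive case is (a). Here $\widehat{V}$ is $\C{1}{0}$ on $(0, \kappa+c)$---continuity of the derivative at $z_*$ follows from smooth fit (\ref{eqSec3:FreeBndProb_c}) together with the linear form on $(0, z_*]$---but only $\C{2}{0}$ away from $z_*$. To absorb the jump in $\widehat{V}''$ at $z_*$, I would apply the local time-space formula to $M^\tau$; no local time at $\kappa+c$ enters, because the stopped process touches that level only at the isolated instant $\hat{\tau}_{\kappa+c}$. The resulting decomposition produces a martingale plus a Lebesgue integral of $\L_X\widehat{V} - (\lambda+r)\widehat{V} + \lambda(\kappa - \cdot)$ plus a local time term at $z_*$. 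The integrand vanishes on $(z_*, \kappa+c)$ by (\ref{eqSec3:FreeBndProb_a}) and equals $-r\kappa + (r-\mu)x$ on $(0, z_*]$, which is nonpositive because $z_* < r\kappa/(r-\mu)$ by Proposition~\ref{propSec3:SolCond}(a). The local time term at $z_*$ vanishes by smooth fit. Combined with Lemma~\ref{lemSec4:ValFun1LowBnd}, this delivers (i). For (ii), the path of $X$ stays in $[z_*, \kappa+c]$ up to $\check{\tau}_{z_*} \wedge \hat{\tau}_{\kappa+c}$, so only the region where the ODE holds is visited, and the instantaneous-stopping condition (\ref{eqSec3:FreeBndProb_b}) yields $\widehat{V}(X_{\check{\tau}_{z_*} \wedge \hat{\tau}_{\kappa+c}}) = \kappa - X_{\check{\tau}_{z_*} \wedge \hat{\tau}_{\kappa+c}}$, turning the supermartingale inequality into an equality.

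For cases (b), (c), (d), (f), (g) the candidate is $\widehat{V}(x) = \kappa - x$, which is smooth, so ordinary It\^{o} applies directly. The supermartingale requirement reduces to $r\kappa + (\mu - r)x \geq 0$ on $(0, \kappa+c]$, which I would verify regime-by-regime: under (b) and (c) it is equivalent to $x \leq \kappa+c \leq r\kappa/(r-\mu)$; under (d) it is automatic since $\mu \geq r \geq 0$; under (f) both sides vanish; under (g) it reduces to $\mu x \geq 0$. Attainment is immediate, since $\check{\tau}_{\kappa+c} = 0$ $\P_x$-a.s.\ for $x \in (0, \kappa+c]$ gives $J(x, 0) = \kappa - x = \widehat{V}(x)$. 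In case (e), $z_* = 0$ forces $\check{\tau}_0 = \infty$, since the origin is natural for $X$; $\widehat{V}$ is $\C{2}{0}$ on $(0, \kappa+c)$, so plain It\^{o} on the stopped process suffices for the supermartingale step (with Lemma~\ref{lemSec4:ValFun5LowBnd} supplying $\widehat{V} \geq \kappa - \cdot$), while the attainment step is precisely the computation that produced (\ref{eqSec4:ValFun5}).

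The main obstacle is case (a): correctly invoking the local time-space formula to handle the $\C{2}{0}$-irregularity of $\widehat{V}$ at $z_*$, signing the Lebesgue integrand on the stopping region (which crucially uses the location of $z_*$ relative to $r\kappa/(r-\mu)$ guaranteed by Proposition~\ref{propSec3:SolCond}(a)), and confirming that smooth fit annihilates the local time at $z_*$. Once this case is dispatched, the remaining six follow by essentially routine It\^{o} calculations or direct probabilistic evaluations.
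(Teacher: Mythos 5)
Your proposal reproduces the paper's verification argument essentially step-for-step: the same auxiliary process $M^\tau$, the local time-space formula to absorb the kink in $\widehat V''$ at $z_*$ in case~(a), the sign of $(r-\mu)x-r\kappa$ on the stopping region (which uses $z_*<r\kappa/(r-\mu)$ from Proposition~\ref{propSec3:SolCond}(a)), Lemma~\ref{lemSec4:ValFun1LowBnd} and Lemma~\ref{lemSec4:ValFun5LowBnd} for the obstacle inequalities in (a) and (e), and plain It\^o for the linear cases; the only cosmetic difference is that for the attainment step in (a) you argue directly that the ODE region is the only one visited, where the paper cites a stochastic representation theorem for the Dirichlet problem, but these are the same fact. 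Two small points to tighten: optional sampling should be invoked at $\tau$ (equivalently $\tau\wedge\hat{\tau}_{\kappa+c}$) rather than at ``$\tau\wedge\rho$'', since $\rho$ is not an $\Filt{F}$-stopping time and has already been averaged out in the reformulation (\ref{eqSec3:OptStopProb_b}); and before sampling you should record that the stochastic integral term is a uniformly integrable martingale, which the paper gets from the boundedness of $\widehat V'$ on $(0,\kappa+c]$.
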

\begin{proof}
(a):~Suppose Condition~(\ref{eqpropSec3:Cond1}) holds, in which case $z_*\in(0,\kappa+c)$ is the unique solution to (\ref{eqSec3:FreeBndEqn}) and $\widehat{V}\in\C{0}{0}(0,\infty)\cap\C{2}{0}(z_*,\kappa+c)$ is determined by (\ref{eqSec3:FreeBndProbSol}) over $(z_*,\kappa+c)$ and $\widehat{V}(x)=\kappa-x$, for all $x\in(0,z_*]\cup[\kappa+c,\infty)$. Note that $\widehat{V}\in\C{1}{0}(0,\kappa+c]$, since \eqref{eqSec3:FreeBndProb_c} ensures that $\widehat{V}'$ is continuous at $z_*$ and $|\widehat{V}'((\kappa+c)-)|<\infty$, by inspection of \eqref{eqSec3:FreeBndProbSol}. On the other hand, $\widehat{V}\notin\C{2}{0}(0,\kappa+c]$, since $\widehat{V}''(z_*+)>0=\widehat{V}''(z_*-)$, as remarked in the discussion following Lemma~\ref{lemSec4:ValFun1LowBnd}. Consequently, the standard It\^o formula cannot be applied to the process
\begin{equation*}
\R_+\ni t\mapsto\e^{-(\lambda+r)(t\wedge\hat{\tau}_{\kappa+c})}\widehat{V}(X_{t\wedge\hat{\tau}_{\kappa+c}}).
\end{equation*}
However, $\widehat{V}\in\C{2}{0}(0,z_*]\cap\C{2}{0}[z_*,\kappa+c]$, since $|\widehat{V}''(z_*+)|<\infty$ and $|\widehat{V}''((\kappa+c)-)|<\infty$, by inspection of \eqref{eqSec3:FreeBndProbSol}. Hence, we may apply the local time-space formula of \citet{Pes05a} to the above-mentioned process, to get
\begin{equation*}
\begin{split}
\e^{-(\lambda+r)(t\wedge\hat{\tau}_{\kappa+c})}&\widehat{V}(X_{t\wedge\hat{\tau}_{\kappa+c}})=\widehat{V}(X_0)-\int_0^{t\wedge\hat{\tau}_{\kappa+c}}(\lambda+r)\e^{-(\lambda+r)s}\widehat{V}(X_s)\,\d s\\
&+\int_0^{t\wedge\hat{\tau}_{\kappa+c}}\e^{-(\lambda+r)s}\widehat{V}'(X_s)\,\d X_s+\int_0^{t\wedge\hat{\tau}_{\kappa+c}}\ind{\{X_s\neq z_*\}}\e^{-(\lambda+r)s}\frac{1}{2}\widehat{V}''(X_s)\,\d\<X\>_s\\
&+\int_0^{t\wedge\hat{\tau}_{\kappa+c}}\ind{\{X_s= z_*\}}\e^{-(\lambda+r)s}\frac{1}{2}\bigl(\widehat{V}'(X_s+)-\widehat{V}'(X_s-)\bigr)\,\d\ell^{z_*}_s(X),
\end{split}
\end{equation*}
for all $t\geq 0$, where the local time process $\ell^{z_*}(X)$ quantifies the time $X$ spends in the immediate vicinity of $z_*$ \citep[see][Section~3.5]{PS06}. Note that the final term above is zero, since \eqref{eqSec3:FreeBndProb_c} ensures that $\widehat{V}'(z_*+)=\widehat{V}'(z_*-)$, while the identities
\begin{align*}
\widehat{V}(X_s)&=\ind{\{X_s\leq z_*\}}(\kappa-X_s)+\ind{\{X_s>z_*\}}\widehat{V}(X_s),\\
\widehat{V}'(X_s)&=-\ind{\{X_s\leq z_*\}}+\ind{\{X_s>z_*\}}\widehat{V}'(X_s),\\
\intertext{and}
\ind{\{X_s\neq z_*\}}\widehat{V}''(X_s)&=\ind{\{X_s>z_*\}}\widehat{V}''(X_s),
\end{align*}
for all $s\geq 0$, follow from the definition of $\widehat{V}$. We therefore obtain
\begin{equation*}
\begin{split}
&\e^{-(\lambda+r)(t\wedge\hat{\tau}_{\kappa+c})}\widehat{V}(X_{t\wedge\hat{\tau}_{\kappa+c}})\\
&=\widehat{V}(X_0)+\int_0^{t\wedge\hat{\tau}_{\kappa+c}}\ind{\{X_s\leq z_*\}}\e^{-(\lambda+r)s}\bigl(-\mu X_s-(\lambda+r)(\kappa-X_s)\bigr)\,\d s\\
&\hspace{2cm}+\int_0^{t\wedge\hat{\tau}_{\kappa+c}}\ind{\{X_s>z_*\}}\e^{-(\lambda+r)s}\biggl(\mu X_s\widehat{V}'(X_s)+\frac{1}{2}\sigma^2X_s^2\widehat{V}''(X_s)-(\lambda+r)\widehat{V}(X_s)\biggr)\,\d s\\
&\hspace{2cm}+\int_0^{t\wedge\hat{\tau}_{\kappa+c}}\e^{-(\lambda+r)s}\sigma X_s\widehat{V}'(X_s)\,\d B_s\\
&=\widehat{V}(X_0)+\int_0^{t\wedge\hat{\tau}_{\kappa+c}}\ind{\{X_s\leq z_*\}}\e^{-(\lambda+r)s}\bigl((r-\mu)X_s-r\kappa-\lambda(\kappa-X_s)\bigr)\,\d s\\
&\hspace{2cm}+\int_0^{t\wedge\hat{\tau}_{\kappa+c}}\ind{\{X_s>z_*\}}\e^{-(\lambda+r)s}\bigl(\L_X\widehat{V}(X_s)-(\lambda+r)\widehat{V}(X_s)\bigr)\,\d s\\
&\hspace{2cm}+\int_0^{t\wedge\hat{\tau}_{\kappa+c}}\e^{-(\lambda+r)s}\sigma X_s\widehat{V}'(X_s)\,\d B_s\\
&\leq\widehat{V}(X_0)-\int_0^{t\wedge\hat{\tau}_{\kappa+c}}\lambda\e^{-(\lambda+r)s}(\kappa-X_s)\,\d s+\int_0^{t\wedge\hat{\tau}_{\kappa+c}}\e^{-(\lambda+r)s}\sigma X_s\widehat{V}'(X_s)\,\d B_s,
\end{split}
\end{equation*}
for all $t\geq 0$, where the inequality follows from
\begin{equation*}
\ind{\{X_s\leq z_*\}}\bigl((r-\mu)X_s-r\kappa-\lambda(\kappa-X_s)\bigr)\leq-\ind{\{X_s\leq z_*\}}\lambda(\kappa-X_s),
\end{equation*}
for all $s\geq 0$, since Proposition~\ref{propSec3:SolCond}(a) established that $z_*<\sfrac{r\kappa}{(r-\mu)}$, together with
\begin{equation*}
\ind{\{X_s>z_*\}}\bigl(\L_X\widehat{V}(X_s)-(\lambda+r)\widehat{V}(X_s)\bigr)=-\ind{\{X_s>z_*\}}\lambda(\kappa-X_s),
\end{equation*}
for all $s\geq 0$, since $\widehat{V}$ satisfies \eqref{eqSec3:FreeBndProb_a} over $(z_*,\kappa+c)$. Next, observe that $\widehat{V}'$ is bounded over $(0,\kappa+c]$, since $\widehat{V}\in\C{1}{0}(0,\kappa+c]$ and $\widehat{V}'(0+)=-1$. This is sufficient to ensure that the local martingale
\begin{equation*}
\R_+\ni t\mapsto\int_0^{t\wedge\hat{\tau}_{\kappa+c}}\e^{-(\lambda+r)s}\sigma X_s\widehat{V}'(X_s)\,\d B_s
\end{equation*}
is in fact a uniformly integrable martingale. Given an arbitrary stopping time $\tau\in\StopTimes$, an application of the optional sampling theorem then yields
\begin{equation*}
\begin{split}
\widehat{V}(x)&\geq\E_x\biggl(\int_0^{\tau\wedge\hat{\tau}_{\kappa+c}}\lambda\e^{-(\lambda+r)s}(\kappa-X_s)\,\d s+\e^{-(\lambda+r)(\tau\wedge\hat{\tau}_{\kappa+c})}\widehat{V}\bigl(X_{\tau\wedge\hat{\tau}_{\kappa+c}}\bigr)\biggr)\\
&\geq\E_x\biggl(\int_0^{\tau\wedge\hat{\tau}_{\kappa+c}}\lambda\e^{-(\lambda+r)t}(\kappa-X_t)\,\d t+\e^{-(\lambda+r)(\tau\wedge\hat{\tau}_{\kappa+c})}\bigl(\kappa-X_{\tau\wedge\hat{\tau}_{\kappa+c}}\bigr)\biggr)
=J(x,\tau),
\end{split}
\end{equation*}
for all $x\in(0,\infty)$, since Lemma~\ref{lemSec4:ValFun1LowBnd} established that $\widehat{V}(x)\geq\kappa-x$. This implies that $\widehat{V}(x)\geq\widetilde{V}(x)$, for all $x\in(0,\infty)$. On the other hand, the function $(0,\infty)\ni x\mapsto J(x,\check{\tau}_{z_*})$ is the unique solution to the Dirichlet problem with data \eqref{eqSec3:FreeBndProb_a} and \eqref{eqSec3:FreeBndProb_b}, due to the stochastic representation theorem for the solutions to Dirichlet problems \citep[for a precise formulation of the relevant result applicable to our setting, see][Theorem~1]{VAW05}. Consequently, $\widehat{V}(x)=J(x,\check{\tau}_{z_*})\leq\widetilde{V}(x)$, for all $x\in(0,\infty)$, since $\widehat{V}$ satisfies \eqref{eqSec3:FreeBndProb_a} and \eqref{eqSec3:FreeBndProb_b} by construction.
\vspace{2mm}\newline\noindent
(b):~Suppose Condition~(\ref{eqpropSec3:Cond2}) holds, in which case $z_*=\kappa+c$ and $\widehat{V}\in\C{2}{0}(0,\infty)$ is determined by $\widehat{V}(x)=\kappa-x$, for all $x\in(0,\infty)$. It\^o's formula then gives
\begin{equation}
\label{eqthmSec4:VerThm}
\begin{split}
\e^{-(\lambda+r)(t\wedge\hat{\tau}_{\kappa+c})}&(\kappa-X_{t\wedge\hat{\tau}_{\kappa+c}})=\e^{-(\lambda+r)(t\wedge\hat{\tau}_{\kappa+c})}\widehat{V}(X_{t\wedge\hat{\tau}_{\kappa+c}})\\
&=\widehat{V}(X_0)-\int_0^{t\wedge\hat{\tau}_{\kappa+c}}(\lambda+r)\e^{-(\lambda+r)s}\widehat{V}(X_s)\,\d s\\
&\hspace{1cm}+\int_0^{t\wedge\hat{\tau}_{\kappa+c}}\e^{-(\lambda+r)s}\widehat{V}'(X_s)\,\d X_s+\int_0^{t\wedge\hat{\tau}_{\kappa+c}}\e^{-(\lambda+r)s}\frac{1}{2}\widehat{V}''(X_s)\,\d\<X\>_s\\
&=\widehat{V}(X_0)-\int_0^{t\wedge\hat{\tau}_{\kappa+c}}\e^{-(\lambda+r)s}\bigl(\L_X\widehat{V}(X_s)-(\lambda+r)\widehat{V}(X_s)\bigr)\,\d s\\
&\hspace{7cm}-\int_0^{t\wedge\hat{\tau}_{\kappa+c}}\e^{-(\lambda+r)s}\sigma X_s\,\d B_s\\
&\leq\widehat{V}(X_0)-\int_0^{t\wedge\hat{\tau}_{\kappa+c}}\lambda\e^{-(\lambda+r)s}(\kappa-X_s)\,\d s-\int_0^{t\wedge\hat{\tau}_{\kappa+c}}\e^{-(\lambda+r)s}\sigma X_s\,\d B_s,
\end{split}
\end{equation}
for all $t\geq 0$. To justify the inequality above, note that
\begin{equation*}
0<r-\mu=r-\frac{rc}{\kappa+c}=\frac{r\kappa}{\kappa+c},
\end{equation*}
since $0<\sfrac{rc}{(\kappa+c)}=\mu<r$, by assumption. Consequently,
\begin{equation*}
\begin{split}
\L_X\widehat{V}(x)-(\lambda+r)\widehat{V}(x)=-\mu x-(\lambda+r)(\kappa-x)
&=(r-\mu)x-r\kappa-\lambda(\kappa-x)\\
&\leq(r-\mu)(\kappa+c)-r\kappa-\lambda(\kappa-x)\\
&=-\lambda(\kappa-x),
\end{split}
\end{equation*}
for all $x\in(0,\kappa+c]$. Note that the process
\begin{equation*}
\R_+\ni t\mapsto\int_0^{t\wedge\hat{\tau}_{\kappa+c}}\e^{-(\lambda+r)s}\sigma X_s\,\d B_s
\end{equation*}
is a uniformly integrable martingale. Given an arbitrary stopping time $\tau\in\StopTimes$, an application of the optional stopping theorem then gives
\begin{equation*}
\widehat{V}(x)\geq\E_x\biggl(\int_0^{\tau\wedge\hat{\tau}_{\kappa+c}}\lambda\e^{-(\lambda+r)s}(\kappa-X_s)\,\d s+\e^{-(\lambda+r)(\tau\wedge\hat{\tau}_{\kappa+c})}(\kappa-X_{\tau\wedge\hat{\tau}_{\kappa+c}})\biggr)
=J(x,\tau),
\end{equation*}
for all $x\in(0,\infty)$. This implies that $\widehat{V}(x)\geq\widetilde{V}(x)$, for all $x\in(0,\infty)$. On the other hand, $\widehat{V}(x)=J(x,\hat{\tau}_{\kappa+c})\leq\widetilde{V}(x)$, for all $x\in(0,\infty)$.
\vspace{2mm}\newline\noindent
(c):~Suppose Condition~(\ref{eqpropSec3:Cond3}) holds, in which case $z_*=\kappa+c$ and $\widehat{V}\in\C{2}{0}(0,\infty)$ is determined by $\widehat{V}(x)=\kappa-x$, for all $x\in(0,\infty)$. The proof is identical to that of Part~(b), except that in this case the inequality in \eqref{eqthmSec4:VerThm} follows from
\begin{equation*}
\begin{split}
\L_X\widehat{V}(x)-(\lambda+r)\widehat{V}(x)=-\mu x-(\lambda+r)(\kappa-x)
&=(r-\mu)x-r\kappa-\lambda(\kappa-x)\\
&\leq(r-\mu)(\kappa+c)-r\kappa-\lambda(\kappa-x)\\
&<-\lambda(\kappa-x),
\end{split}
\end{equation*}
for all $x\in(0,\kappa+c]$, since
\begin{equation*}
0<r-\mu<r-\frac{rc}{\kappa+c}=\frac{r\kappa}{\kappa+c},
\end{equation*}
by virtue of the assumption that $0<\sfrac{rc}{(\kappa+c)}<\mu<r$.
\vspace{2mm}\newline\noindent
(d):~Suppose Condition~(\ref{eqpropSec3:Cond4}) holds, in which case $z_*=\kappa+c$ and $\widehat{V}\in\C{2}{0}(0,\infty)$ is determined by $\widehat{V}(x)=\kappa-x$, for all $x\in(0,\infty)$. The proof is identical to that of Part~(b), except that in this case the inequality in \eqref{eqthmSec4:VerThm} follows from
\begin{equation*}
\begin{split}
\L_X\widehat{V}(x)-(\lambda+r)\widehat{V}(x)=-\mu x-(\lambda+r)(\kappa-x)
&=(r-\mu)x-r\kappa-\lambda(\kappa-x)\\
&\leq-r\kappa-\lambda(\kappa-x)<-\lambda(\kappa-x),
\end{split}
\end{equation*}
for all $x\in(0,\infty)$, by virtue of the assumption that $\mu\geq r$.
\vspace{2mm}\newline\noindent
(e):~Suppose Condition~(\ref{eqpropSec3:Cond5}) holds, in which case $z_*=0$ and $\widehat{V}\in\C{0}{0}(0,\infty)\cap\C{2}{0}(0,\kappa+c)$ is determined by \eqref{eqSec4:ValFun5} over $(0,\kappa+c)$ and $\widehat{V}(x)=\kappa-x$, for all $x\in[\kappa+c,\infty)$. Note that $\widehat{V}\in\C{2}{0}(0,\kappa+c]$ by inspection of \eqref{eqSec4:ValFun5}, since $\psi_\lambda\in\C{2}{0}(0,\kappa+c]$. We may therefore apply It\^o's formula to get
\begin{equation*}
\begin{split}
\e^{-\lambda(t\wedge\hat{\tau}_{\kappa+c})}&\widehat{V}(X_{t\wedge\hat{\tau}_{\kappa+c}})=\widehat{V}(X_0)-\int_0^{t\wedge\hat{\tau}_{\kappa+c}}\lambda\e^{-\lambda s}\widehat{V}(X_s)\,\d s+\int_0^{t\wedge\hat{\tau}_{\kappa+c}}\e^{-\lambda s}\widehat{V}'(X_s)\,\d X_s\\
&\hspace{8cm}+\int_0^{t\wedge\hat{\tau}_{\kappa+c}}\e^{-\lambda s}\frac{1}{2}\widehat{V}''(X_s)\,\d\<X\>_s\\
&=\widehat{V}(X_0)+\int_0^{t\wedge\hat{\tau}_{\kappa+c}}\e^{-\lambda s}\bigl(\L_X\widehat{V}(X_s)-\lambda\widehat{V}(X_s)\bigr)\,\d s+\int_0^{t\wedge\hat{\tau}_{\kappa+c}}\e^{-\lambda s}\sigma X_s\widehat{V}'(X_s)\d B_s\\
&=\widehat{V}(X_0)-\int_0^{t\wedge\hat{\tau}_{\kappa+c}}\lambda\e^{-\lambda s}(\kappa-X_s)\,\d s+\int_0^{t\wedge\hat{\tau}_{\kappa+c}}\e^{-\lambda s}\sigma X_s\widehat{V}'(X_s)\d B_s,
\end{split}
\end{equation*}
for all $t\geq 0$, where the final equality follows from
\begin{equation*}
\begin{split}
\L_X\widehat{V}(x)&=\L_X\biggl(\kappa-\frac{\lambda x}{\lambda-\mu}\biggr)+\frac{\mu(\kappa+c)}{\lambda-\mu}\frac{\L_X\psi_\lambda(x)}{\psi_\lambda(\kappa+c)}\\
&=-\frac{\lambda\mu x}{\lambda-\mu}+\frac{\mu(\kappa+c)}{\lambda-\mu}\frac{\lambda\psi_\lambda(x)}{\psi_\lambda(\kappa+c)}\\
&=\lambda\biggl(\kappa-\frac{\lambda x}{\lambda-\mu}+\frac{\mu(\kappa+c)}{\lambda-\mu}\frac{\psi_\lambda(x)}{\psi_\lambda(\kappa+c)}\biggr)-\lambda\biggl(\kappa-\frac{\lambda x}{\lambda-\mu}+\frac{\mu x}{\lambda-\mu}\biggr)\\
&=\lambda\widehat{V}(x)-\lambda(\kappa-x),
\end{split}
\end{equation*}
for all $x\in(0,\kappa+c)$, by virtue of \eqref{eqSec4:ValFun5} and the fact that $\psi_\lambda$ satisfies \eqref{eqSec2:ODE}. Next, observe that \eqref{eqSec4:ValFun5} gives $\widehat{V}'(0+)=-\sfrac{\lambda}{(\lambda-\mu)}$, since $\psi_\lambda'(0+)=0$ follows from \eqref{eqSec2:PhiPsi} and \eqref{eqSec2:PhiPrimePsiPrime}. Together with the previously established fact that $\widehat{V}\in\C{2}{0}(0,\kappa+c]$, this ensures that $\widehat{V}'$ is bounded on $(0,\kappa+c]$, from which it follows that the local martingale
\begin{equation*}
\R_+\ni t\mapsto\int_0^{t\wedge\hat{\tau}_{\kappa+c}}\e^{-\lambda s}\sigma X_s\widehat{V}'(X_s)\d B_s
\end{equation*}
is in fact a uniformly integrable martingale. Given an arbitrary stopping time $\tau\in\StopTimes$, an application of the optional sampling theorem then yields
\begin{equation*}
\begin{split}
\widehat{V}(x)&=\E_x\biggl(\int_0^{\tau\wedge\hat{\tau}_{\kappa+c}}\lambda\e^{-\lambda s}(\kappa-X_s)\,\d s+\e^{-\lambda(\tau\wedge\hat{\tau}_{\kappa+c})}\widehat{V}\bigl(X_{\tau\wedge\hat{\tau}_{\kappa+c}}\bigr)\biggr)\\
&\geq\E_x\biggl(\int_0^{\tau\wedge\hat{\tau}_{\kappa+c}}\lambda\e^{-\lambda s}(\kappa-X_s)\,\d s+\e^{-\lambda(\tau\wedge\hat{\tau}_{\kappa+c})}\bigl(\kappa-X_{\tau\wedge\hat{\tau}_{\kappa+c}}\bigr)\biggr)
=J(x,\tau),
\end{split}
\end{equation*}
for all $x\in(0,\infty)$, since Lemma~\ref{lemSec4:ValFun5LowBnd} established that $\widehat{V}(x)\geq\kappa-x$. This implies that $\widehat{V}(x)\geq\widetilde{V}(x)$, for all $x\in(0,\infty)$. On the other hand, $\widehat{V}(x)\coloneqq J(x,\check{\tau}_0)\leq\widetilde{V}(x)$, for all $x\in(0,\infty)$.
\vspace{2mm}\newline\noindent
(f):~Suppose Condition~(\ref{eqpropSec3:Cond6}) holds, in which case $z_*=\kappa+c$ and $\widehat{V}\in\C{2}{0}(0,\infty)$ is determined by $\widehat{V}(x)=\kappa-x$, for all $x\in(0,\infty)$. It\^o's formula then gives
\begin{equation*}
\begin{split}
\e^{-\lambda(t\wedge\hat{\tau}_{\kappa+c})}&(\kappa-X_{t\wedge\hat{\tau}_{\kappa+c}})=\e^{-\lambda(t\wedge\hat{\tau}_{\kappa+c})}\widehat{V}(X_{t\wedge\hat{\tau}_{\kappa+c}})\\
&=\widehat{V}(X_0)-\int_0^{t\wedge\hat{\tau}_{\kappa+c}}\lambda\e^{-\lambda s}\widehat{V}(X_s)\,\d s+\int_0^{t\wedge\hat{\tau}_{\kappa+c}}\e^{-\lambda s}\widehat{V}'(X_s)\,\d X_s\\
&\hspace{8cm}+\int_0^{t\wedge\hat{\tau}_{\kappa+c}}\e^{-\lambda s}\frac{1}{2}\widehat{V}''(X_s)\,\d\<X\>_s\\
&=\widehat{V}(X_0)-\int_0^{t\wedge\hat{\tau}_{\kappa+c}}\e^{-\lambda s}\bigl(\L_X\widehat{V}(X_s)-\lambda\widehat{V}(X_s)\bigr)\,\d s-\int_0^{t\wedge\hat{\tau}_{\kappa+c}}\e^{-\lambda s}\sigma X_s\,\d B_s\\
&=\widehat{V}(X_0)-\int_0^{t\wedge\hat{\tau}_{\kappa+c}}\lambda\e^{-\lambda s}(\kappa-X_s)\,\d s-\int_0^{t\wedge\hat{\tau}_{\kappa+c}}\e^{-\lambda s}\sigma X_s\,\d B_s,
\end{split}
\end{equation*}
for all $t\geq 0$, since $\L_X\widehat{V}(x)=-\mu x=0$, for all $x\in(0,\infty)$, by virtue of the assumption that $\mu=0$. Note that the local martingale
\begin{equation*}
\R_+\ni t\mapsto\int_0^{t\wedge\hat{\tau}_{\kappa+c}}\e^{-\lambda s}\sigma X_s\,\d B_s
\end{equation*}
is in fact a uniformly integrable martingale. Given an arbitrary stopping time $\tau\in\StopTimes$, an application of the optional sampling theorem then yields
\begin{equation*}
\widehat{V}(x)=\E_x\biggl(\int_0^{t\wedge\hat{\tau}_{\kappa+c}}\lambda\e^{-\lambda s}(\kappa-X_s)\,\d s+\e^{-\lambda(t\wedge\hat{\tau}_{\kappa+c})}(\kappa-X_{t\wedge\hat{\tau}_{\kappa+c}})\biggr)
=J(x,\tau),
\end{equation*}
for all $x\in(0,\infty)$. This ensures that $\widehat{V}(x)=\widetilde{V}(x)$, for all $x\in(0,\infty)$.
\vspace{2mm}\newline\noindent
(g):~Suppose Condition~(\ref{eqpropSec3:Cond7}) holds, in which case $z_*=\kappa+c$ and $\widehat{V}\in\C{2}{0}(0,\infty)$ is determined by $\widehat{V}(x)=\kappa-x$, for all $x\in(0,\infty)$. It\^o's formula then gives
\begin{equation*}
\begin{split}
\e^{-\lambda(t\wedge\hat{\tau}_{\kappa+c})}&(\kappa-X_{t\wedge\hat{\tau}_{\kappa+c}})=\e^{-\lambda(t\wedge\hat{\tau}_{\kappa+c})}\widehat{V}(X_{t\wedge\hat{\tau}_{\kappa+c}})\\
&=\widehat{V}(X_0)-\int_0^{t\wedge\hat{\tau}_{\kappa+c}}\lambda\e^{-\lambda s}\widehat{V}(X_s)\,\d s+\int_0^{t\wedge\hat{\tau}_{\kappa+c}}\e^{-\lambda s}\widehat{V}'(X_s)\,\d X_s\\
&\hspace{8cm}+\int_0^{t\wedge\hat{\tau}_{\kappa+c}}\e^{-\lambda s}\frac{1}{2}\widehat{V}''(X_s)\,\d\<X\>_s\\
&=\widehat{V}(X_0)-\int_0^{t\wedge\hat{\tau}_{\kappa+c}}\e^{-\lambda s}\bigl(\L_X\widehat{V}(X_s)-\lambda\widehat{V}(X_s)\bigr)\,\d s-\int_0^{t\wedge\hat{\tau}_{\kappa+c}}\e^{-\lambda s}\sigma X_s\,\d B_s\\
&\leq\widehat{V}(X_0)-\int_0^{t\wedge\hat{\tau}_{\kappa+c}}\lambda\e^{-\lambda s}(\kappa-X_s)\,\d s-\int_0^{t\wedge\hat{\tau}_{\kappa+c}}\e^{-\lambda s}\sigma X_s\,\d B_s,
\end{split}
\end{equation*}
for all $t\geq 0$, since $\L_X\widehat{V}(x)=-\mu x<0$, for all $x\in(0,\infty)$, by virtue of the assumption that $\mu>0$. Note that the local martingale
\begin{equation*}
\R_+\ni t\mapsto\int_0^{t\wedge\hat{\tau}_{\kappa+c}}\e^{-\lambda s}\sigma X_s\,\d B_s
\end{equation*}
is in fact a uniformly integrable martingale. Given an arbitrary stopping time $\tau\in\StopTimes$, an application of the optional sampling theorem then yields
\begin{equation*}
\widehat{V}(x)\leq\E_x\biggl(\int_0^{t\wedge\hat{\tau}_{\kappa+c}}\lambda\e^{-\lambda s}(\kappa-X_s)\,\d s+\e^{-\lambda(t\wedge\hat{\tau}_{\kappa+c})}(\kappa-X_{t\wedge\hat{\tau}_{\kappa+c}})\biggr)
=J(x,\tau),
\end{equation*}
for all $x\in(0,\infty)$. This implies that $\widehat{V}(x)\geq\widetilde{V}(x)$, for all $x\in(0,\infty)$. On the other hand, $\widehat{V}(x)\coloneqq J(x,\check{\tau}_{\kappa+c})\leq\widetilde{V}(x)$, for all $x\in(0,\infty)$.
\end{proof}
\section{The Solution to the Short Selling Problem}
\label{Sec5}
\subsection{The case when the discount rate is non-zero}
As observed in Section~\ref{Sec3}, the optimal close-out time $\tau_*\in\StopTimes$ and the value function $V\in\C{0}{0}(0,\infty)$ for the original short selling problem \eqref{eqSec2:ShortSellProb} are determined by the solution to Problem~\eqref{eqSec3:OptStopProb}, with $\kappa$ set equal to the the initial stock price. The solution to that problem is presented in Theorem~\ref{thmSec4:VerThm}.

We begin by considering the case when $r>0$. Given $x\in(0,\infty)$, suppose \eqref{eqSec3:FreeBndEqn}, with $\kappa\coloneqq x$, possesses a solution in the interval $(0,x)$. That is to say, the free-boundary equation
\begin{equation}
\label{eqSec5:FreeBndEqn}
\begin{split}
&\bigl(x-z_*-\widehat{v}(z_*)|_{\kappa=x}\bigr)\frac{\phi_{\lambda+r}(x+c)\psi_{\lambda+r}'(z_*)-\phi_{\lambda+r}'(z_*)\psi_{\lambda+r}(x+c)}{\phi_{\lambda+r}(x+c)\psi_{\lambda+r}(z_*)-\phi_{\lambda+r}(z_*)\psi_{\lambda+r}(x+c)}\\
&\hspace{2cm}+\bigl(c+\widehat{v}(x+c)|_{\kappa=x}\bigr)\frac{\phi_{\lambda+r}(z_*)\psi_{\lambda+r}'(z_*)-\phi_{\lambda+r}'(z_*)\psi_{\lambda+r}(z_*)}{\phi_{\lambda+r}(x+c)\psi_{\lambda+r}(z_*)-\phi_{\lambda+r}(z_*)\psi_{\lambda+r}(x+c)}\\
&\hspace{2cm}+\widehat{v}\,'(z_*)|_{\kappa=x}=-1.
\end{split}
\end{equation}
admits a solution $z_*\in(0,x)$. The optimal close-out time is then $\tau_*=\check{\tau}_{z_*}$ and the value function satisfies \eqref{eqSec3:FreeBndProbSol}, with $\kappa\coloneqq x$. In other words,
\begin{equation}
\label{eqSec5:ValFunc1}
\begin{split}
V(x)&=\widehat{V}(x)|_{\kappa=x}\\
&=\bigl(x-z_*-\widehat{v}(z_*)|_{\kappa=x}\bigr)\frac{\phi_{\lambda+r}(x+c)\psi_{\lambda+r}(x)-\phi_{\lambda+r}(x)\psi_{\lambda+r}(x+c)}{\phi_{\lambda+r}(x+c)\psi_{\lambda+r}(z_*)-\phi_{\lambda+r}(z_*)\psi_{\lambda+r}(x+c)}\\
&\hspace{2cm}+\bigl(\widehat{v}(x+c)|_{\kappa=x}+c\bigr)\frac{\phi_{\lambda+r}(z_*)\psi_{\lambda+r}(x)-\phi_{\lambda+r}(x)\psi_{\lambda+r}(z_*)}{\phi_{\lambda+r}(x+c)\psi_{\lambda+r}(z_*)-\phi_{\lambda+r}(z_*)\psi_{\lambda+r}(x+c)}\\
&\hspace{2cm}+\widehat{v}(x)|_{\kappa=x}.
\end{split}
\end{equation}
On the other hand, if \eqref{eqSec5:FreeBndEqn} does not possess a solution in the interval $(0,x)$, the optimal repurchase price is $z_*\coloneqq x$, which implies that the optimal close-out time is $\tau_*=\check{\tau}_{z_*}=0$ and the value function satisfies $V(x)=0$.

\subsection{The case when the discount rate is zero}
Next, we consider the situation when $r=0$. If $\mu<0$, the optimal repurchase price is $z_*\coloneqq 0$, which implies that the optimal close-out time is $\tau_*=\check{\tau}_{z_*}=\infty$ and the value function is obtained from \eqref{eqSec4:ValFun5}, as follows:
\begin{equation}
\label{eqSec5:ValFunc2}
V(x)=\widehat{V}(x)|_{\kappa=x}
=\frac{\mu(x+c)}{\lambda-\mu}\frac{\psi_\lambda(x)}{\psi_\lambda(x+c)}-\frac{\mu x}{\lambda-\mu},
\end{equation}
for all $x\in(0,\infty)$. On the other hand, if $\mu\geq 0$, the optimal repurchase price is $z_*\coloneqq x$, for all $x\in(0,\infty)$, in which case the optimal close-out time is $\tau_*=\check{\tau}_{z_*}=0$ and the value function satisfies $V(x)=0$.

\subsection{The unconstrained problem when the discount rate is non-zero}
To assess the impact of the collateral constraint and the possibility of recall on Problem~\eqref{eqSec2:ShortSellProb}, it is useful to compare the optimal close-out time and value function for that problem with the corresponding data for the unconstrained short selling problem. That problem is obtained by letting $c\uparrow\infty$ and $\lambda\downarrow 0$, so that $\zeta=\infty$ and $\rho=\infty$ $\P_x$-a.s., for all $x\in(0,\infty)$. Problem~\eqref{eqSec2:ShortSellProb} then reduces to
\begin{equation}
\label{eqSec5:ShortSellProb}
V(x)\coloneqq\sup_{\tau\in\StopTimes}\E_x\bigl(\e^{-r\tau}(x-X_\tau)\bigr),	
\end{equation}
for all $x\in(0,\infty)$.

We begin by considering the situation when $r>0$. Letting $c\uparrow\infty$ and $\lambda\downarrow 0$ in \eqref{eqSec5:FreeBndEqn} gives
\begin{equation*}
(x-z_*)\frac{\phi_r'(z_*)}{\phi(z_*)}=1,
\end{equation*}
for all $x\in(0,\infty)$, by virtue of the limits
\begin{equation}
\label{eqSec5:Limits}
\lim_{\lambda\downarrow 0}\hat{v}(x)|_{\kappa=x}=0,
\qquad
\lim_{c\uparrow\infty}\lim_{\lambda\downarrow 0}\phi_{\lambda+r}(x+c)=0
\qquad\text{and}\qquad
\lim_{c\uparrow\infty}\lim_{\lambda\downarrow 0}\frac{c}{\psi_{\lambda+r}(x+c)}=0,
\end{equation}
which follow by inspection of \eqref{eqSec3:ODEPartSol} and \eqref{eqSec2:PhiPsi}. Using those limits again, the previous identity provides the following explicit formula for the optimal close-out price for the unconstrained short selling problem:
\begin{equation}
\label{eqSec5:NoConsFreeBnd}
z_*=\frac{\sqrt{\nu^2+\frac{2r}{\sigma^2}}+\nu}{1+\sqrt{\nu^2+\frac{2r}{\sigma^2}}+\nu}x,
\end{equation}
for all $x\in(0,\infty)$. Note that $z_*\in(0,x)$, for all $x\in(0,\infty)$, which implies the optimal close-out time $\tau_*=\check{\tau}_{z_*}$ is both strictly positive and finite. In other words, immediate close-out and waiting forever are both suboptimal for the unconstrained short selling problem, if the discount rate is non-zero. Finally, letting $c\uparrow\infty$ and $\lambda\downarrow 0$ in \eqref{eqSec5:ValFunc1} gives the following expression for the value function for the unconstrained short selling problem:
\begin{equation}
\label{eqSec5:NoConsValFunc1}
V(x)=(x-z_*)\frac{\phi_r(x)}{\phi_r(z_*)},
\end{equation}
for all $x\in(0,\infty)$, by virtue of the limits in \eqref{eqSec5:Limits}, where $z_*$ is determined by \eqref{eqSec5:NoConsFreeBnd}.

\subsection{The unconstrained problem when the discount rate is zero}
To analyse the unconstrained short selling problem \eqref{eqSec5:ShortSellProb} in the case when $r=0$, we first recall the following facts about the geometric Brownian motion determined by \eqref{eqSec2:SDE}:
\begin{align*}
\text{if}\qquad\mu<\frac{1}{2}\sigma^2\qquad\text{then}&\qquad X_{\infty-}=0\qquad\text{and}\qquad\sup_{t\geq 0}X_t<\infty,\\
\text{if}\qquad\mu=\frac{1}{2}\sigma^2\qquad\text{then}&\qquad\inf_{t\geq 0}X_t=0\qquad\text{and}\qquad\sup_{t\geq 0}X_t=\infty,\\
\intertext{and}
\text{if}\qquad\mu>\frac{1}{2}\sigma^2\qquad\text{then}&\qquad X_{\infty-}=\infty\qquad\text{and}\qquad\inf_{t\geq 0}X_t>0
\end{align*}
\citep[see][Exercise~5.5.31]{KS91}. Hence, if $\mu<\sfrac{\sigma^2}{2}$, it follows that the optimal repurchase price is $z_*=0$, which implies that the optimal close-out time is $\tau_*=\check{\tau}_{z_*}=\infty$ and the value function is given by $V(x)=x$, for all $x\in(0,\infty)$. In other words, since the stock price eventually converges to zero in this case, and there is no penalty for waiting, it is optimal to wait forever. On the other hand, if $\mu=\sfrac{\sigma^2}{2}$ then $\check{\tau}_\varepsilon<\infty$, for all $\varepsilon>0$. In that case, the value function is given by $V(x)=x$, for all $x\in(0,\infty)$, since
\begin{equation*}
V(x)\geq\sup_{\varepsilon>0}\E_x(x-X_{\check{\tau}_\varepsilon})
=\sup_{\varepsilon>0}(x-\varepsilon)=x,
\end{equation*}
while $V(x)\leq x$ follows immediately from \eqref{eqSec5:ShortSellProb}. There is, however, no optimal close-out policy in this case. That is to say, since the stock price gets arbitrarily close to zero, but does not converge to zero, every close-out policy can be improved upon. Finally, we consider the situation when $\mu>\sfrac{\sigma^2}{2}$, in which case $X$ is a submartingale with terminal value $X_\infty\coloneqq X_{\infty-}=\infty$. It follows from the optional sampling theorem that $\E_x(X_\tau)\geq x$, for all $x\in(0,\infty)$ and each $\tau\in\StopTimes$. Hence, given an initial stock price $x\in(0,\infty)$, the optimal repurchase price is $z_*=x$ and the optimal close-out time is $\tau_*=\check{\tau}_{z_*}=0$, while the value function is given by $V(x)=0$, for all $x\in(0,\infty)$. In this case, the drift rate of the stock price is high enough for immediate close-out to be optimal.
\section{Comparative Static Analysis}
\label{Sec6}
\subsection{Methdology and parameter values}
This section analyses the dependence of the optimal repurchase price and value function for the constrained short selling problem \eqref{eqSec2:ShortSellProb} on the model parameters. To assess the impact of margin risk and recall risk, we compare the solution to the constrained problem with the solution to the unconstrained problem \eqref{eqSec5:ShortSellProb}. The solid red curves in the figures below illustrate the dependence of the optimal repurchase price and value function for the constrained short selling problem on one particular parameter, with the remaining parameters assigned fixed default values. The dashed blue curves plot the dependence of the optimal repurchase price and value function for the unconstrained problem on the same parameter.

The default parameter values are $\mu=\pm 0.02$, $\sigma=0.3$, $r=0.05$, $\lambda=0.01$, $c=\$50.00$ and $x=\$1.00$, and time is measured in years.\footnote{Note the values of $c$ and $\lambda$ are not relevant in the unconstrained case.} As we shall see, the solution to the constrained short selling problem is very sensitive to the sign of the drift rate of the stock price, which is why we consider the modest negative and positive drift scenarios $\mu=-0.02$ and $\mu=0.02$. The default volatility $\sigma=0.3$ is a reasonable proxy for observed equity implied volatilities, while the default discount rate $r=0.05$ corresponds roughly to the cost of borrowing in a developed market. The default recall intensity $\lambda=0.01$ implies that 1\% of stock loans are recalled per year, on average, which is much lower than the observed frequency.\footnote{In the broker data studied by \citet{D'A02}, around 2\% of the stocks on loan were recalled per month.}  Finally, the default values $c=\$50.00$ and $x=\$1.00$ for the collateral budget and the initial stock price mean that after selling the stock for $\$1.00$, the trader eventually runs out of collateral when the stock price reaches $\$51.00$. The default recall intensity and collateral budget were chosen to be conservative, to avoid overstating the impact of margin risk and recall risk on the short seller's problem.

Before we analyse the figures below in detail, we observe that the optimal repurchase price for the constrained short selling problem exceeds the optimal repurchase price for the unconstrained problem, in each case. In other words, the trader always chooses to close out earlier, when confronted with the possibility of involuntary close-out due to collateral exhaustion or early recall, than he would otherwise. This is because a more conservative strategy for the constrained problem reduces the likelihood of forced close-out, which usually results in a loss. Similarly, we observe that the value function for the unconstrained short selling problem dominates the value function for the constrained problem, in each case. The vertical distance between the two curves represents the loss in value due to the short selling constraints.

\subsection{The impact of a change in the drift rate of the stock price}
Figure~\ref{figSec6:mu} plots the optimal repurchase price and value function for the constrained and unconstrained short selling problems, as functions of the drift rate of the stock price. In Figure~\ref{figSec6:mu}\subref{figSec6:z*(mu)} we observe that the constrained and unconstrained optimal repurchase prices are very similar when the drift rate is negative, since forced close-out due to collateral exhaustion is unlikely if the stock price is expected to decline over time. However, the optimal close-out policies for the constrained and unconstrained short selling problems diverge rapidly as the drift rate of the stock price increases beyond zero. In particular, immediate repurchase is optimal ($z_*=\$1.00=x$) when $\mu>0.03$, in the case of the constrained problem (which means the trader will not sell the stock short in the first place), while immediate close-out is never optimal for the unconstrained problem. This is because there is a high likelihood that the constrained trader will run out of collateral and be forced to close-out in a loss-making position, if the drift rate of the stock price is that high, while the unconstrained trader has the luxury of waiting for the stock price to fall. To illustrate how dramatic the difference between the constrained and unconstrained close-out strategies can be when the drift rate of the stock price is positive, we observe that the optimal repurchase price for the constrained problem is $z_*=\$1.00$ when $\mu=0.05$, while the optimal repurchase price for the unconstrained problem is approximately $z_*=\$0.53$.

\begin{figure}
\centering
\mbox{
\subfigure[]{\includegraphics[scale=0.6]{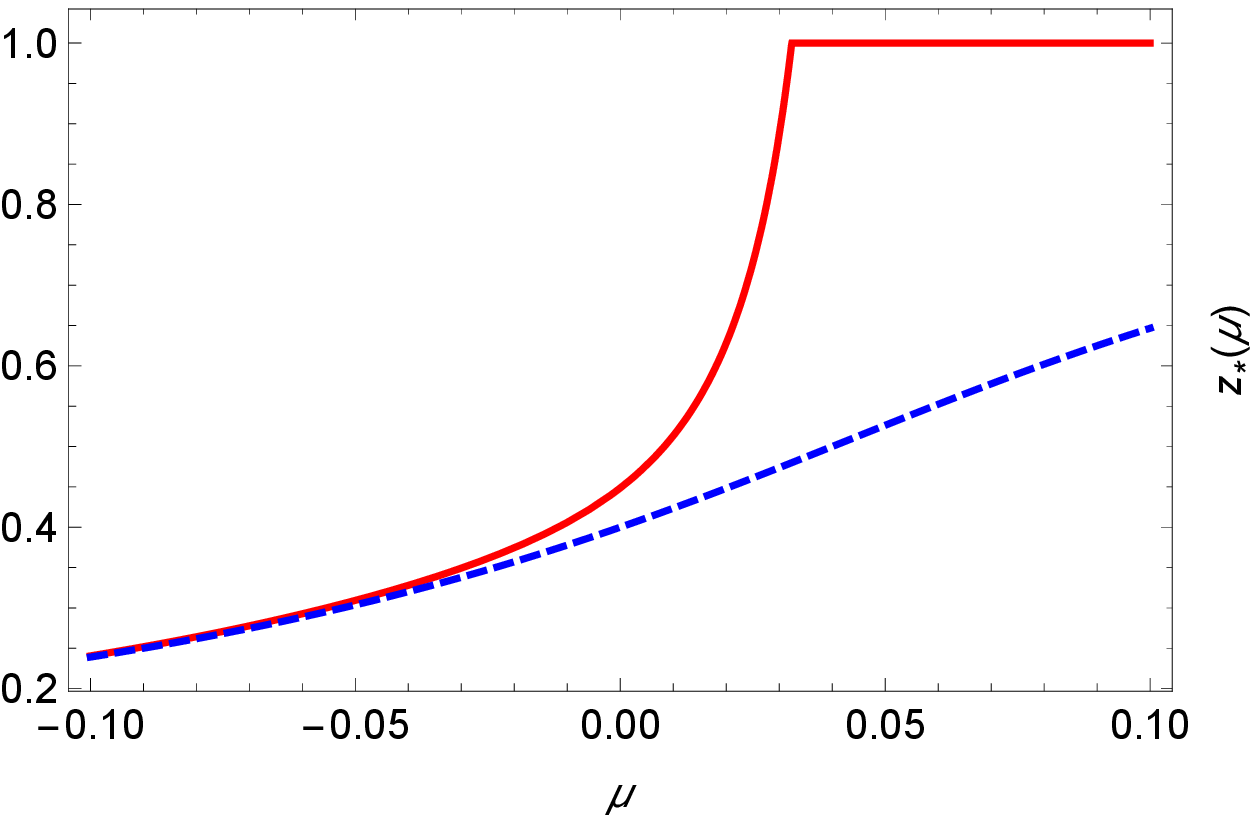}
\label{figSec6:z*(mu)}}
\subfigure[]{\includegraphics[scale=0.6]{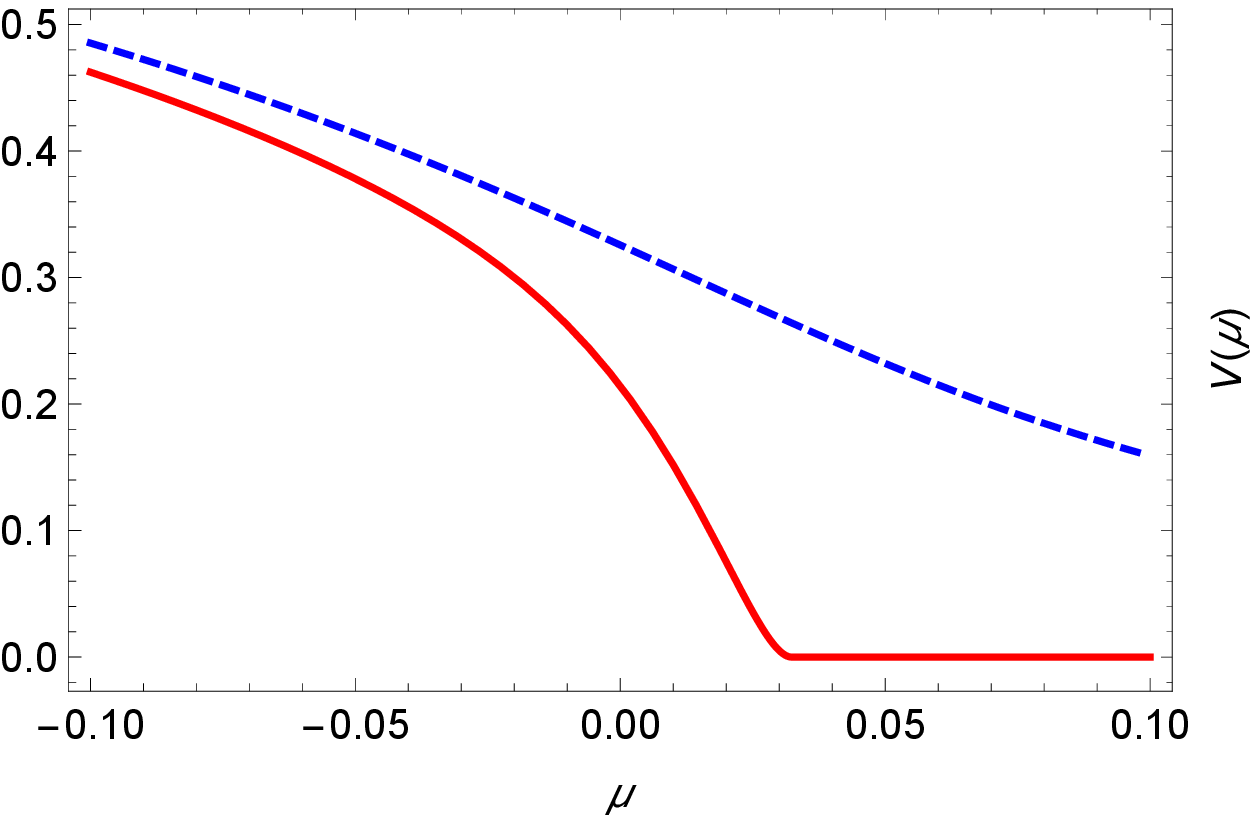}
\label{figSec6:V(mu)}}
}
\caption{The dependence of the optimal close-out price and the value function on the drift rate of the stock price, for the constrained (solid red curve) and unconstrained (dashed blue curve) short selling problems.}
\label{figSec6:mu}
\end{figure}

As expected, Figure~\ref{figSec6:mu}\subref{figSec6:V(mu)} shows that the loss in value due to margin risk and recall risk is relatively small when the drift rate of the stock price is negative. Since the stock price is expected decline over time in that case, the collateral constraint is unlikely to bind. Hence, the only real effect is due to the likelihood of recall (which does not depend on the drift rate). By contrast, the likelihood of forced close-out due to collateral exhaustion has a substantial impact on the constrained value function as the drift rate of the stock price increases beyond zero, resulting in a large loss of value relative to the unconstrained problem. In particular, the value of the constrained short sale is zero when $\mu>0.03$, since the trader will not sell the stock short in the first place, while the value of the unconstrained short sale is always strictly positive. For example, the constrained short sale is valueless when $\mu=0.05$, while the value of the unconstrained short sale is approximately $\$0.23$.

Figure~\ref{figSec6:mu} reveals that the constrained short selling problem is much more sensitive to the drift rate of the stock price than the unconstrained problem. For example, the optimal repurchase price for the constrained problem ranges between $\$0.24$ and $\$1.00$ in Figure~\ref{figSec6:mu}\subref{figSec6:z*(mu)}, while the optimal repurchase price for the unconstrained problem ranges between $\$0.24$ and $\$0.65$. Similarly, the value of the constrained short sale in Figure~\ref{figSec6:mu}\subref{figSec6:V(mu)} ranges between $\$0.00$ and $\$0.46$, while the value of the unconstrained short sale ranges between $\$0.16$ and $\$0.49$.

The enhanced sensitivity of the solution to the constrained short selling problem to the drift rate of the stock price is a source of fragility. Since estimates of the drift rate are accompanied by large standard errors, in practice, the trader is likely to misestimate it substantially, causing him to pursue a materially suboptimal close-out strategy. In detail, suppose the trader has observed the stock price at $n\in\N$ regular intervals over the period $[0,T]$, for some $T>0$. The sequence of observed prices is thus $(S_{i\Delta t})_{0\leq i\leq n}$, where $\Delta t\coloneqq\sfrac{T}{n}$. Based on those observations, the maximum likelihood estimator $\hat{\mu}$ of the drift rate is determined by
\begin{align*}
\hat{\mu}-\frac{1}{2}\sigma^2=\frac{1}{n\Delta t}\sum_{i=1}^n\ln\frac{S_{i\Delta t}}{S_{(i-1)\Delta t}}
&=\frac{1}{T}\sum_{i=1}^n\biggl(\biggl(\mu-\frac{1}{2}\sigma^2\biggr)\Delta t+\sigma\bigl(B_{i\Delta t}-B_{(i-1)\Delta t}\bigr)\biggr)\\
&=\mu-\frac{1}{2}\sigma^2+\frac{\sigma}{T}B_T
\end{align*}
\cite[see][Section~9.3.2]{CLM97}, whence
\begin{equation*}
\hat{\mu}=\mu+\frac{\sigma}{T}B_T\sim\mathcal{N}\biggl(\mu,\frac{\sigma^2}{T}\biggr).
\end{equation*}
Hence, if the drift rate and volatility of the stock price are $\mu=0.04$ and $\sigma=0.3$, respectively, and if the trader has 100 years of price data, the probability that he will estimate a non-positive drift rate is
\begin{equation*}
\P(\hat{\mu}\leq 0)=\P\biggl(\mu+\frac{\sigma}{\sqrt{T}}Z\leq 0\biggr)
=\P\biggl(Z\leq-\frac{\mu}{\sigma}\sqrt{T}\biggr)=\P\biggl(Z\leq-\frac{4}{3}\biggr)\approx 0.0912,
\end{equation*}
where $Z\sim\mathcal{N}(0,1)$. With reference to Figure~\ref{figSec6:mu}\subref{figSec6:z*(mu)}, this implies that there is a $9\%$ chance the trader will sell the stock short and maintain the position until its price reaches some level below $\$0.45$ (the optimal repurchase price for the constrained short sale when $\mu=0$), instead of correctly recognising that the drift rate is too high to sell it short in the first place.\footnote{A possible refinement to our model would be to incorporate parameter uncertainty and learning into the trader's close-out decision, along the lines followed by \citet{EL11}. They considered an investor whose choice of when to liquidate a pre-existing stock holding is confounded by uncertainty about the drift rate of the stock price. In their model, the investor updates his prior belief about the drift rate by observing the stock price over time. They showed that the investor's optimal strategy is to liquidate as soon as the stock price falls below a certain time-dependent boundary.}

\subsection{The impact of a change in the volatility of the stock price}
The dependence of the optimal repurchase price and value function on the volatility of the stock price, for the constrained and unconstrained short selling problems, is illustrated by Figure~\ref{figSec6:sigma_muneg}, in the case when the stock price has negative drift. Figure~\ref{figSec6:sigma_muneg}\subref{figSec6:z*(sigma)_muneg} shows that the optimal repurchase price for both problems is inversely related to volatility. In essence, a higher volatility increases the trader's incentive to wait for the stock price to reach a lower level before closing out, since it increases the likelihood that a lower level will be reached quickly. This is analogous to the situation with American puts, where higher volatilities correspond to a lower early exercise boundaries.\footnote{To justify this analogy, note that the constrained short selling problem \eqref{eqSec2:ShortSellProb} is similar to the pricing problem for an up-and-out at-the-money perpetual American put, while the unconstrained short selling problem \eqref{eqSec5:ShortSellProb} is similar to the pricing problem for a vanilla at-the-money perpetual American put.} However, the optimal repurchase prices for the two problems diverge as the stock price volatility increases, since a higher volatility increases the likelihood that the collateral boundary will be breached before the short sale can be closed out profitably, in the constrained case.

\begin{figure}
\centering
\mbox{
\subfigure[]{\includegraphics[scale=0.6]{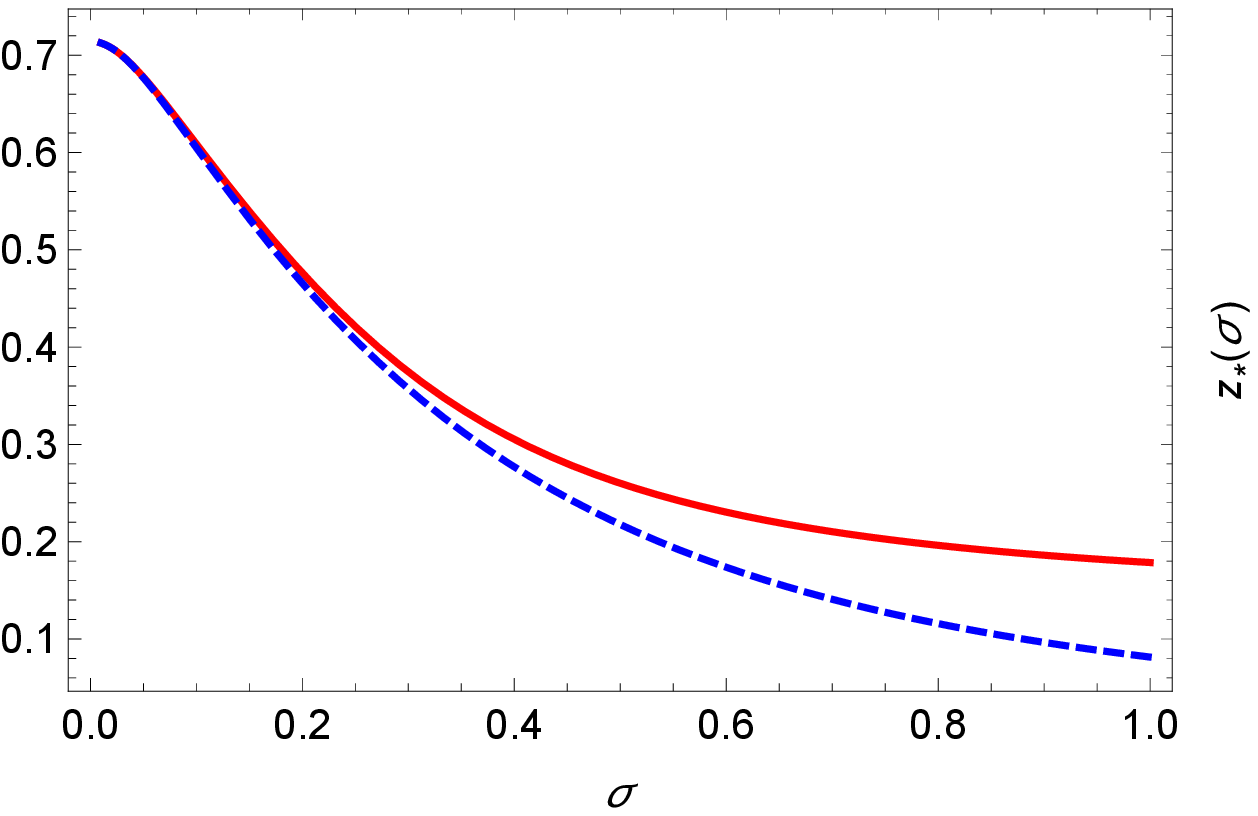}
\label{figSec6:z*(sigma)_muneg}}
\subfigure[]{\includegraphics[scale=0.6]{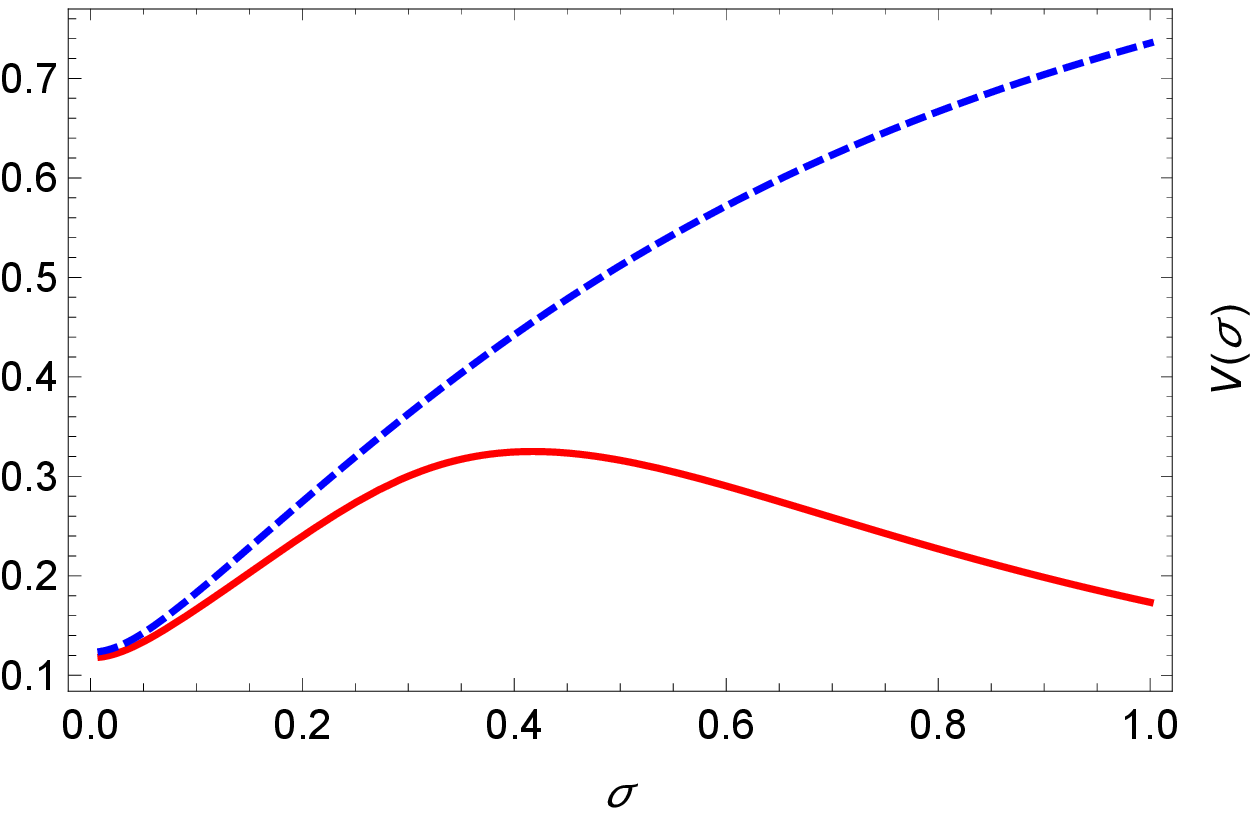}
\label{figSec6:V(sigma)_muneg}}
}
\caption{The dependence of the optimal close-out price and the value function on the volatility of the stock price, for the constrained (solid red curve) and unconstrained (dashed blue curve) short selling problems, when the drift rate of the stock price is negative.}
\label{figSec6:sigma_muneg}
\end{figure}

In Figure~\ref{figSec6:sigma_muneg}\subref{figSec6:V(sigma)_muneg} we see that the value function for the unconstrained short selling problem increases monotonically as the volatility of the stock price increases. This is analogous to the positive dependence of American option prices on the volatilities of their underlying assets \citep[see][]{Eks04}. By contrast, the value function for the constrained short selling problem initially increases as the volatility of the stock price increases, before subsequently decreasing. This reflects a tradeoff, where a higher volatility increases the likelihood of the optimal repurchase price being reached quickly, while simultaneously increasing the likelihood that the trader will run out of collateral. The former effect dominates when the volatility of the stock price is low, while the latter effect dominates when it is high. The prices of knock-out barrier options exhibit a similar non-monotonic dependence on the the volatilities of their underlying assets, for the same reason \citep[see][]{DK96}.\footnote{\citet{DK96} express this nicely by observing that ``the owner of a barrier option is long volatility at the strike [\,\ldots] and short volatility at an out barrier.''}

Figure~\ref{figSec6:sigma_mupos} illustrates the dependence of the optimal repurchase price and value function on the volatility of the stock price, for the constrained and unconstrained short selling problems, when the drift rate of the stock price is positive. Once again, the optimal repurchase price for both problems is decreasing with respect to volatility, as is evident from Figure~\ref{figSec6:sigma_mupos}\subref{figSec6:z*(sigma)_mupos}. However, volatility drives a larger wedge between the optimal close-out strategies for the two problems, than is the case when the drift rate is negative. For example, the difference between the constrained and unconstrained optimal repurchase prices is approximately $\$0.26-\$0.22=\$0.04$ when $\sigma=0.5$, in the negative drift scenario, while it is around $\$0.52-\$0.25=\$0.27$ in the positive drift scenario. This indicates that high volatility levels exacerbate the effect of positive drift, by making it more likely that the constrained trader will run out of collateral, relative to the situation when the drift rate is negative. As a result, the optimal close-out strategy for the constrained short selling problem is more conservative relative to the optimal close-out strategy for the unconstrained problem, when the drift rate is positive.

\begin{figure}
\centering
\mbox{
\subfigure[]{\includegraphics[scale=0.6]{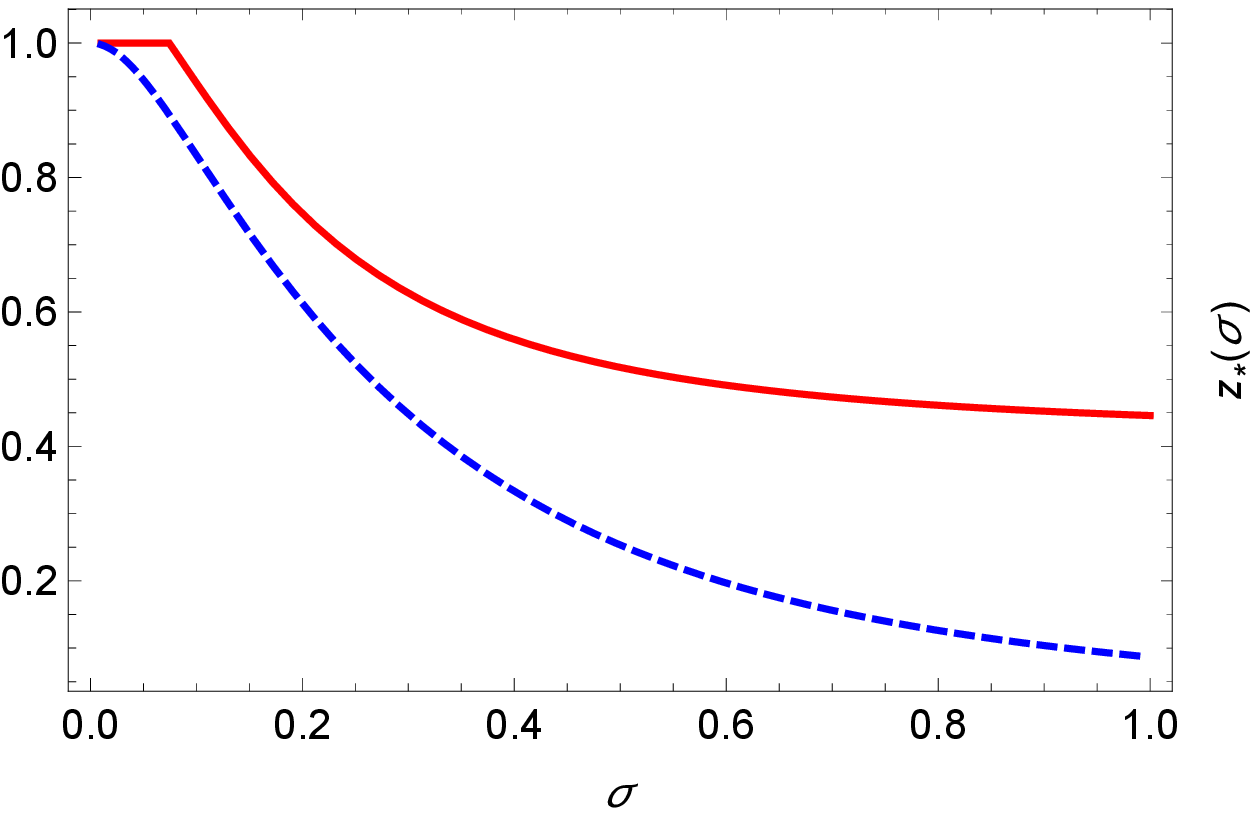}
\label{figSec6:z*(sigma)_mupos}}
\subfigure[]{\includegraphics[scale=0.6]{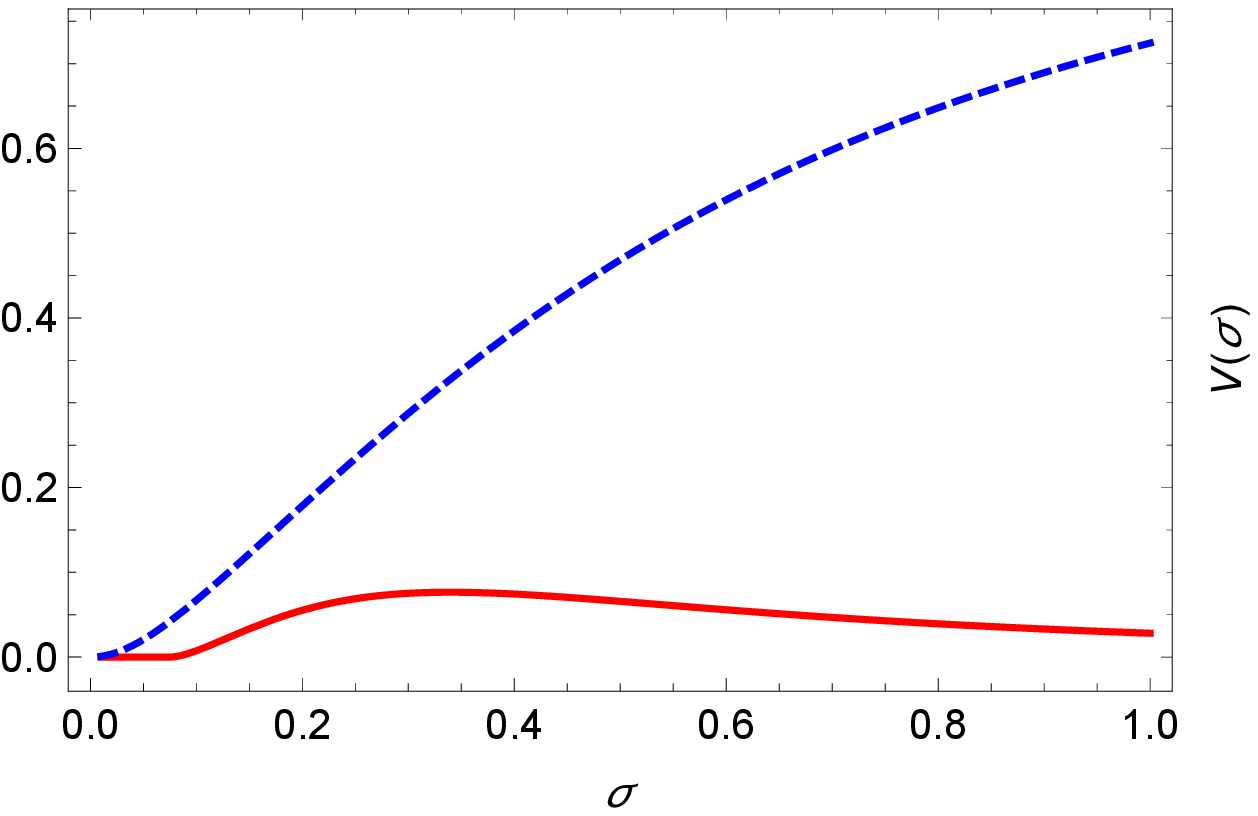}
\label{figSec6:V(sigma)_mupos}}
}
\caption{The dependence of the optimal close-out price and the value function on the volatility of the stock price, for the constrained (solid red curve) and unconstrained (dashed blue curve) short selling problems, when the drift rate of the stock price is positive.}
\label{figSec6:sigma_mupos}
\end{figure}

In the case of the constrained short selling problem, Figure~\ref{figSec6:sigma_mupos}\subref{figSec6:z*(sigma)_mupos} indicates that the position should be closed out immediately ($z_*=\$1.00=x$) when the drift rate of the stock price is positive and its volatility is small. If the stock is sold short under those circumstances, the positive drift rate will dominate the small volatility, causing a large proportion of stock price paths to reach the collateral barrier before the position can be closed out profitably. On the other hand, it is always optimal to wait before closing out the short position in the unconstrained case, even when the drift rate is positive and the volatility is small. However, the optimal repurchase price converges to the initial stock price as the volatility decreases, with immediate close-out being optimal in the zero-volatility limit. This is because the stock price will appreciate deterministically in that case, ruling out any possibility of profitable close-out.

In Figure~\ref{figSec6:sigma_mupos}\subref{figSec6:V(sigma)_mupos} we see that the value function for the constrained short selling problem continues to exhibit a non-monotonic dependence on volatility when the drift rate of the stock price is positive. However, we observe that the value of the constrained short sale is small relative to its value when the drift rate is negative, due to the fact that collateral exhaustion is much more likely to occur in the positive drift scenario. By contrast, the unconstrained short selling problem displays a modest decline in value across all volatilities, when the negative drift scenario is compared with the positive drift scenario. As expected from our analysis of the optimal close-out strategies for the constrained and unconstrained short selling problems, we see that volatility drives a larger wedge between their value functions when the drift rate is positive, than when it is negative. For example, the difference in values for the constrained and unconstrained problems is approximately $\$0.51-\$0.32=\$0.19$ when $\sigma=0.5$, in the negative drift case, while it is around $\$0.47-\$0.07=\$0.40$ in the positive drift case.

\subsection{The impact of a change in the discount rate}
In Figure~\ref{figSec6:r_muneg} we see how the optimal repurchase price and value function depend on the discount rate, for the constrained and unconstrained short selling problems, in the case when the drift rate of the stock price is negative. Figure~\ref{figSec6:r_muneg}\subref{figSec6:z*(r)_muneg} reveals that there is little difference between the optimal close-out strategies for the two problems, irrespective of the discount rate. In both cases, the optimal repurchase price increases monotonically with respect to the discount rate, since a higher discount rate imposes a larger penalty for waiting. Figure~\ref{figSec6:r_muneg}\subref{figSec6:V(r)_muneg} indicates that the unconstrained short sale is considerably more valuable than the constrained short sale when the discount rate is small, but the difference becomes smaller as the discount rate increases. For example, the difference between the value functions is approximately $\$1.00-\$0.60=\$0.40$ when $r=0$, but decreases to around $\$0.26-\$0.23=\$0.03$ when $r=0.1$. The reason is that losses due to collateral exhaustion are more costly, in present value terms, when the discount rate is low than when it is high. Given that it generally takes a long time for the stock price to reach the collateral barrier, a higher discount rate means the loss incurred when the collateral barrier is eventually reached is less significant, in present value terms.

\begin{figure}
\centering
\mbox{
\subfigure[]{\includegraphics[scale=0.6]{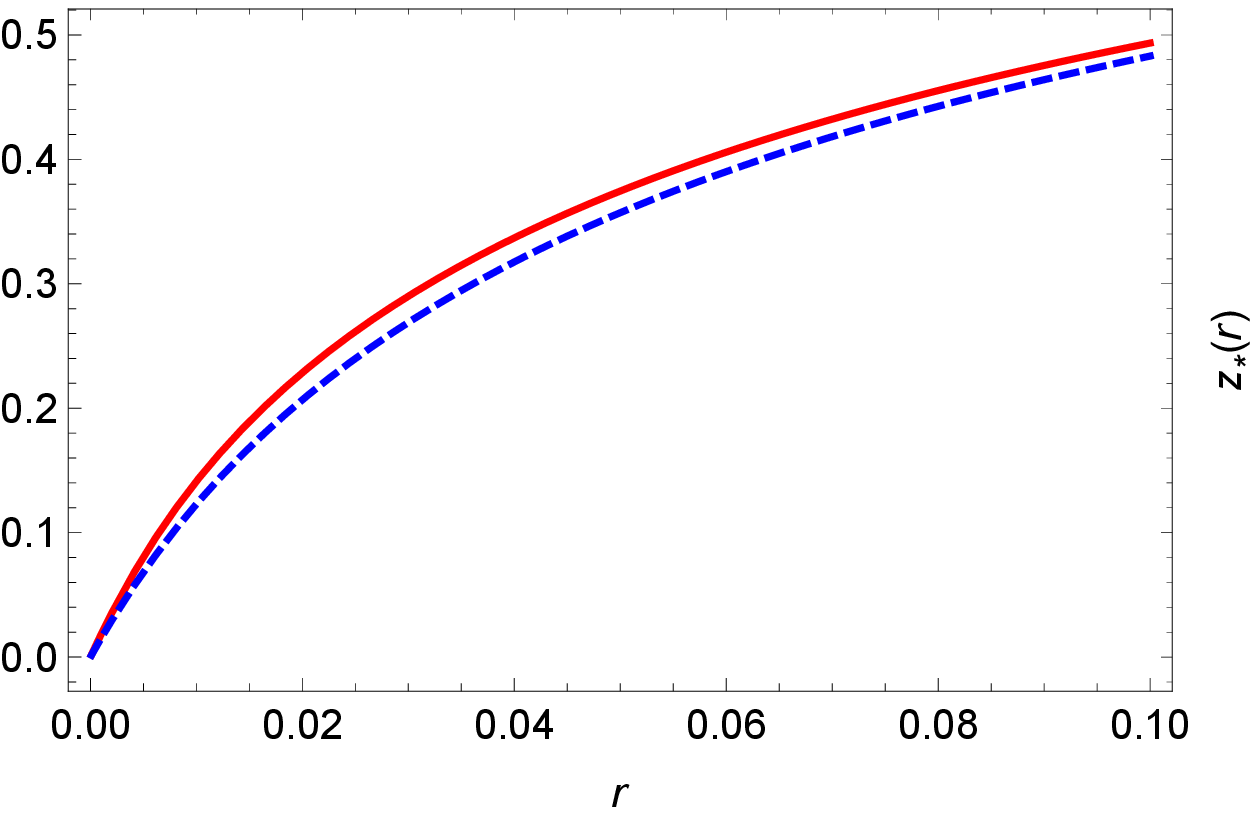}
\label{figSec6:z*(r)_muneg}}
\subfigure[]{\includegraphics[scale=0.6]{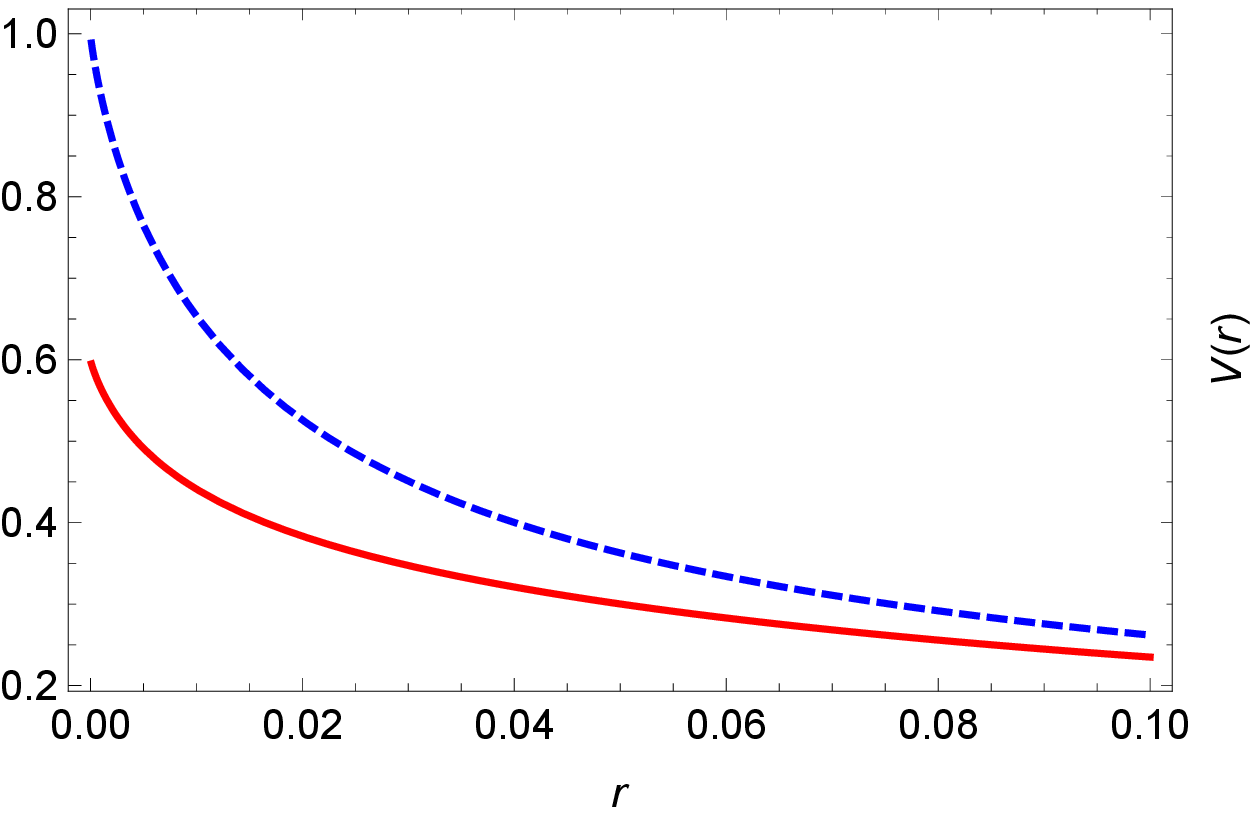}
\label{figSec6:V(r)_muneg}}
}
\caption{The dependence of the optimal close-out price and the value function on the discount rate, for the constrained (solid red curve) and unconstrained (dashed blue curve) short selling problems, when the drift rate of the stock price is negative.}
\label{figSec6:r_muneg}
\end{figure}

Figure~\ref{figSec6:r_mupos} illustrates the dependence of the optimal repurchase price and value function on the discount rate, for the constrained and unconstrained short selling problems, in the case when the drift rate of the stock price is positive. Figure~\ref{figSec6:r_mupos}\subref{figSec6:z*(r)_mupos} shows that the optimal repurchase price for the unconstrained problem increases monotonically with the discount rate, as it does when the stock price has negative drift. However, the behaviour of the optimal repurchase price for the constrained problem is more interesting when the drift rate is positive. When $r\leq 0.03$, immediate close-out is optimal ($z_*=\$1.00=x$), but waiting is optimal when $r>0.03$. Moreover, as the discount rate increases further, the optimal repurchase prices for the constrained and unconstrained short selling problems appear to converge. For example, the difference between the optimal repurchase prices for the two problems is $\$1.00-\$0.00=\$1.00$ when $r=0$, but it is only about $\$0.60-\$0.55=\$0.05$ when $r=0.1$. To explain these features, note that a positive drift rate ensures a high likelihood of running out of collateral before the constrained short sale can be closed out profitably, resulting in a substantial loss. When the discount rate is low, that loss is very significant, in present value terms. However, as the discount rate increases, its present value becomes less significant, since the stock price generally takes a long time to reach the collateral barrier. Ultimately, the loss due to collateral exhaustion becomes so small, in present value terms, that it plays no role in determining the optimal close-out strategy for the constrained short selling problem.

\begin{figure}
\centering
\mbox{
\subfigure[]{\includegraphics[scale=0.6]{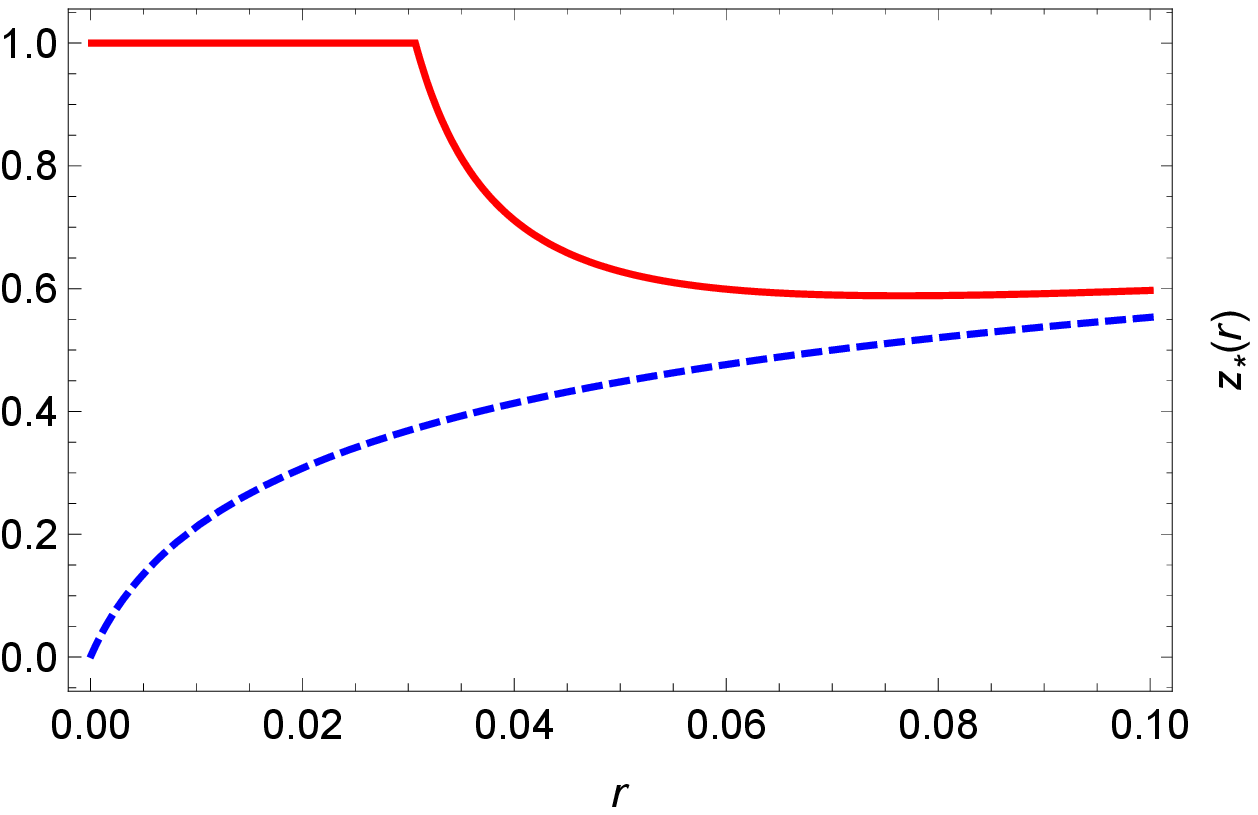}
\label{figSec6:z*(r)_mupos}}
\subfigure[]{\includegraphics[scale=0.6]{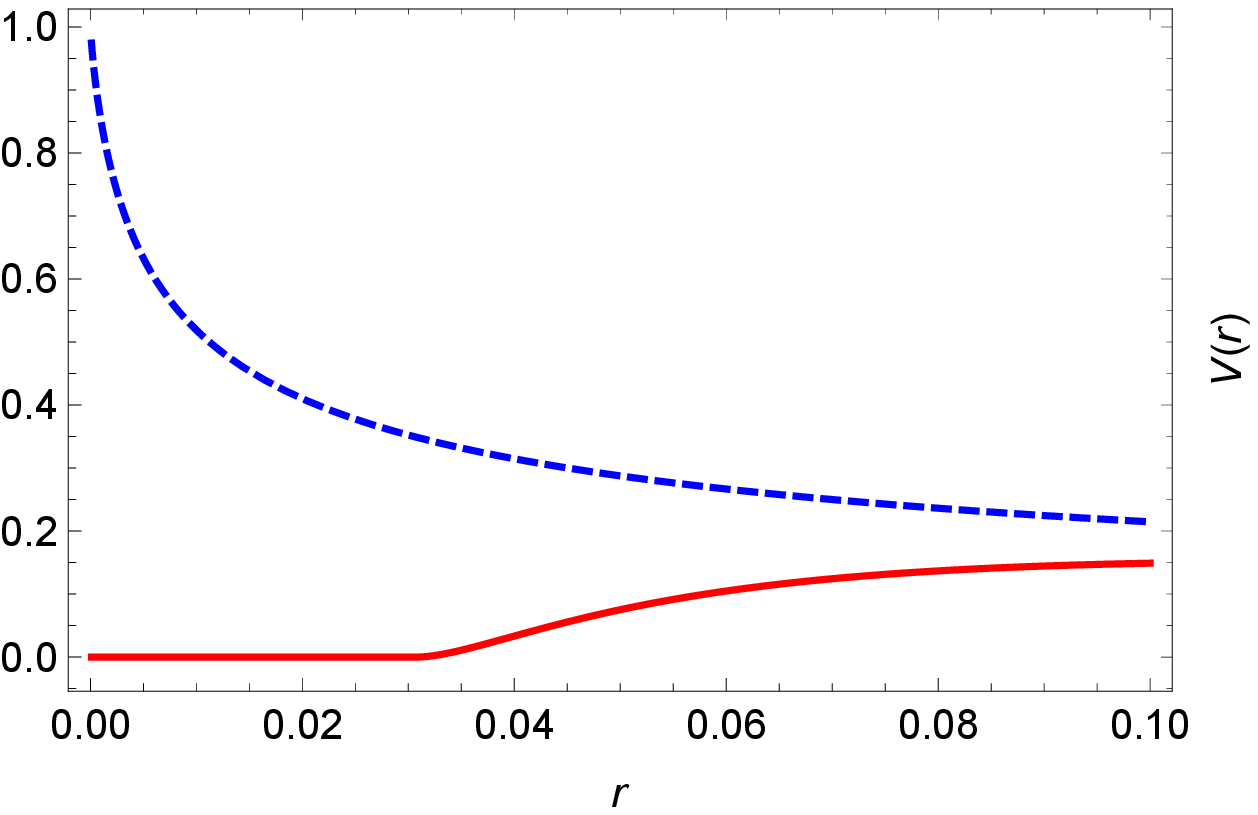}
\label{figSec6:V(r)_mupos}}
}
\caption{The dependence of the optimal close-out price and the value function on the discount rate, for the constrained (solid red curve) and unconstrained (dashed blue curve) short selling problems, when the drift rate of the stock price is positive.}
\label{figSec6:r_mupos}
\end{figure}

In the case of the unconstrained short selling problem, Figure~\ref{figSec6:r_mupos}\subref{figSec6:V(r)_mupos} reveals that the dependence of the value function on the discount rate does not change much when the drift rate of the stock price is positive. However, the value function for the constrained short selling problem behaves very differently with respect to the discount rate, in the positive drift case. In particular, since immediate close-out is optimal when the discount rate is small, the value functions for the constrained and unconstrained problems are significantly different then. But since the optimal close-out policies for the two problems apparently converge as the discount rate increases, their value functions appear to converge too. For example, the difference between the values for the constrained and unconstrained short selling problems is $\$1.00-\$0.00=\$1.00$ when $r=0$, but it is only around $\$0.21-\$0.15=\$0.06$ when $r=0.1$.

\subsection{The impact of a change in the recall intensity}
The optimal repurchase price and value function for the constrained and unconstrained short selling problems are displayed as functions of the recall intensity in Figure~\ref{figSec6:lambda_muneg}, in the case when the drift rate of the stock price is negative. The optimal repurchase price for the unconstrained problem is naturally unaffected by the recall intensity, as is evident in Figure~\ref{figSec6:lambda_muneg}\subref{figSec6:z*(lambda)_muneg}. However, the optimal repurchase price for the constrained problem increases as the recall intensity increases. Essentially, a higher recall intensity increases the likelihood of recall at an inopportune time. Consequently, the optimal repurchase price for the constrained short selling problem becomes more conservative as the recall intensity increases, which drives a wedge between the optimal close-out strategies for the two problems. For example, the difference between the optimal repurchase prices for the two problems is approximately $\$0.36-\$0.36=\$0.00$ when $\lambda=0$, but it increases to around $\$0.45-\$0.36=\$0.09$ when $\lambda=0.1$.

\begin{figure}
\centering
\mbox{
\subfigure[]{\includegraphics[scale=0.6]{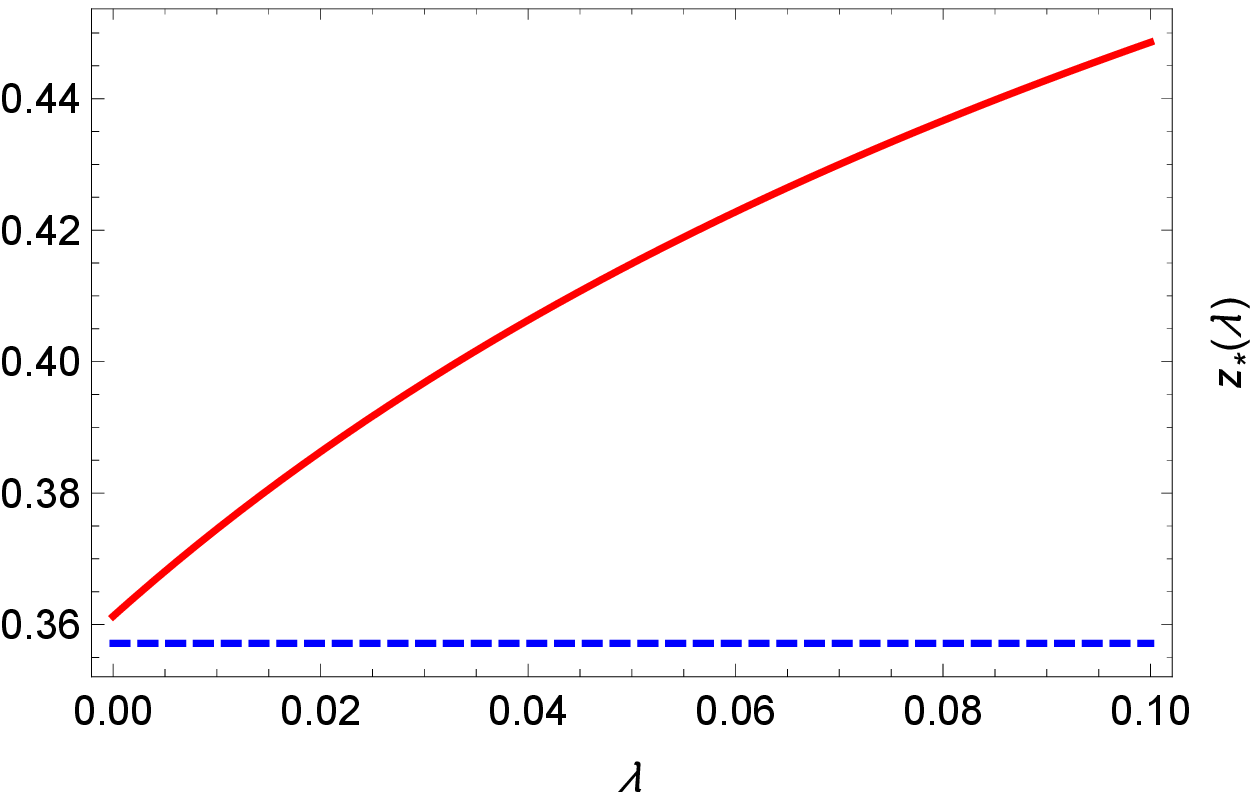}
\label{figSec6:z*(lambda)_muneg}}
\subfigure[]{\includegraphics[scale=0.6]{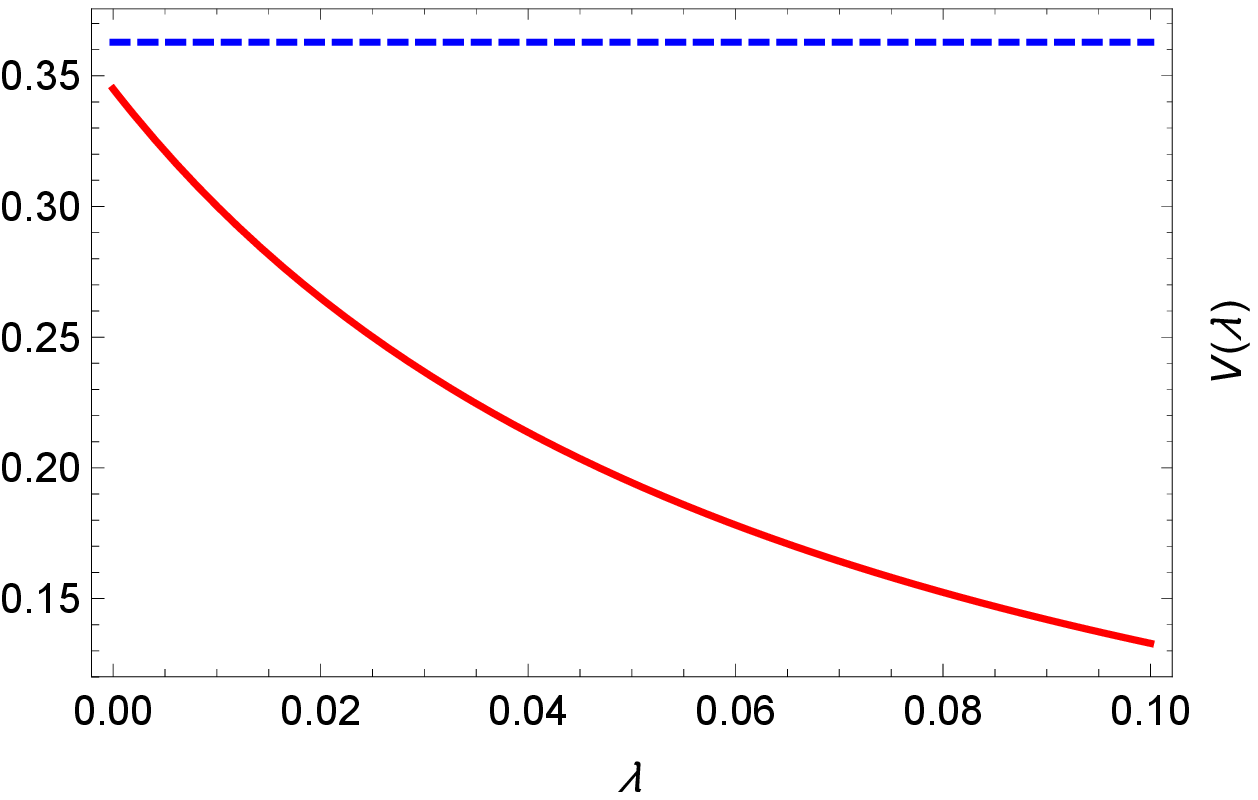}
\label{figSec6:V(lambda)_muneg}}
}
\caption{The dependence of the optimal close-out price and the value function on the recall intensity, for the constrained (solid red curve) and unconstrained (dashed blue curve) short selling problems, when the drift rate of the stock price is negative.}
\label{figSec6:lambda_muneg}
\end{figure}

In Figure~\ref{figSec6:lambda_muneg}\subref{figSec6:V(lambda)_muneg} we see that the value function for the unconstrained short selling problem is unaffected by the recall intensity, as expected, in the negative drift case. However, the value function for the constrained problem is monotonically decreasing with respect to the recall intensity. This reflects the fact that a higher recall intensity forces the trader to adopt a less optimal repurchase price, relative to the unconstrained problem, with a corresponding loss in value. To quantify the loss due to the possibility of recall, when the drift rate of the stock price is negative, we observe that the difference between the value functions for the two problems is approximately $\$0.36-\$0.35=\$0.01$ when $\lambda=0$, and increases to around $\$0.36-\$0.13=\$0.23$ when $\lambda=0.1$.

Figure~\ref{figSec6:lambda_mupos} plots the optimal repurchase price and value function for the constrained and unconstrained short selling problems as functions of the recall intensity, in the case when the drift rate of the stock price is positive. As expected, Figure~\ref{figSec6:lambda_mupos}\subref{figSec6:z*(lambda)_mupos} shows that the optimal repurchase price for the unconstrained problem is invariant with respect to changes in the recall intensity, while the optimal repurchase price for the constrained problem increases monotonically as the recall intensity increases. The difference between the optimal repurchase prices for the two problems is approximately $\$0.58-\$0.45=\$0.13$ when $\lambda=0$, and increases to around $\$0.90-\$0.45=\$0.45$ when $\lambda=0.1$. Since recall is impossible when $\lambda=0$, the difference in that case is entirely attributable to the impact of margin risk.

\begin{figure}
\centering
\mbox{
\subfigure[]{\includegraphics[scale=0.6]{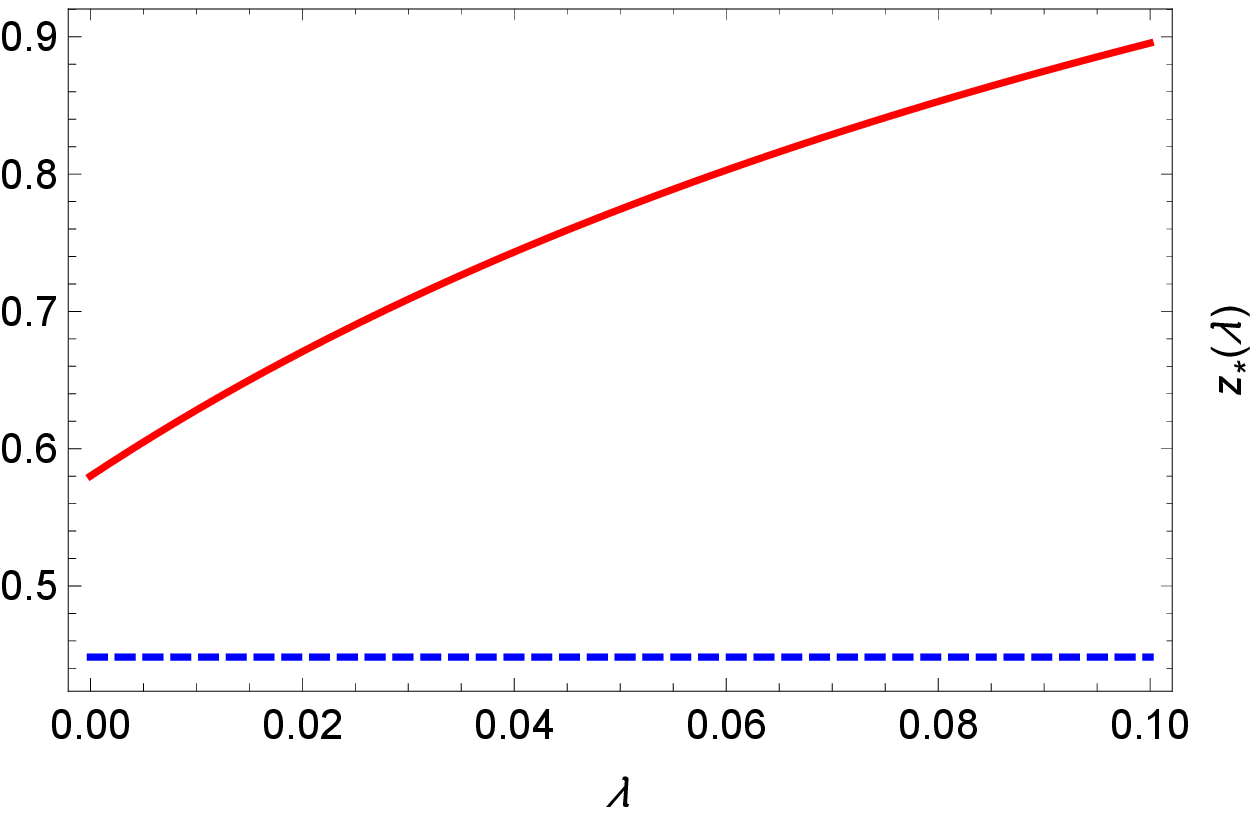}
\label{figSec6:z*(lambda)_mupos}}
\subfigure[]{\includegraphics[scale=0.6]{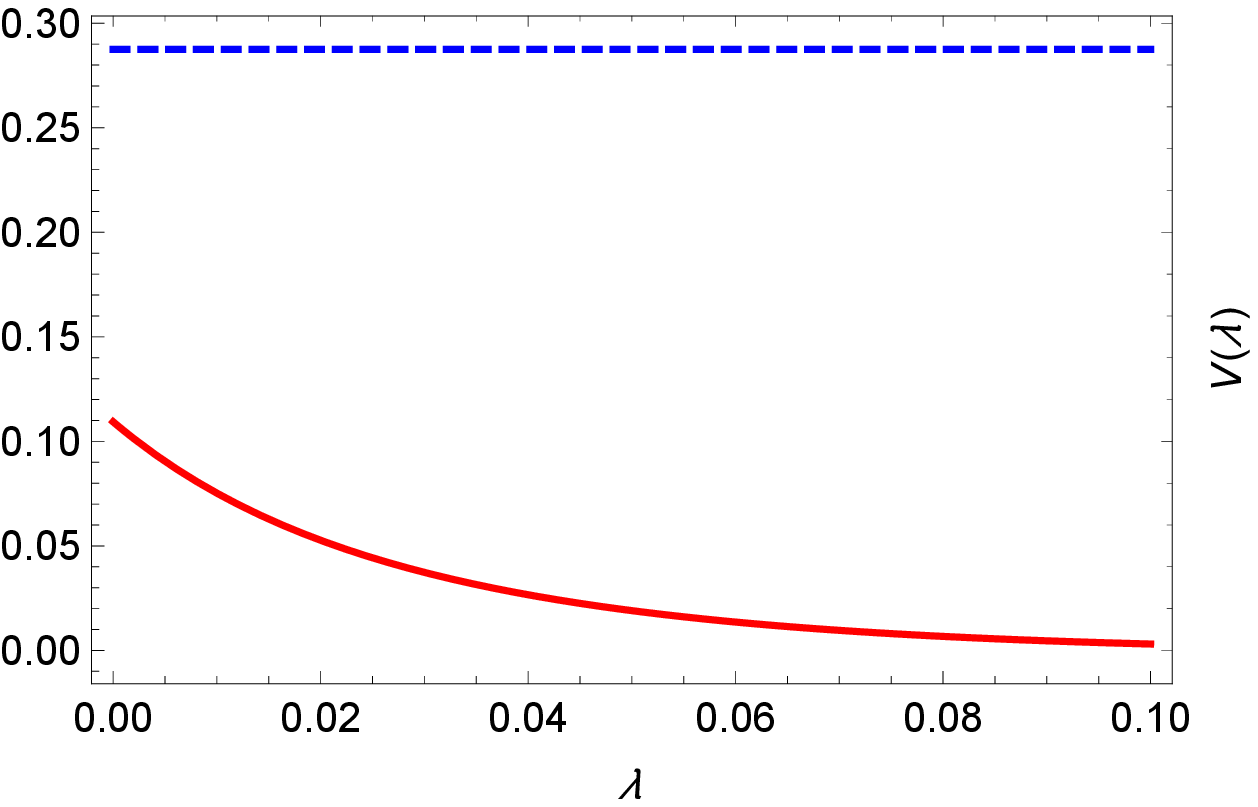}
\label{figSec6:V(lambda)_mupos}}
}
\caption{The dependence of the optimal close-out price and the value function on the recall intensity, for the constrained (solid red curve) and unconstrained (dashed blue curve) short selling problems, when the drift rate of the stock price is positive.}
\label{figSec6:lambda_mupos}
\end{figure}

Figure~\ref{figSec6:lambda_mupos}\subref{figSec6:V(lambda)_mupos} confirms that the value function for the unconstrained problem does not depend on the recall intensity, when the drift rate is positive, while the value function for the constrained problem is monotonically decreasing in that parameter. To quantify the loss of value due to the possibility of recall, we observe that the difference between the two value functions is approximately $\$0.29-\$0.11=\$0.18$ when $\lambda=0$, and increases to about $\$0.29-\$0.00=\$0.29$ when $\lambda=0.1$. Since recall is impossible in the former case, the loss of value is entirely due to the possibility of running out of collateral.

\subsection{The impact of a change in the collateral budget}
Figure~\ref{figSec6:c_muneg} illustrates the dependence of the optimal repurchase price and value function for the constrained and unconstrained short selling problems on collateral availability, if the drift rate of the stock price is negative. Figure~\ref{figSec6:c_muneg}\subref{figSec6:z*(c)_muneg} confirms that the collateral budget has no impact on the optimal close-out strategy for the unconstrained problem, but we observe a dramatic impact on the optimal close-out strategy for the constrained problem. In particular, it is optimal to close the constrained short sale out as soon as the stock price reaches $z_*=\$0.61$, if the margin calls arising from potential future stock price increases cannot be funded at all, but the optimal repurchase price declines rapidly as the collateral budget increases. The difference between the optimal repurchase prices for the two problems is around $\$0.61-\$0.36=\$0.25$ when $c=\$0.00$, but it decreases quickly and ultimately stabilises at $\$0.37-\$0.36=\$0.01$ when $c=\$100.00$. With that much collateral available, margin risk is unimportant, and the difference between the optimal repurchase prices for the two problems can be attributed purely to the possibility of early recall.

\begin{figure}
\centering
\mbox{
\subfigure[]{\includegraphics[scale=0.6]{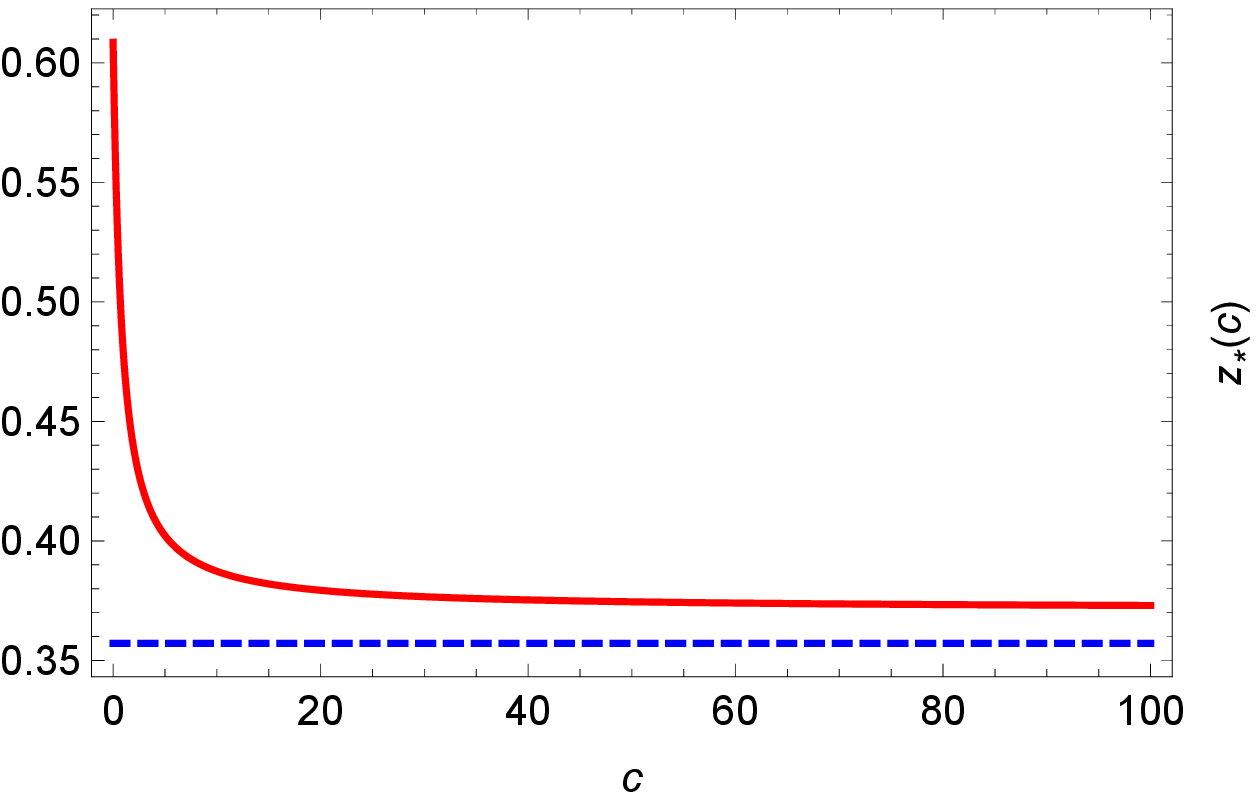}
\label{figSec6:z*(c)_muneg}}
\subfigure[]{\includegraphics[scale=0.6]{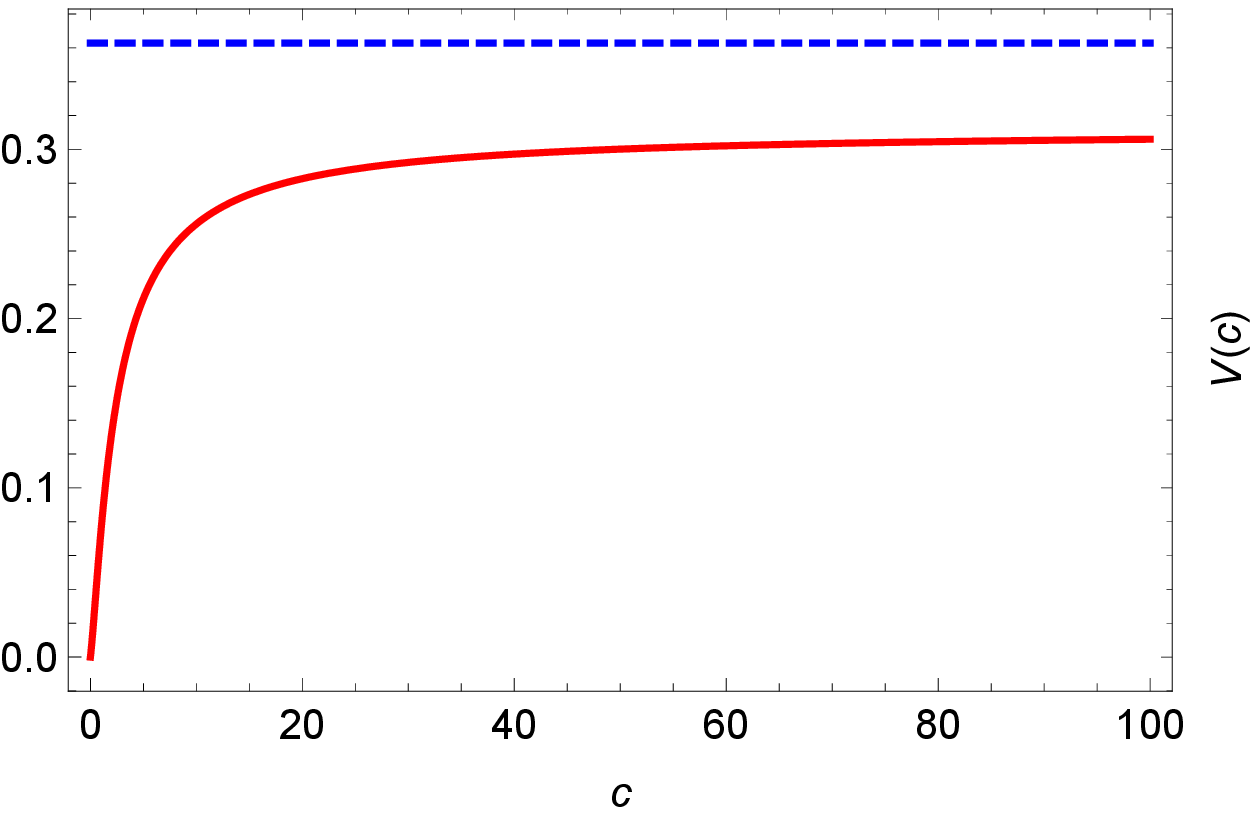}
\label{figSec6:V(c)_muneg}}
}
\caption{The dependence of the optimal close-out price and the value function on the collateral budget, for the constrained (solid red curve) and unconstrained (dashed blue curve) short selling problems, when the drift rate of the stock price is negative.}
\label{figSec6:c_muneg}
\end{figure}

As expected, Figure~\ref{figSec6:c_muneg}\subref{figSec6:V(c)_muneg} reveals that margin risk has no impact on the value of the unconstrained short sale, while its impact on the value of the constrained short sale is dramatic. In particular, the constrained short position is valueless when there is no collateral to fund margin calls, but its value increases rapidly as the collateral budget increases. We can quantify the impact of collateral on the loss of value due to the short selling constraints, by examining the difference between the constrained and unconstrained value functions for different levels of the collateral budget. For example, the difference between the value functions is approximately $\$0.36-\$0.00=\$0.36$ when $c=\$0.00$, but it decreases quickly and ultimately stabilises at $\$0.36-\$0.31=\$0.05$ when $c=\$100$. Since margin risk is essentially irrelevant in that case, the difference between the value functions for the two problems is exclusively due to the impact of recall risk.

In Figure~\ref{figSec6:c_mupos} we observe how the optimal repurchase price and value function for the constrained and unconstrained short selling problems depend on the collateral budget, in the case when the drift rate of the stock price is positive. Once again, the optimal repurchase price for the unconstrained problem is unaffected by collateral availability, as evident from Figure~\ref{figSec6:c_mupos}\subref{figSec6:z*(c)_mupos}, while the optimal repurchase price for the constrained problem is very sensitive to the collateral budget. If $c\leq\$4.00$, we see that immediate close-out ($z_*=\$1.00=x$) of the constrained short sale is optimal, since the collateral budget is insufficient to fund the margin calls that are likely to occur before the position can be closed-out profitably. However, as the amount of available collateral increases beyond $c=\$4.00$, the optimal repurchase price for the constrained short selling problem decreases rapidly, before levelling off. To assess the impact of collateral availability on the optimal close-out strategies for the constrained and unconstrained short selling problems, we observe that the difference between the optimal repurchase prices is initially $\$1.00-\$0.45=\$0.55$ when $c=\$0.00$. However, it quickly declines and achieves an ultimate value of around $\$0.60-\$0.45=\$0.15$ when $c=\$100.00$. With access to that much collateral, margin risk is virtually irrelevant, implying that the difference between the optimal close-out strategies is almost entirely attributable to the impact of recall risk.

\begin{figure}
\centering
\mbox{
\subfigure[]{\includegraphics[scale=0.6]{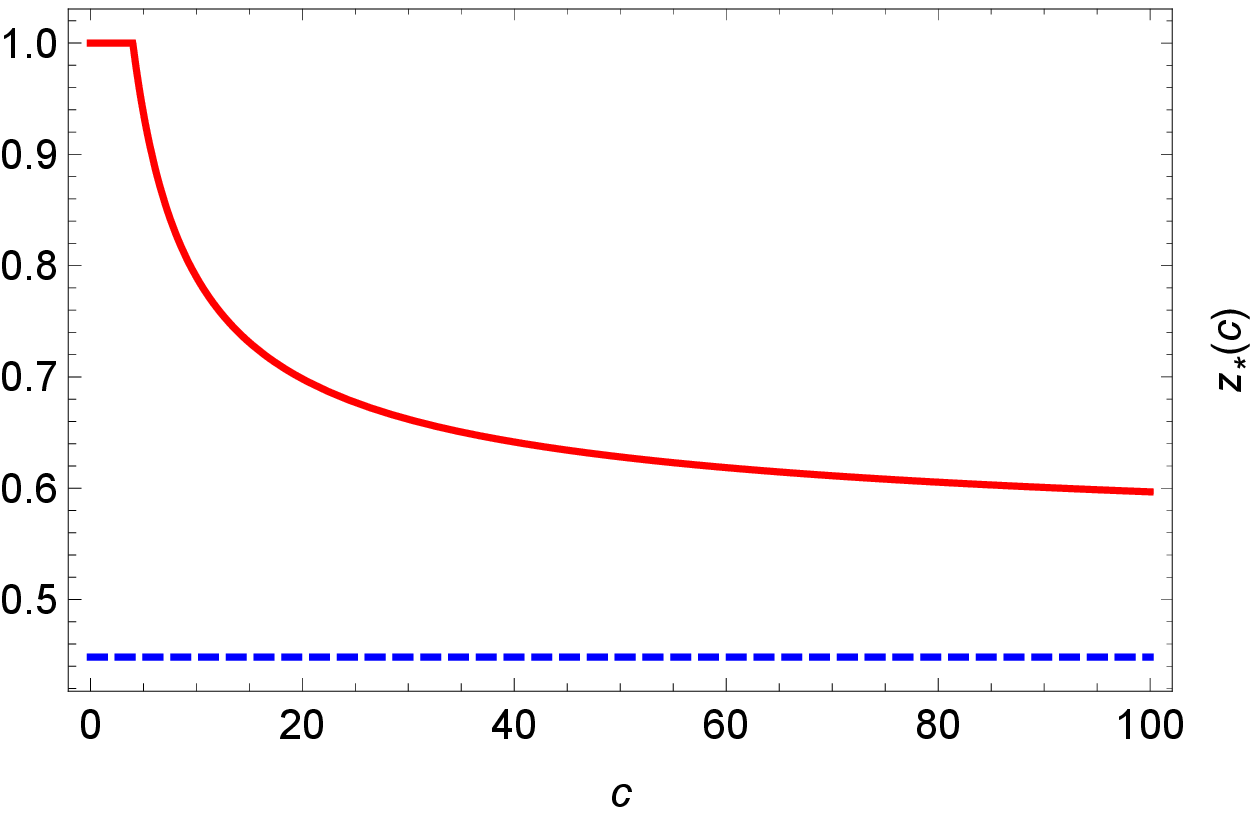}
\label{figSec6:z*(c)_mupos}}
\subfigure[]{\includegraphics[scale=0.6]{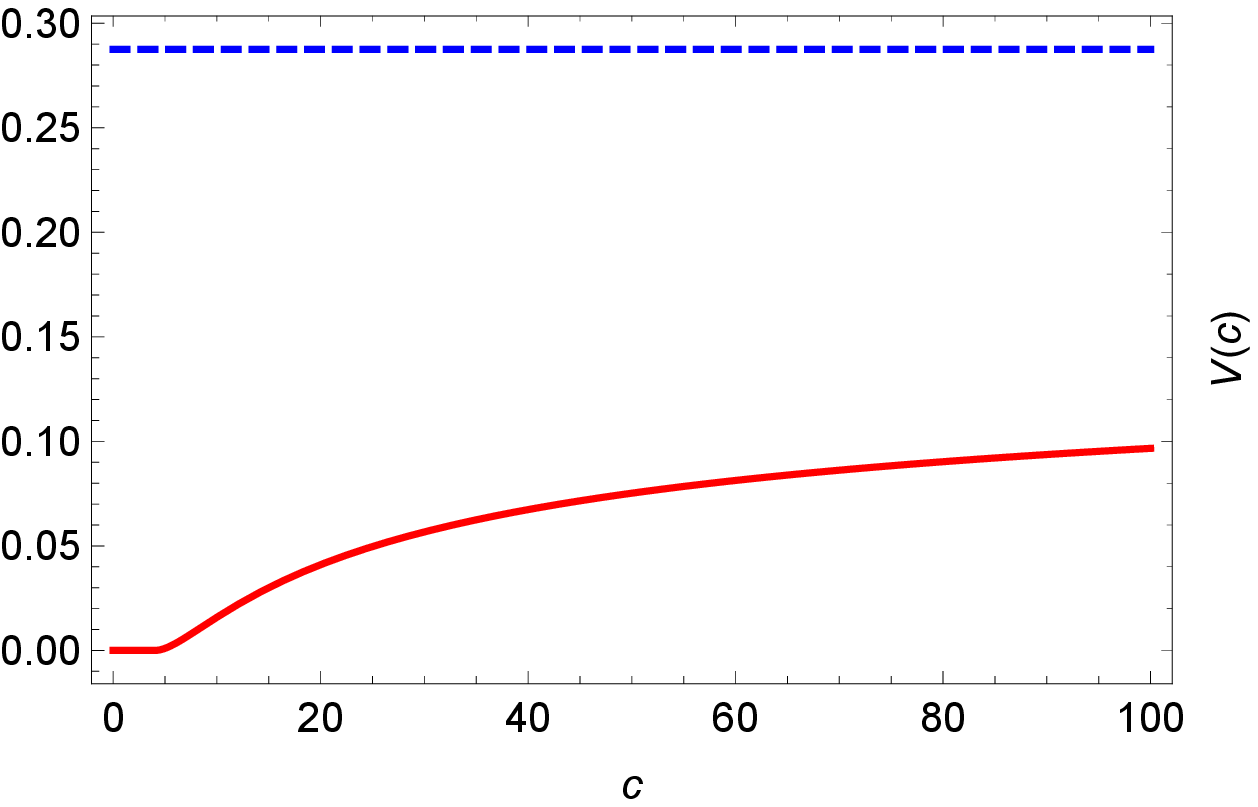}
\label{figSec6:V(c)_mupos}}
}
\caption{The dependence of the optimal close-out price and the value function on the collateral budget, for the constrained (solid red curve) and unconstrained (dashed blue curve) short selling problems, when the drift rate of the stock price is positive.}
\label{figSec6:c_mupos}
\end{figure}

Figure~\ref{figSec6:c_mupos}\subref{figSec6:V(c)_mupos} shows that the value function for the constrained short selling problem increases monotonically with respect to the collateral budget, when the drift rate of the stock price is positive, as it does in the negative drift scenario. We observe that the value of the constrained short sale is zero when $c\leq\$4.00$, since immediate close-out is optimal in that case, but it gradually increases as the amount of available collateral increases beyond $c=\$4.00$. By contrast, the value function for the unconstrained short selling problem is invariant with respect to changes in the collateral budget. To understand how the amount of available collateral affects the loss due to margin risk and recall risk in the positive drift case, we calculate the difference between the value functions for the constrained and unconstrained short selling problems, using different values for the collateral budget. For example, the difference between the value functions is $\$0.29-\$0.00=\$0.29$ when $c=\$0.00$, after which it slowly declines and reaches an ultimate value of around $\$0.29-\$0.10=\$0.19$ when $c=\$100.00$. In that case, the collateral budget is large enough for margin risk to be unimportant, which implies that loss in value between the constrained and unconstrained short sales is entirely due to the possibility of early recall. 

\bibliography{ProbFinBiblio}

\begin{thebibliography}{}

\bibitem[\protect\citeauthoryear{Borodin and Salminen}{Borodin and
  Salminen}{2002}]{BS02}
Borodin, A.~N. and P.~Salminen (2002).
\newblock {\em Handbook of Brownian Motion\/} (Second ed.).
\newblock Basel: Birkh{\"a}user.

\bibitem[\protect\citeauthoryear{Cai and Sun}{Cai and Sun}{2014}]{CS14}
Cai, N. and L.~Sun (2014).
\newblock Valuation of stock loans with jump risk.
\newblock {\em J. Econ. Dynam. Control\/}~{\em 40}, 213--241.

\bibitem[\protect\citeauthoryear{Campbell, Lo, and MacKinlay}{Campbell
  et~al.}{1997}]{CLM97}
Campbell, J.~Y., A.~W. Lo, and A.~C. MacKinlay (1997).
\newblock {\em The Econometrics of Financial Markets}.
\newblock Princeton: Princeton University Press.

\bibitem[\protect\citeauthoryear{Chung}{Chung}{2016}]{Chu16}
Chung, T.-K. (2016).
\newblock Optimal short-covering with regime switching.
\newblock In M.~Kijima, Y.~Muromachi, and T.~Shibata (Eds.), {\em Recent
  Advances in Financial Engineering 2014: Proceedings of the TMU Finance
  Workshop 2014}, pp.\  75--93. World Scientific.

\bibitem[\protect\citeauthoryear{Chung and Tanaka}{Chung and
  Tanaka}{2015}]{CT15}
Chung, T.-K. and K.~Tanaka (2015).
\newblock Optimal timing for short covering of an illiquid security.
\newblock {\em J. Oper. Res. Soc. Japan\/}~{\em 58\/}(2), 165--183.

\bibitem[\protect\citeauthoryear{Chuprinin and Ruf}{Chuprinin and
  Ruf}{2017}]{CR17}
Chuprinin, O. and T.~Ruf (2017).
\newblock Let the bear beware: {W}hat drives stock recalls.
\newblock Working paper.

\bibitem[\protect\citeauthoryear{Cohen, Haushalter, and Reed}{Cohen
  et~al.}{2004}]{CHR04}
Cohen, J., D.~Haushalter, and A.~V. Reed (2004).
\newblock Mechanics of the equity lending market.
\newblock In F.~J. Fabozzi (Ed.), {\em Short Selling: {S}trategies, Risks, and
  Rewards}, Chapter~2, pp.\  9--16. Hoboken, New Jersey: John Wiley and Sons.

\bibitem[\protect\citeauthoryear{Dai and Xu}{Dai and Xu}{2011}]{DX11}
Dai, M. and Z.~Q. Xu (2011).
\newblock Optimal redeeming strategy of stock loans with finite maturity.
\newblock {\em Math. Finance\/}~{\em 21\/}(4), 775--793.

\bibitem[\protect\citeauthoryear{D'Avolio}{D'Avolio}{2002}]{D'A02}
D'Avolio, G. (2002).
\newblock The market for borrowing stock.
\newblock {\em J. Finan. Econ.\/}~{\em 66\/}(2--3), 271--306.

\bibitem[\protect\citeauthoryear{Derman and Kani}{Derman and Kani}{1996}]{DK96}
Derman, E. and I.~Kani (1996).
\newblock The ins and outs of barrier options: {P}art~1.
\newblock {\em Derivatives Quart.\/}~{\em 3\/}(2), 55--67.

\bibitem[\protect\citeauthoryear{Ekstr{\"o}m}{Ekstr{\"o}m}{2004}]{Eks04}
Ekstr{\"o}m, E. (2004).
\newblock Properties of {A}merican option prices.
\newblock {\em Stochastic Process. Appl.\/}~{\em 114\/}(2), 265--278.

\bibitem[\protect\citeauthoryear{Ekstr{\"o}m and Lu}{Ekstr{\"o}m and
  Lu}{2011}]{EL11}
Ekstr{\"o}m, E. and B.~Lu (2011).
\newblock Optimal selling of an asset under incomplete information.
\newblock {\em Int. J. Stoch. Anal.\/}~{\em 2011}, 1--17.
\newblock Article ID 543590.

\bibitem[\protect\citeauthoryear{Ekstr{\"o}m and Wanntorp}{Ekstr{\"o}m and
  Wanntorp}{2008}]{EW08}
Ekstr{\"o}m, E. and H.~Wanntorp (2008).
\newblock Margin call stock loans.
\newblock Working paper.

\bibitem[\protect\citeauthoryear{Engelberg, Reed, and Ringgenberg}{Engelberg
  et~al.}{2018}]{ERR18}
Engelberg, J.~E., A.~V. Reed, and M.~C. Ringgenberg (2018).
\newblock Short selling risk.
\newblock {\em J. Finance\/}~{\em 73\/}(2), 755--786.

\bibitem[\protect\citeauthoryear{Fung and Draper}{Fung and Draper}{1999}]{FD99}
Fung, J. K.~W. and P.~Draper (1999).
\newblock Mispricing of index futures contracts and short sales constraints.
\newblock {\em J. Futures Markets\/}~{\em 19\/}(6), 695--715.

\bibitem[\protect\citeauthoryear{Grasselli and G\'omez}{Grasselli and
  G\'omez}{2013}]{GG13}
Grasselli, M.~R. and C.~G\'omez (2013).
\newblock Stock loans in incomplete markets.
\newblock {\em Appl. Math. Finance\/}~{\em 20\/}(2), 118--136.

\bibitem[\protect\citeauthoryear{Karatzas and Shreve}{Karatzas and
  Shreve}{1991}]{KS91}
Karatzas, I. and S.~E. Shreve (1991).
\newblock {\em Brownian Motion and Stochastic Calculus\/} (Second ed.).
\newblock New York: Springer.

\bibitem[\protect\citeauthoryear{Lamont and Thaler}{Lamont and
  Thaler}{2003}]{LT03}
Lamont, O.~A. and R.~H. Thaler (2003).
\newblock Can the market add and subtract? {M}ispricing in tech stock
  carve-outs.
\newblock {\em J. Pol. Econ.\/}~{\em 111\/}(2), 227--268.

\bibitem[\protect\citeauthoryear{Liang, Wu, and Jiang}{Liang
  et~al.}{2010}]{LWJ10}
Liang, Z., W.~Wu, and S.~Jiang (2010).
\newblock Stock loan with automatic termination clause, cap and margin.
\newblock {\em Comput. Math. Appl.\/}~{\em 60\/}(12), 3160--3176.

\bibitem[\protect\citeauthoryear{Liu and Xu}{Liu and Xu}{2010}]{LX10}
Liu, G. and Y.~Xu (2010).
\newblock Capped stock loans.
\newblock {\em Comput. Math. Appl.\/}~{\em 59\/}(11), 3548--3558.

\bibitem[\protect\citeauthoryear{Liu and Longstaff}{Liu and
  Longstaff}{2004}]{LL04}
Liu, J. and F.~A. Longstaff (2004).
\newblock Losing money on arbitrage: {O}ptimal dynamic portfolio choice in
  markets with arbitrage opportunities.
\newblock {\em Rev. Financ. Stud.\/}~{\em 17\/}(3), 611--641.

\bibitem[\protect\citeauthoryear{Mitchell, Pulvino, and Strafford}{Mitchell
  et~al.}{2002}]{MPS02}
Mitchell, M., T.~Pulvino, and E.~Strafford (2002).
\newblock Limited arbitrage in equity markets.
\newblock {\em J. Finance\/}~{\em 57\/}(2), 551--584.

\bibitem[\protect\citeauthoryear{Ofek, Richardson, and Whitelaw}{Ofek
  et~al.}{2004}]{ORW04}
Ofek, E., M.~Richardson, and R.~F. Whitelaw (2004).
\newblock Limited arbitrage and short sales restrictions: {E}vidence from the
  options markets.
\newblock {\em J. Finan. Econ.\/}~{\em 74\/}(2), 305--342.

\bibitem[\protect\citeauthoryear{Peskir}{Peskir}{2005}]{Pes05a}
Peskir, G. (2005).
\newblock A change-of-variable formula with local time on curves.
\newblock {\em J. Theoret. Probab.\/}~{\em 18\/}(3), 499--535.

\bibitem[\protect\citeauthoryear{Peskir and Shiryaev}{Peskir and
  Shiryaev}{2006}]{PS06}
Peskir, G. and A.~Shiryaev (2006).
\newblock {\em Optimal Stopping and Free-Boundary Problems}.
\newblock Lectures in Mathematics ETH Z{\"u}rich. Basel: Birkh{\"a}user.

\bibitem[\protect\citeauthoryear{Protter and Weinberger}{Protter and
  Weinberger}{1967}]{PW67}
Protter, M.~H. and H.~F. Weinberger (1967).
\newblock {\em Maximum Principles in Differential Equations}.
\newblock Englewood Cliffs, New Jersey: Prentice-Hall.

\bibitem[\protect\citeauthoryear{Reed}{Reed}{2013}]{Ree13}
Reed, A.~V. (2013).
\newblock Short selling.
\newblock {\em Annu. Rev. Financ. Econ.\/}~{\em 5}, 245--258.

\bibitem[\protect\citeauthoryear{Shleifer and Vishny}{Shleifer and
  Vishny}{1997}]{SV97}
Shleifer, A. and R.~W. Vishny (1997).
\newblock The limits of arbitrage.
\newblock {\em J. Finance\/}~{\em 52\/}(1), 35--55.

\bibitem[\protect\citeauthoryear{Siu, Yam, and Zhou}{Siu et~al.}{2016}]{SYZ16}
Siu, C.~C., S.~C.~P. Yam, and W.~Zhou (2016).
\newblock Callable stock loans.
\newblock In M.~Kijima, Y.~Muromachi, and T.~Shibata (Eds.), {\em Recent
  Advances in Financial Engineering 2014: Proceedings of the TMU Finance
  Workshop 2014}, pp.\  161--197. World Scientific.

\bibitem[\protect\citeauthoryear{{Vigo-Aguiar}, {Ardanuy-Albajar}, and
  Wade}{{Vigo-Aguiar} et~al.}{2005}]{VAW05}
{Vigo-Aguiar}, J., R.~{Ardanuy-Albajar}, and B.~A. Wade (2005).
\newblock Stochastic methods for {D}irichlet problems.
\newblock {\em J. Math. Model. Algorithms\/}~{\em 4\/}(3), 317--330.

\bibitem[\protect\citeauthoryear{Wong and Wong}{Wong and Wong}{2013}]{WW13}
Wong, T.~W. and H.~Y. Wong (2013).
\newblock Valuation of stock loans using exponential phase-type l\'evy models.
\newblock {\em Appl. Math. Comput.\/}~{\em 222}, 275--289.

\bibitem[\protect\citeauthoryear{Xia and Zhou}{Xia and Zhou}{2007}]{XZ07}
Xia, J. and X.~Y. Zhou (2007).
\newblock Stock loans.
\newblock {\em Math. Finance\/}~{\em 17\/}(2), 307--317.

\bibitem[\protect\citeauthoryear{Xu and Yi}{Xu and Yi}{2019}]{XY19}
Xu, Z.~Q. and F.~Yi (2019).
\newblock Optimal redeeming strategy of stock loans under drift uncertainty.
\newblock Working paper.

\bibitem[\protect\citeauthoryear{Yan, Qinb, and Li}{Yan et~al.}{2019}]{YXL19}
Yan, L., X.~Qinb, and H.~Li (2019).
\newblock Finite-maturity stock loans under the constant elasticity of variance
  model.
\newblock {\em Appl. Econ. Letters\/}~{\em 26\/}(4), 316--320.

\bibitem[\protect\citeauthoryear{Zhang and Zhou}{Zhang and Zhou}{2009}]{ZZ09}
Zhang, Q. and X.~Y. Zhou (2009).
\newblock Valuation of stock loans with regime switching.
\newblock {\em SIAM J. Control Optim.\/}~{\em 48\/}(3), 1229--1250.

\end{thebibliography}
\bibliographystyle{chicago}
\end{document}